\documentclass[preprint, 12pt]{elsarticle}
\pdfoutput=1

\usepackage[margin = 2.5cm]{geometry}
\usepackage[utf8]{inputenc}
\usepackage[english]{babel}
\usepackage{hyperref}

\usepackage{graphicx,color}
\usepackage{caption} 
\usepackage{subcaption}

\usepackage{tikz}
\usetikzlibrary {positioning}
\usetikzlibrary{bayesnet}

\usepackage[ruled,linesnumbered]{algorithm2e}
\usepackage{algorithmic}
\usepackage{amsmath,latexsym,amssymb}
\usepackage{amsthm}
\usepackage{color}
\usepackage{soul} 
\usepackage{blkarray}
\usepackage{enumitem}
\usepackage{epstopdf}
\usepackage{cleveref}

\SetKwInput{KwData}{Input}
\SetKwInput{KwResult}{Output}
\newcommand\numberthis{\addtocounter{equation}{1}\tag{\theequation}}

\usepackage{amsfonts,bbm}



\newcommand{\Td}{\mathrm{d}}


\newcommand{\TI}{\mathrm{I}}

\newcommand{\TP}{\mathrm{P}}


\newcommand{\Bv}{\mathbf{v}}
\newcommand{\Bw}{\mathbf{w}}
\newcommand{\Bx}{\mathbf{x}}

\newcommand{\Bz}{\mathbf{z}}


\newcommand{\BI}{\mathbf{I}}

\newcommand{\BX}{\mathbf{X}}
\newcommand{\BY}{\mathbf{Y}}
\newcommand{\BZ}{\mathbf{Z}}


\newcommand{\BIg}{{\boldsymbol{g}}}



\newcommand{\CC}{\mathcal{C}}

\newcommand{\CF}{\mathcal{F}}

\newcommand{\CH}{\mathcal{H}}

\newcommand{\CN}{\mathcal{N}}

\newcommand{\CR}{\mathcal{R}}

\newcommand{\CT}{\mathcal{T}}

\newcommand{\CX}{\mathcal{X}}

















\newcommand{\BSigma}{{\boldsymbol{\Sigma}}}



\newcommand{\Bepsilon}{{\boldsymbol{\epsilon}}}

\newcommand{\Bmu}{{\boldsymbol{\mu}}}

\newcommand{\NORMAL}{\CN} 
\newcommand{\GP}{\mathcal{GP}} 
\newcommand{\R}{\mathbb{R}} 
\newcommand{\N}{\mathbb{N}} 
\newcommand{\veczero}{\mathbf{0}} 
\newcommand{\INDI}{\mathbbm{1}} 
\newcommand{\TRANSP}{\mathsf{T}} 
\newcommand{\DIFFX}[1]{\,\Td{#1}} 
\newcommand{\PROB}{\mathbb{P}} 
\newcommand{\EXP}{\mathbb{E}} 
\newcommand{\COV}{\mathrm{Cov}} 

\newcommand{\rom}[1]{%
  \textup{\uppercase\expandafter{\romannumeral#1}}%
}

\DeclareMathOperator*{\argmin}{arg\,min} 
\DeclareMathOperator*{\argmax}{arg\,max} 

\newtheorem{theorem}{\noindent \textbf{Theorem}}
\newtheorem{lemma}[theorem]{\noindent \textbf{Lemma}}
\newtheorem{definition}[theorem]{\noindent \textbf{Definition}}
\newtheorem{proposition}[theorem]{\noindent \textbf{Proposition}}

\newtheorem{corollary}[theorem]{Corollary}
\newtheorem{remark}[theorem]{Remark}
\newtheorem{assumption}[theorem]{Assumption}

\begin{document}
\setstcolor{red}

\begin{frontmatter}
\title{Binary Spatial Random Field Reconstruction from Non-Gaussian Inhomogeneous Time-series Observations}
\author[1]{Shunan Sheng}
\ead{ss6574@columbia.edu}
\author[2]{Qikun Xiang}
\ead{qikun001@e.ntu.edu.sg}
\author[3]{Ido Nevat}
\ead{ido.nevat@tum-create.edu.sg}
\author[2]{Ariel Neufeld}
\ead{ariel.neufeld@ntu.edu.sg}
\affiliation[1]{organization={Department of Statistics, Columbia University},
            addressline={ 1255 Amsterdam Avenue}, 
            city={New York},
            postcode={ NY 10027}, 
            country={USA}}
 
\affiliation[2]{organization={Division of Mathematical Sciences, Nanyang Technological University},
            addressline={  21 Nanyang Link}, 
            city={Singapore},
            postcode={637371}, 
            country={Singapore}}
\affiliation[3]{organization={TUMCREATE},
            addressline={1 Create Way, \#10-02 CREATE Tower}, 
            city={Singapore},
            postcode={138602}, 
            country={Singapore}}

\begin{abstract}
We develop a new model for \textit{spatial random field reconstruction} of a binary-valued spatial phenomenon. In our model, sensors are deployed in a wireless sensor network across a large geographical region. Each sensor measures a non-Gaussian inhomogeneous temporal process which depends on the spatial phenomenon. Two types of sensors are employed: one collects point observations at specific time points, while the other collects integral observations over time intervals. Subsequently, the sensors transmit these time-series observations to a Fusion Center (FC), and the FC infers the spatial phenomenon from these observations. We show that the resulting posterior predictive distribution is intractable and develop a tractable two-step procedure to perform inference. Firstly, we develop algorithms to perform approximate Likelihood Ratio Tests on the time-series observations, compressing them to a single bit for both point sensors and integral sensors. Secondly, once the compressed observations are transmitted to the FC, we utilize a Spatial Best Linear Unbiased Estimator (S-BLUE) to reconstruct the binary spatial random field at any desired spatial location. 
The performance of the proposed approach is studied using simulation. We further illustrate the effectiveness of our method using a weather dataset from the National Environment Agency (NEA) of Singapore with fields including temperature and relative humidity.
\end{abstract}
\begin{keyword}
Binary spatial random field reconstruction, Sensor Networks, Warped Gaussian Process, Likelihood Ratio Test (LRT), Spatial Best Linear Unbiased Estimator (S-BLUE).
\end{keyword}

\end{frontmatter}

\section{Introduction}
Wireless sensor networks (WSNs) have captivated substantial attention due to its wide applications in environmental monitoring \cite{HART2006}, weather forecasts \cite{Rajasegarar2014,Kottas2012,French2013}, surveillance \cite{Sohraby2006}, building monitoring \cite{Chintalapudi2006}, and automation \cite{Akyildiz2002}. Recent studies focus on the estimation of a single point source, like source localization \cite{Luo2005,Xia2008,Chiu2011,Masazade2010} and source detection \cite{Cohen2011,Msechu2012,Zhang2014,Ido2014,Luo2015}, which generally assume independence of the observations. 
In this paper, we consider a WSN consisting of spatially distributed sensors with limited energy and communication bandwidth. The sensors monitor non-Gaussian temporal processes with desired features such as precipitation, humidity, temperature, concentration of substance, etc., that are dependent on a binary spatial random field \cite{Gelfand2019}. After the sensors transmit these observations to a Fusion Center (FC) \cite{Fazel2012}, the FC then reconstructs the binary spatial random field at spatial locations where no sensor is placed, based on which further decisions can be made.

Binary spatial random fields are commonly used to model ecological phenomena that take binary values, such as defoliation \cite{Heagerty1998} and pest outbreak \cite{Zhu2005}.  Some spatial phenomena such as volcanic activity \cite{FALKNER2007} and vision perception \cite{LEE1999} may be hard to observe directly. Therefore, in our model, we analyze temporal observations generated based on the values of the binary spatial random field and seek to reconstruct the field using the temporal observations as proxy. The proposed dependence structure has its application in vision research \cite{LEE1999} where two types of neuron are activated with respect to low-pass and high-pass components in graphs. It can also be adopted in environmental monitoring problems to model spatial-temporal phenomena where the spatial phenomenon is not directly observable and only the temporal processes are observed.

In many cases, these spatial and temporal processes are modeled using Gaussian Processes (GPs) \cite{Ido2015,Zhang2018,Xiang2019}. Though GPs lead to a concise and elegant probabilistic framework that allows further nonparametric regression and classification as shown in \cite{C.E.Rasmussen2006}, their practicality is much restricted due to their normality and exponentially decaying tails, which are inappropriate for modeling categorical \cite{Gelfand2019} as well as long-tailed observations \cite{Embrechts1999, xiang2021}. 
To ameliorate these limitations, non-linear distortions of a GP, called Warped Gaussian Processes (WGPs) \cite{Miguel2012, Rios2019, Snelson2003, Vinokur2021}, are often adopted as an alternative. The warping function in a WGP for distorting a GP can take any parametric form, like the sum of  $\mathrm{tanh}$ functions \cite{Snelson2003} or Tukey's family of transformations used in \cite{Gareth2021} for environmental monitoring. Moreover, when the warping function is strictly increasing and continuous, the marginal likelihood of the WGP is analytically tractable \cite[Section~3]{Snelson2003}, which makes the WGP superior to other non-parametric models. 

The main purpose of this paper is to develop a low complexity algorithm that can reconstruct a binary spatial random field given transmitted time-series data from sensors. The binary spatial random field is modeled by a WGP with the warping function being an indicator function.  Conditional on the values of the binary spatial random field, the temporal processes are modeled by WGPs following specific non-Gaussian marginal distributions. Meanwhile, two types of sensors are deployed to observe either point or integral observations. The Point sensors defined later in \ref{Temporal: PointObservation}  take point observations that are noisy realizations of the temporal processes at some specific time points while the Integral sensors defined later in \ref{Temporal: IntegralObservation} take integral observations that are the averages of realizations of the temporal processes over time intervals with additive noise. In practice, Point sensors are commonly used to measure real-time features such as daily temperature, whereas Integral sensors are used to track cumulative features such as total daily precipitation \cite{allart2015disaggregating}, computed tomography (CT) scans \cite{Tanskanen2020}, and areal data \cite{tanaka2019spatially}. Therefore, both types of sensors can be utilized to monitor various temporal phenomena that stem from the same spatial phenomenon and enhance the performance of spatial field reconstruction \cite{Sahu2010}.

A number of studies have been devoted to developing tractable methods to reconstruct latent spatial field through Gaussian Process-based data. In those studies, authors commonly adopt the hierarchical Bayesian framework and resort to Markov Chain Monte Carlo (MCMC) \cite{Sahu2010} (in particular, Hamiltonian Monte Carlo \cite{Wang2019}) to infer the posterior predictive distribution. Alternatively, the posterior predictive distribution can be derived analytically when conjugacy is assumed in the model \cite{Ido2015}. However, MCMC-based methods are not suitable for accommodating the continuous online inflow of sensor observations due to their sophisticated and time-consuming nature, whereas conjugacy is not present in the set-up using WGPs. 

The contributions of this paper are four-fold:
\begin{enumerate}
\item We propose a novel model to represent the hierarchical spatial-temporal physical phenomenon using WGPs such that the temporal processes may follow arbitrary distributions that appear in real applications.
\item We develop the Warped Gaussian Process Likelihood Ratio Test (WGPLRT) and the Neighborhood-density-based Likelihood Ratio Test (NLRT) tailored to approximately performing Likelihood Ratio Tests on time-series data for sensors collecting point or integral observations, respectively.
\item We derive the Spatial Best Linear Unbiased Estimator (S-BLUE) for aggregating outputs of Likelihood Ratio Tests and reconstructing the spatial phenomenon, which is computationally efficient.
\item We perform both synthetic data experiments and real-world experiments to validate our model and algorithms. In the real-world experiments, we use a weather dataset from the National Environment Agency (NEA) of Singapore that includes fields such as temperature and relative humidity to support the proposed approach.
\end{enumerate} 

The rest of the paper is structured as follows. We introduce the definitions of GPs and WGPs and present the system model in Section \ref{Sec:Definition}.
We show that the posterior predictive distribution in the proposed problem is analytically intractable and motivate a two-step procedure to reconstruct the binary spatial random field in Section \ref{Sec:Posterior}.  
In Section \ref{Sec:LRTs}, we develop the Warped Gaussian Process Likelihood Ratio Test (WGPLRT) and the Neighborhood-density-based Likelihood Ratio Test (NLRT) for inferring temporal processes. 
Section \ref{Sec:S-BLUE} introduces the Spatial Best Linear Unbiased Estimator (S-BLUE) and its properties. In Sections \ref{Sec:Experiments Synthetic} and \ref{Sec:Experiments Semisynthetic}, we showcase the proposed model and algorithm by performing experiments using synthetic and real-world datasets\footnote{Codes implemented in MATLAB can be found on GitHub: \url{https://github.com/ShunanSheng/WarpedGaussianProcesses}.}, respectively. Finally, we conclude the paper in Section \ref{Sec:Conclusion}.

\section{Definitions \& System Model }\label{Sec:Definition}
In this section, we introduce the definitions of Gaussian Processes (GPs) and Warped Gaussian Processes (WGPs) and present the system model. Throughout this paper, all random variables are defined on a probability space $(\Omega,\CF,\PROB)$. 
Let us first state the definition of a GP\footnote{We use Gaussian Process and Gaussian random field interchangeably.}.
\begin{definition}[Gaussian Process {(see\textit{,} e.g., \cite[Definition~2.1]{C.E.Rasmussen2006})}] 
\label{DefGP}
Let $\mathcal{X} \subseteq \R^d$ and let $f:\Omega \times \CX\to \R$ denote a stochastic process parametrized by $\Bx \in \CX$. Then, $f$ is a Gaussian Process (GP) with the mean function $\mu:\CX\to\R$ and the covariance function $\CC:\CX\times\CX\to\R$, i.e., $f
\sim\GP\left(\mu \left(\cdot\right),\CC\left(\cdot,\cdot\right)\right)$, if all its finite dimensional distributions are Gaussian, that is, for any $m \in \mathbb{N}$ and $\Bx_{1:m} := (\Bx_1,\cdots, \Bx_m) \in \CX^m$, the random variables $\left(f\left(\Bx_1\right),\ldots, f\left(\Bx_m\right)\right)^\TRANSP$ are jointly normally distributed with mean $\mu(\Bx_{1:m}):=(\mu(\Bx_1),\cdots, \mu(\Bx_m))^\TRANSP \in \R^m$ and covariance matrix $\CC(\Bx_{1:m},\Bx_{1:m}) \in \R^{m\times m}$ where $\left(\CC(\Bx_{1:m},\Bx_{1:m})\right)_{i,j}$ $:=\CC(\Bx_i,\Bx_j)$ for $1\leq i,j\leq m$.
\end{definition}

We can therefore characterize a GP by the following class of random functions:
\begin{align*}
\begin{split}
\mathfrak{F} :=
\{
& f : \Omega\times\mathcal{X} \to  \mathbb{R}\;
\text{s.t.}\; f
\sim\GP\left(\mu \left(\cdot\right),\CC\left(\cdot,\cdot\right)\right),\;\\
&\;\mathrm{with}\; \mu:\CX \to  \mathbb{R},\; \Bx\mapsto \EXP\left[f(\Bx)\right],\\
&\; \CC:\mathcal{X} \times \mathcal{X} \to  \mathbb{R}
,\; (\Bx,\Bx')\mapsto \EXP\left[\left(f(\Bx)-\mu(\Bx)\right)
\left(f(\Bx')-\mu(\Bx')\right)
\right]\}.
\end{split}
\end{align*}
Subsequently, we define a Warped Gaussian Process (WGP) as the point-wise transformation of a GP, as detailed below.

\begin{definition}[Warped Gaussian Process]\label{Definition:WGP}
Let $z: \Omega \times \CX\to \R$ be a stochastic process indexed by $\Bx\in\CX\subseteq \R^d$.
We call $z$ a Warped Gaussian Process (WGP) if it is the point-wise transformation of a GP $f:\Omega\times\CX\to\R$ under a Borel-measurable warping function $W: \R \to \R$, that is,
\begin{align*}\label{WGP}
\begin{split}
z(\Bx) = W\Big(f(\Bx)\Big) \qquad \text{for all } \Bx\in \CX.
\end{split}
\end{align*}
\end{definition}

A candidate for the warping function is $W=F^{-1}\circ \Phi$, where $F$ is the cumulative distribution function (CDF) of  a random variable, $F^{-1}(u) :=\inf \left\{x\in \R : F(x)\geq u\right\}$ for $u\in [0,1]$ is the Generalized Inverse of $F$, and $\Phi$ is the CDF of the standard normal distribution $\CN(0,1)$. Consequently, if $f\sim \GP(0,\CC(\cdot,\cdot))$ with $\CC(\Bx,\Bx)=1$ for all $\Bx\in \CX$, then the CDF of $z(\Bx)$ is $F$ for all $\Bx\in\CX$.

Having formally defined the semi-parametric class of WGP models, we proceed with presenting our system model. The system model consists of the two following parts:
\begin{enumerate}
    \item The physical phenomenon: a collection of spatial-temporal random processes including a binary spatial random field and a collection of temporal processes whose characteristics are based on the local values of the binary spatial field\footnote{We use binary spatial random field and binary spatial field interchangeably.} (see \ref{Temporal:LatentProcess} and \ref{Temporal:WarpedProcess} below).
    \item  The sensor network: a wireless sensor network is deployed to observe the local temporal processes at specific spatial locations. The wireless sensor network consists of two types of sensors: the first collects point observations of the temporal process at specific time points (see \ref{Temporal: PointObservation} below); the second collects integral observations of the temporal process over time intervals (see \ref{Temporal: IntegralObservation} below).
\end{enumerate}
\pagebreak
We now present the system model: 
\begin{enumerate}[label = (BSF\arabic*), leftmargin = 40 pt]
\item[] \noindent\rule{5.05cm}{0.5pt}
Binary spatial random field
\noindent\rule{5.1cm}{0.5pt} 
\item Consider a latent spatial random field $g:\Omega\times \CX\to \R$ defined over $\CX\subset \R^2$, which is modeled as a GP with mean function  $\mu(\cdot)$ and covariance function $\CC(\cdot,\cdot)$, that is
\begin{equation}
    g\sim \GP\Big(\mu(\cdot),\CC(\cdot,\cdot)\Big).
\end{equation}

\item The binary spatial random field $y:\Omega \times \CX \to \{0,1\}$ is defined to be the point-wise transformation of the latent spatial random field $g$ such that 
\begin{equation}\label{Spatial:definition}
    y(\Bx):=\INDI_{\{g(\Bx)\geq c\}},
\end{equation}
where $c\in\R$ is a constant threshold. 
\end{enumerate}

\begin{enumerate}[label = (TP\arabic*), leftmargin = 40 pt]
\item[] \noindent\rule{5.7cm}{0.5pt}
Temporal processes
\noindent\rule{5.8cm}{0.5pt}
\item \label{Temporal:LatentProcess}
At each spatial location $\Bx\in \CX$, let $f(\cdot\;;\Bx):\Omega\times [0,T]\to\R$ be a temporal latent GP for some $T > 0$ with characteristics depending on the value of $y(\Bx)$ such that
 \begin{equation}
      f(\cdot;\Bx)\sim
      \begin{cases}
      \GP(0,\CC_0(\cdot,\cdot)),\quad \text{if}\quad y(\Bx)=0,\;\;\;\\
\GP(0,\CC_1(\cdot, \cdot)),\quad \text{if}\quad y(\Bx)=1,\;\;\; 
\end{cases}
 \end{equation}
 where $\CC_i:[0,T]\times [0,T]\to \R$, for $i=0,1$, are the covariance functions of the respective temporal processes with $\CC_i(t,t)=1$ for all $t\in[0,T]$. Given any finite collection of spatial locations, such as $\Bx_1,\cdots,\Bx_n \in \CX$ for $n\in \N$, the random variables $f(\cdot\,;\Bx_1),\cdots, f(\cdot\,;\Bx_n)$ are assumed to be independent conditional on $(g(\Bx_1),\cdots, g(\Bx_n))$.

\item \label{Temporal:WarpedProcess}
The temporal process at $\Bx$ is defined to be a WGP $\tilde z(\cdot;\Bx): \Omega \times [0,T] \to \R$ via a point-wise transformation of $f$, depending on the value of $y(\Bx)$ such that
\begin{equation}
    \tilde z(t;\Bx):=\begin{cases}
    W_0(f(t;\Bx)),\quad \text{if}\quad y(\Bx)=0,\;\;\;\\
    W_1(f(t;\Bx)),\quad \text{if}\quad y(\Bx)=1, \;\;\;
    \end{cases}
\end{equation}
where $W_i=F_i^{-1}\circ \Phi$, for $i=0,1$. Each $F_i$ is the cumulative distribution function (CDF) of  a random variable, $F_i^{-1}$ is the Generalized Inverse of $F_i$, and $\Phi$ is the CDF of the standard normal distribution $\CN(0,1)$. 
\end{enumerate}

\begin{enumerate}[label = (SN\arabic*), leftmargin = 40 pt]
\item[] \noindent\rule{6.15cm}{0.5pt}
Sensor network
\noindent\rule{6.15cm}{0.5pt}
\item \label{Temporal: independence}
Let $N\in\N$ be the total number of sensors that are deployed over the 2-dimensional space $\CX\subseteq\R^2$ to make observations over the time period $[0,T]$. 
There are two types of sensors that are referred to as Point sensors (abbreviated to P-sensors) and Integral sensors (abbreviated to I-sensors). We assume that there are $N^{\TP}$ P-sensors deployed at $(\Bx_n^{\TP})_{n=1:N^{\TP}}$ and $N^{\TI}$ I-sensors deployed at $(\Bx_n^{\TI})_{n=1:N^{\TI}}$, where $N^{\TP}, N^{\TI}\in \N$ and $N^{\TP}+N^{\TI}=N$. Note that $\big\{f(\cdot\;;\Bx^{\TP}_n)\big\}_{n=1:N^{\TP}}$, $\big\{f(\cdot\;;\Bx^{\TI}_n)\big\}_{n=1:N^{\TI}}$ are independent conditional on $\left(g(\Bx^{\TP}_1),\ldots,g(\Bx^{\TP}_{N^{\TP}}),g(\Bx^{\TI}_1),\ldots,g(\Bx^{\TI}_{N^{\TI}})\right)$.

\item \label{Temporal: PointObservation}
\underline{Point sensors:} 
Point sensors collect noisy observations of the temporal process at the time instants $(t_k)_{k=1}^M\subset[0,T]$ for some $M\in \N$. At each time instant $t_k$, the $n$-th P-sensor makes a noisy observation, which is given by
\begin{equation}
    z_{n,k}^{\TP}:=\tilde{z}(t_k;\Bx^{\TP}_n)+\epsilon^{\TP}_{n,k},
\end{equation}
where $\epsilon^{\TP}_{n,1},\cdots,\epsilon^{\TP}_{n,M}\overset{i.i.d.}{\sim} \NORMAL(0,\sigma^2_{\TP})$ for some noise variance $\sigma_{\TP}^2>0$ with $\epsilon^{\TP}_{n,k}$ and $\tilde{z}(t_k;\Bx^{\TP}_n)$ being independent for all $k=1,\cdots,M$.

\item \label{Temporal: IntegralObservation}
\underline{Integral sensors:} Integral sensors collect integral observations of the temporal process over consecutive time intervals $\big[\frac{(k-1)T}{K},\frac{kT}{K}\big]$ for $k=1,\cdots,K$, for some fixed $K\in\N$. The noisy observation at $\Bx_n^{\TI}$ over the time interval $\big[\frac{(k-1)T}{K},\frac{kT}{K}\big]$ is given by
\begin{equation}
  z^{\TI}_{n,k}:=\frac{K}{T}\int_{\frac{(k-1)T}{K}}^{\frac{kT}{K}}\tilde z(t;\Bx^{\TI}_n)\DIFFX{t}+\epsilon^{\TI}_{n,k},
\end{equation}
where $\epsilon^{\TI}_{n,1},\cdots,\epsilon^{\TI}_{n,K}\overset{i.i.d.}{\sim} \NORMAL(0,\sigma^2_I)$ for some noise variance $\sigma_{\TI}^2>0$ with $\epsilon^{\TI}_{n,k}$ and $\tilde z(t;\Bx^{\TI}_n)$ being independent for all $t\in[0,T]$. 
\end{enumerate}

\begin{table}[t] 
\centering
\resizebox{\columnwidth}{!}{%
\begin{tabular}{ r l }
 \hline
 \textbf{Symbol} & \textbf{Interpretation} \\
 \hline
$\Bx^{\TP}_n\in \CX$ & The spatial location of the $n$-th P-sensor for $n=1,\cdots,N^{\TP}$ \\
$\Bx^{\TI}_n\in \CX$  & The spatial location of the $n$-th I-sensor for $n=1,\cdots,N^{\TI}$ \\
$X_{1:N}:=(\Bx_1^{\TP},\cdots,\Bx_{N^{\TP}}^{\TP},$&The spatial locations of the $N$ sensors deployed in the field \\
$\Bx^{\TI}_1,\cdots,\Bx^{\TI}_{N^{\TI}})$ & \\ 
$T^{\TP}_{1:M}:=(t_1,\cdots,t_M)$ & The time instants at which the P-sensors collect observations  \\
 $T^{\TI}_{1:K}:=\Big\{\big[\frac{(k-1)T}{K},\frac{kT}{K}\big]\Big\}_{k=1}^K$&  The time intervals during which the I-sensors collect observations \\
$\BIg:=g(X_{1:N})$ & The realizations of the latent spatial random field $g$ at $X_{1:N}$\\
$y_{n}^{\TP}:=y(\Bx_{n}^{\TP})$ & The realization of the binary spatial random field $y$ at $\Bx_n^{\TP}$\\
 $y_{n}^{\TI}:=y(\Bx_{n}^{\TI})$ & The realization of the binary spatial random field $y$ at $\Bx_n^{\TI}$\\
 $\BZ^{\TP}_n:=(z^{\TP}_{n,1},\cdots,z^{\TP}_{n,M})^\TRANSP$ & The collection of the point observations at $\Bx_n^{\TP}$ over $T_{1:M}^{\TP}$ \\
 $\tilde \BZ^{\TP}_n:=(\tilde z^{\TP}_{n,1},\cdots,\tilde z^{\TP}_{n,M})^\TRANSP$ & The ground-truth values of the point observations at $\Bx_n^{\TP}$ over $T_{1:M}^{\TP}$\\

$\BZ^{\TI}_n:=(z^{\TI}_{n,1},\cdots,z^{\TI}_{n,K})^\TRANSP$ & The collection of the integral observations at $\Bx_n^{\TI}$ over $T_{1:K}^{\TI}$\\
 $\tilde \BZ^{\TI}_n:=(\tilde z^{\TI}_{n,1},\cdots,\tilde z^{\TI}_{n,K})^\TRANSP$ & The ground-truth values of the integral observations at $\Bx_n^{\TI}$ over $T_{1:K}^{\TI}$
\end{tabular}
}
\caption{Symbols used in the following sections.}
\label{Table:symbols}
\end{table}

A comprehensive list of symbols used in the following sections is provided in Table \ref{Table:symbols} and the graphical structure of the proposed model encoding conditional independence relations is presented in Figure \ref{fig:plate} as a directed acyclic-graph (DAG) using plate notations. The main objective of this paper is \textbf{Binary spatial random field reconstruction}, abbreviated as \textbf{Spatial field reconstruction}. In other words, given an un-monitored location $\Bx_*\in \CX$, we want to infer the value of the binary spatial random field $y_*:=y(\Bx_*)$ based on the data transmitted by the $N$ sensors.  

\begin{remark}\label{Remark: spatial field}
 If $\mu(\Bx)=0$ and $\CC(\Bx,\Bx)=1$ for all $\Bx \in \CX$, then the binary spatial random field satisfies $y(\Bx)=\INDI_{\{g(\Bx)\geq c\}}=F_B^{-1}(\Phi(g(\Bx)))$, $\PROB$-a.s., where $F_B^{-1}$ is the Generalized Inverse of $\mathrm{Bernoulli}(\pi)$ with $\pi=1-\Phi(c)$.
\end{remark}
\begin{assumption}
Throughout this paper, we assume that:
\begin{enumerate}
\item the integral observations in \ref{Temporal: IntegralObservation} are well-defined, that is,
\[
\int_{\frac{(k-1)T}{K}}^{\frac{kT}{K}} \left|\tilde z(t;\Bx^{\TI}_n) \right|\DIFFX{t}<\infty\quad \text{for all } k=1,\cdots,K,\,n=1,\cdots,N^{\TI};
\]
\item all covariance functions are symmetric and positive definite (see details in \cite[Section~2.1]{C.E.Rasmussen2006}).
\end{enumerate}
\end{assumption}

\begin{figure}[t]
\resizebox{\columnwidth}{!}{%
\begin{tikzpicture}
	[inner sep=2mm,
	latent/.style={circle,draw=black!100,fill=white!20,thick,minimum size=17mm},
	obs/.style={circle,draw=black!100,fill=black!20,thick,minimum size=17mm},
	fac/.style={rectangle,draw=black!100,fill=black!30,thick,minimum width=17mm,minimum height=17mm}]

    \node[latent]		(g)     {$g$};
    \node[latent]		(yx) 	[right=of g]	{$y(\Bx)$};
    \node[fac]		(sg) [above=of g]	{$\mu,\CC$};
    \node[fac]		(sc) [above=of yx]	{$c$};
    \node[latent]		(f)	[right=of yx]	{$f(\cdot;\Bx)$};
    \node[fac]		(x)	[below=of g]	{$\Bx$};
    \node[latent]		(ztilde) [right=of f] 	{$\tilde z(\cdot;\Bx)$};
    \node[obs]		(zP)	[below=of f]	{$z_{n,k}^{\TP}$};
    \node[obs]		(zI)	[below=of ztilde]	{$z_{n,k}^{\TI}$};
    \node[fac]		(tf)	[above=of f]		{$\mu_i,\CC_i$};
    \node[fac]		(tw)	[above=of ztilde]		{$W_i$};
    \node[latent]		(ep)	[below=of zP]		{$\epsilon_{n,k}^{\TP}$};
    \node[latent]		(eI)	[below=of zI]		{$\epsilon_{n,k}^{\TI}$};
    \node[fac]		(np)	[left=of ep]		{$\sigma^2_{\TP}$};
    \node[fac]		(nI)	[right=of eI]		{$\sigma^2_{\TI}$};
    \node[fac]		(tp)	[left=of zP]		{$t_k$};
    \node[fac]		(tI)	[right=of zI]		{$\left[\frac{(k-1)T}{K},\frac{kT}{K}\right]$};
    
    \draw [->] (sg) -- (g);
    \draw [->] (sc) -- (yx);
    \draw [->] (tf)--(f);
    \draw [->] (tw) -- (ztilde);    
    \draw [->] (g) -- (yx);
    \draw [->] (yx) -- (f);
    \draw [->] (f) -- (ztilde);
    \draw [->] (ztilde) -- (zP);    
    \draw [->] (ztilde) -- (zI);    
    \draw [->] (x)--(yx);
    \draw [->] (tp) -- (zP);
    \draw [->] (tI) -- (zI);        
    \draw [->] (np) -- (ep);
    \draw [->] (ep) -- (zP);
    \draw [->] (nI) -- (eI);
    \draw [->] (eI) -- (zI);            
    
    \tikzset{plate caption/.append style={below right=1pt and -35pt of #1.north east}}
     \plate [inner sep=0.45cm, xshift=-0.4cm, yshift=-0.25cm,color=black] {px} {(g)(nI)} {$\Bx=\Bx_1^{\TP},\cdots,\Bx_{N^{\TP}}^{\TP},\Bx^{\TI}_1,\cdots,\Bx^{\TI}_{N^{\TI}}$};
     \tikzset{plate caption/.append style={below right=5pt and 0pt of #1.south west}}
     \plate [inner sep=0.3cm, xshift=0.01cm, yshift=0.15cm,color=black] {pp} {(tp)(ep)} {$k=1,\cdots,M$}; 
     \plate [inner sep=0.3cm, xshift=0.1cm, yshift=0.15cm,color=black] {pI} {(tI)(eI)} {$k=1,\cdots,K$}; 
     \tikzset{plate caption/.append style={below right=3pt and 0pt of #1.south east}}
      \plate [inner sep=0.3cm, xshift=0.1cm, yshift=0.25cm,color=black] {pt} {(tf)(tw)} {$i=0,1$}; 
\end{tikzpicture}}
\caption{Directed acyclic graph (DAG) of the model encoding conditional independence relations using plate notations. The shaded rectangles represent the constants and observed covariates. The white circles indicate the latent variables. The shaded circles ensemble the observed random variables. }
\label{fig:plate}
\end{figure}

\section{Posterior Predictive Distribution \& Our Approach}
\label{Sec:Posterior}
In this section, we introduce the posterior predictive distribution for inferring the values of the binary spatial random field and highlight its computational intractability in (\ref{Spatial:Y_posterior}). The posterior predictive distribution of $y_*:=y(\Bx_*)$ at some un-monitored spatial location $\Bx_*$ given the Gaussian prior on $\BIg$, defined by $\BIg \sim \CN\left(\mu(X_{1:N}), \CC(X_{1:N},X_{1:N})\right)$, as well as the sensor observations $(\BZ^{\TP}_{1:N^{\TP}},$ $\BZ^{\TI}_{1:N^{\TI}})$ is presented in the following Proposition.
\begin{proposition}[Posterior predictive distribution]
\label{L1}
The posterior predictive distribution of $y_*$ given the sensor observations $(\BZ^{\TP}_{1:N^{\TP}}, \BZ^{\TI}_{1:N^{\TI}})$ is $\mathrm{Bernoulli}(\pi_*)$ with $\pi_*$ given by:
\begin{align}
\begin{split}
\label{Spatial:Y_posterior}
\pi_*&:=
\int_{c}^{\infty}
\int_{\R^N} p(g_*|\BIg)
\frac{
\big(\prod_{n=1}^{N^{\TP}}p(\BZ^{\TP}_{n}|\BIg)\big)\big(\prod_{n=1}^{N^{\TI}}p(\BZ^{\TI}_{n}|\BIg)\big)
p(\BIg)}
{\int_{\R^N}
\big(\prod_{n=1}^{N^{\TP}}p(\BZ^{\TP}_{n}|\BIg')\big)\big(\prod_{n=1}^{N^{\TI}}p(\BZ^{\TI}_{n}|\BIg')\big)
p(\BIg')\DIFFX{\BIg'}}\DIFFX{\BIg}
\DIFFX{g_*}.
\end{split}
\end{align}
Moreover, the term
\begin{align*}
\int_{\R^N}\textstyle
\big(\prod_{n=1}^{N^{\TP}}p(\BZ^{\TP}_{n}|\BIg')\big)\big(\prod_{n=1}^{N^{\TI}}p(\BZ^{\TI}_{n}|\BIg')\big)
p(\BIg')\DIFFX{\BIg'}
\end{align*}
in \eqref{Spatial:Y_posterior} contains a sum of $2^{N^{\TI}+N^{\TP}}$ terms, and hence is computationally intractable.
\end{proposition}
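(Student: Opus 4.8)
The plan is to establish two things in turn: the exact Bayesian identity \eqref{Spatial:Y_posterior}, and the exponential blow-up hidden in its normalizing constant. For the first part I would start from the definition $y_*=\INDI_{\{g_*\geq c\}}$ in \eqref{Spatial:definition}, writing $g_*:=g(\Bx_*)$. Since $y_*$ is $\{0,1\}$-valued, its conditional law given the observations is automatically $\mathrm{Bernoulli}(\pi_*)$ with $\pi_*=\PROB(y_*=1\mid \BZ^{\TP}_{1:N^{\TP}},\BZ^{\TI}_{1:N^{\TI}})=\PROB(g_*\geq c\mid \BZ^{\TP}_{1:N^{\TP}},\BZ^{\TI}_{1:N^{\TI}})=\int_c^\infty p(g_*\mid \BZ^{\TP}_{1:N^{\TP}},\BZ^{\TI}_{1:N^{\TI}})\DIFFX{g_*}$, so the task reduces to identifying the posterior density $p(g_*\mid\BZ)$.

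Next I would marginalize over the latent field values $\BIg=g(X_{1:N})$ at the sensor sites by the law of total probability, $p(g_*\mid\BZ)=\int_{\R^N}p(g_*\mid\BIg,\BZ)\,p(\BIg\mid\BZ)\DIFFX{\BIg}$. The crucial structural input, read off the DAG in Figure~\ref{fig:plate}, is the conditional independence $g_*\perp\BZ\mid\BIg$: given the GP values at the sensor locations, the prediction-site value $g_*$ is independent of every downstream observation, because each observation depends on $\BIg$ only through the thresholded binary labels and the temporal processes they index. This collapses $p(g_*\mid\BIg,\BZ)$ to the Gaussian conditional $p(g_*\mid\BIg)$. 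Applying Bayes' rule, $p(\BIg\mid\BZ)=p(\BZ\mid\BIg)p(\BIg)/\int_{\R^N}p(\BZ\mid\BIg')p(\BIg')\DIFFX{\BIg'}$, and factoring the likelihood as $p(\BZ\mid\BIg)=\prod_{n=1}^{N^{\TP}}p(\BZ^{\TP}_n\mid\BIg)\prod_{n=1}^{N^{\TI}}p(\BZ^{\TI}_n\mid\BIg)$ via the sensor-wise conditional independence in \ref{Temporal: independence} yields \eqref{Spatial:Y_posterior} verbatim.

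For the intractability claim, the key observation is that each factor $p(\BZ^{\TP}_n\mid\BIg)$ depends on $\BIg$ only through the single bit $y_n^{\TP}=\INDI_{\{g(\Bx_n^{\TP})\geq c\}}$ (and likewise $p(\BZ^{\TI}_n\mid\BIg)$ through $y_n^{\TI}$), since the temporal process at that site is governed by $\CC_0$ or $\CC_1$ and warped by $W_0$ or $W_1$ solely according to the label. I would therefore partition $\R^N$ into the regions $R_{\By}:=\{\BIg:\INDI_{\{g_n\geq c\}}=y_n\ \forall n\}$ indexed by $\By\in\{0,1\}^{N}$, where $g_n$ denotes the coordinate $g(\Bx_n)$. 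On each $R_{\By}$ the entire likelihood product is constant, equal to its value $L(\By)$ at that configuration, so the normalizing integral becomes $\sum_{\By\in\{0,1\}^N}L(\By)\,\PROB(\BIg\in R_{\By})$, a sum of exactly $2^{N^{\TI}+N^{\TP}}$ terms, each pairing a likelihood with a multivariate-Gaussian orthant probability. As these rectangle probabilities admit no closed form and their number grows exponentially in $N$, the term is computationally intractable.

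I expect the main obstacle to be the rigorous justification of the conditional independence $g_*\perp\BZ\mid\BIg$ together with the sensor-wise factorization, which must be traced carefully through the generative hierarchy from \ref{Temporal:LatentProcess} through \ref{Temporal: IntegralObservation}; the orthant decomposition underlying the counting is then essentially mechanical, modulo discarding the measure-zero boundaries $\{g_n=c\}$.
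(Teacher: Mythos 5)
Your proposal is correct and follows essentially the same route as the paper's proof: reduce $\pi_*$ to $\int_c^\infty p(g_*\mid \BZ)\DIFFX{g_*}$, marginalize over $\BIg$, invoke $g_*\perp\BZ\mid\BIg$ and the sensor-wise factorization, and apply Bayes' rule. Your orthant-partition argument for the $2^{N^{\TP}+N^{\TI}}$-term count is just the paper's expansion of $\prod_n\big(p(\BZ_n\mid y_n=0)\INDI_{\{g_n<c\}}+p(\BZ_n\mid y_n=1)\INDI_{\{g_n\geq c\}}\big)$ viewed as a partition of the integration domain, so the two arguments coincide.
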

\begin{proof}
See \ref{L1_P}.
\end{proof}

Despite the fact that the Bernoulli parameter $\pi_*$ of the posterior predictive distribution is computationally intractable, the conditional independence of $\BZ_n^{\TP},\BZ_n^{\TI}$ given $\BIg$ motivates us to process raw observations at each sensor and aggregate the processed data at the Fusion Center. Specifically, we adopt the following two-step approach to circumvent the difficulty of evaluating the computationally intractable posterior predictive distribution. 

\begin{itemize}
\item \underline{Step 1}: For $n=1,\cdots, N$, the $n$-th sensor performs inference based on its observations of the temporal process and transmits a binary decision $\widehat y_n^j$ as an estimator of $y_n^{j}$, $j\in \{\TP,\TI\}$. The decision process for Point sensors is detailed in Section~\ref{Temporal:Point} and Algorithm~\ref{Temporal:WGPLRT}, and the decision process for Integral sensors is detailed in Section~\ref{Temporal:Integral} and Algorithm~\ref{Temporal:NLRT}.
\item \underline{Step 2}: The Fusion Center collects all the binary decisions from the sensors, denoted by $\widehat \BY_{1:N}=(\widehat y^{\TP}_1,\cdots,\widehat y^{\TP}_{N^{\TP}},\widehat y^{\TI}_1,$ $\cdots,\widehat y^{\TI}_{N^{\TI}})^\TRANSP$, and applies the Spatial Best Linear Unbiased Estimator (S-BLUE) to aggregate the results and predict $y_*=y(\Bx_*)$ at an un-monitored location $\Bx_*$. This is detailed in Section~\ref{Sec:S-BLUE} and Algorithm~\ref{SBLUE:algo}.
\end{itemize}

\begin{remark}
Markov Chain Monte Carlo (MCMC) \cite{gelman2013bayesian}, such as Gibbs sampling,  could also be employed to approximate the posterior predictive distribution. However, the Gibbs sampler tends to be inefficient in practice due to the high correlation in the posterior distribution over $\BIg$ \cite{gelman2013bayesian,neil2008}. Although alternative sampling schemes such as, e.g., sampling using control variables \cite{neil2008} can address this issue, employing those methods for inference often requires communicating complete observations among the sensor network, which is extremely costly. Consequently, MCMC is not suitable in our context, as it does not align with our objective of developing an efficient yet simple algorithm.
\end{remark}

\section{Step 1: Local Binary Decisions via Likelihood Ratio Tests}
\label{Sec:LRTs}
To infer a binary decision from noisy observations, it is natural to use the Likelihood Ratio Test as it is the uniformly most powerful test given a fixed significance level (see\textit{,} e.g.,\ \cite[Section~\rom{3} (a)]{Neyman1933}). Let us first model the decision problem into a hypothesis testing problem. At the spatial location $\Bx_n^j$ for $j \in \{\TP,\TI\}$, the null and alternative hypotheses are given by  
\[
\CH_0: y(\Bx_n^j)=0\qquad \text{and} \qquad \CH_1: y(\Bx_n^j)=1.
\]

Let $p(\BZ_n^{j}|\CH_i)$, $i=0,1$, denote the marginal likelihood of $\BZ_n^{j}$ under the hypothesis $\CH_i$ and let $\gamma^{\TI},\gamma^{\TP}>0$ be the test thresholds for the P-sensors and the I-sensors, respectively. The Likelihood Ratio Test uses the following test statistic:
\begin{equation}\label{Temporal:test statistic}
    \Lambda(\BZ_n^{j})=\frac{p(\BZ_n^{j}|\CH_0)}{p(\BZ_n^{j}|\CH_1)}
    \begin{array}{c }
\CH_0\\
\geq\\
<\\
\CH_1
\end{array}
\gamma^j\;\;\; \text{for}\;j\in \{\TP,\TI\}.
\end{equation}
In the following subsections, we propose two methods for approximating the LRT. Specifically, we propose the Warped Gaussian Process Likelihood Ratio Test (WGPLRT) for point observations and we propose the Neighborhood-density-based Likelihood Ratio Test (NLRT) for integral observations.

\subsection{Local Binary Decisions for Point Observations}\label{Temporal:Point}
In this subsection, we present an algorithm called \textbf{Warped Gaussian Process Likelihood Ratio Test (WGPLRT)} for performing local inferences at the Point sensors. The WGPLRT exploits the Laplace Approximation to estimate the marginal likelihood function of point observations and provides a formula to approximate the test statistic defined in (\ref{Temporal:test statistic}). 

To begin with, the following Proposition gives the analytic expression for the marginal likelihood of $\BZ_n^{\TP}$ under {$\CH_0$ and $\CH_1$. 

\begin{proposition}[Marginal likelihood of $p(\BZ_n^{\TP}|\CH_i)$]\label{L2}For $i=0,1$, assume that the warping function is $W_i = F_i^{-1} \circ \Phi$ where $F_i$ is strictly increasing and is the CDF of a continuous random variable with continuous density. Let $G_i:= \Phi^{-1}\circ F_i$ be the inverse of $W_i$ and let $K_i:=\CC_i(T^{\TP}_{1:M},T^{\TP}_{1:M})\in \R^{M\times M}$ be the covariance matrix evaluated at $T^{\TP}_{1:M}$. Let $\Bepsilon_n^{\TP}=(\epsilon^{\TP}_{n,1},\cdots,\epsilon^{\TP}_{n,M})^\TRANSP$ be the additive noise at $\Bx_n^{\TP}$. Then, the marginal likelihood at the $n$-th P-sensor is given by
\begin{equation}\label{Temporal:WGPmarginal}
\begin{split}
  p(\BZ^{\TP}_n|\CH_i)&=\int_{\R^M}\exp\bigg(\!\!-\frac{1}{2}G_i(\BZ^{\TP}_n-\Bepsilon_n^{\TP})^\TRANSP K_i^{-1}G_i(\BZ^{\TP}_n-\Bepsilon_n^{\TP})-\frac{1}{2}\log \det K_i-\frac{M}{2}\log{2\pi}
 \\&\qquad\qquad+\sum_{m=1}^M \log \frac{\partial G_i(z)}{\partial z}\bigg|_{z^{\TP}_{n,m}-\epsilon^{\TP}_{n,m}}\bigg )(2\pi\sigma_{\TP}^2)^{-\frac{M}{2}}\exp{\left(-\frac{1}{2}\sigma_{\TP}^{-2}(\Bepsilon_n^{\TP})^{\TRANSP}\Bepsilon_n^{\TP}\right)}\DIFFX \Bepsilon_n^{\TP}.
\end{split}
\end{equation}
\end{proposition}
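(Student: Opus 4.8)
The plan is to recognize the point observation vector as a sum of two independent random vectors and to obtain its density as a convolution. Writing $\tilde\BZ_n^{\TP} := (\tilde z(t_1;\Bx_n^{\TP}),\ldots,\tilde z(t_M;\Bx_n^{\TP}))^\TRANSP$ for the noise-free warped values and $\Bepsilon_n^{\TP}$ for the additive noise, relation \ref{Temporal: PointObservation} gives $\BZ_n^{\TP} = \tilde\BZ_n^{\TP} + \Bepsilon_n^{\TP}$, where by assumption $\Bepsilon_n^{\TP}$ is independent of $\tilde\BZ_n^{\TP}$. Hence the density of $\BZ_n^{\TP}$ under $\CH_i$ factorizes as the convolution
\[
p(\BZ_n^{\TP}\mid\CH_i) = \int_{\R^M} p_{\tilde\BZ}(\BZ_n^{\TP}-\Bepsilon \mid \CH_i)\, p_{\Bepsilon}(\Bepsilon)\DIFFX{\Bepsilon},
\]
so it remains to identify the two factors. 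The noise factor is immediate: since $\epsilon_{n,1}^{\TP},\ldots,\epsilon_{n,M}^{\TP}$ are i.i.d.\ $\NORMAL(0,\sigma_{\TP}^2)$, their joint density is $(2\pi\sigma_{\TP}^2)^{-M/2}\exp(-\tfrac{1}{2}\sigma_{\TP}^{-2}\Bepsilon^\TRANSP\Bepsilon)$, which is precisely the last factor appearing in \eqref{Temporal:WGPmarginal}.

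Next I would compute $p_{\tilde\BZ}(\cdot\mid\CH_i)$, the density of the warped Gaussian vector, by a change of variables. Under $\CH_i$ the latent vector $\Bf := (f(t_1;\Bx_n^{\TP}),\ldots,f(t_M;\Bx_n^{\TP}))^\TRANSP$ is $\NORMAL(\veczero, K_i)$ by \ref{Temporal:LatentProcess}, and $\tilde\BZ_n^{\TP}$ is obtained by applying $W_i$ componentwise. Because $F_i$ is strictly increasing with continuous density and $\Phi$ is a smooth strictly increasing bijection, $W_i=F_i^{-1}\circ\Phi$ is a strictly increasing $C^1$ bijection with inverse $G_i=W_i^{-1}=\Phi^{-1}\circ F_i$, so the componentwise map $\Bf\mapsto\tilde\BZ_n^{\TP}$ is a diffeomorphism. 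Applying the multivariate change-of-variables formula, the Jacobian of the inverse map $\Bt\mapsto G_i(\Bt)$ (with $G_i$ acting entrywise) is diagonal, so its determinant equals $\prod_{m=1}^M \partial_z G_i(z)\big|_{t_m}$; multiplying the Gaussian density of $\Bf$ evaluated at $G_i(\Bt)$ by this Jacobian and taking logarithms reproduces exactly the exponent in \eqref{Temporal:WGPmarginal}. Substituting $\Bt = \BZ_n^{\TP}-\Bepsilon$ and inserting both factors into the convolution completes the derivation.

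The routine bookkeeping is the algebra of the change of variables; the one point demanding care is justifying that $W_i$ is a genuine $C^1$ diffeomorphism, so that the Jacobian factor is well-defined and strictly positive. This reduces to checking that $G_i=\Phi^{-1}\circ F_i$ is continuously differentiable with positive derivative, which follows from differentiating the composition: $\partial_z G_i(z)=f_i(z)/\phi(\Phi^{-1}(F_i(z)))$, where $f_i=F_i'$ is the continuous, positive density and $\phi=\Phi'>0$ everywhere; strict monotonicity of $F_i$ guarantees the numerator is positive on the support, while positivity of $\phi$ guarantees the denominator never vanishes. If $F_i$ is supported on a proper subinterval of $\R$, the same argument applies on that interval and the warped density vanishes outside it, leaving the integral formula unchanged. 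With this in hand the proof is complete.
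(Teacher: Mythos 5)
Your proposal is correct and follows essentially the same route as the paper: the paper likewise first derives the density of the noise-free warped vector $\tilde\BZ_n^{\TP}$ by a componentwise change of variables with diagonal Jacobian (its Lemma~\ref{GaussianCopula}), then integrates against the Gaussian noise density via the law of total probability and substitutes $\Bepsilon_n^{\TP}=\BZ_n^{\TP}-\tilde\BZ_n^{\TP}$, which is exactly your convolution formula. Your added remark on the regularity of $G_i=\Phi^{-1}\circ F_i$ mirrors the verification already carried out in that lemma.
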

\begin{proof}
See \ref{L2_P}.
\end{proof}

When the sensor observes the ground-truth values of the temporal process without observation errors, i.e., $\BZ^{\TP}_n = \tilde \BZ^{\TP}_n$, then the marginal likelihood $  p(\BZ^{\TP}_n|\CH_i)$ for $i=0,1$ in \eqref{Temporal:WGPmarginal} reduces to the case suggested in Section 3 in \cite{Snelson2003}.

However, the marginal likelihood in \eqref{Temporal:WGPmarginal} is computationally intractable as the integrand cannot be expressed as the probability density function of a multivariate normal distribution. Our approach is to approximate the inner term involving $G_i(\BZ_n^{\TP}-\Bepsilon_n^{\TP})$ using the \textbf{Laplace approximation}. The details of the Laplace approximation as well as the approximated marginal likelihood function of $p(\BZ_n^{\TP}|\CH_i)$ is given by the following Proposition.
\begin{proposition}[Laplace approximation]
\label{Temporal: Laplace Approximation}
For $i=0,1$, let
\begin{equation}
\label{Temporal:Q}
   \left(\text{range}(W_i)\right)^M\ni(v_1,\cdots,v_M)^\TRANSP=\Bv \mapsto  Q_i(\Bv):=-\frac{1}{2}G_i(\Bv)^TK_i^{-1}G_i(\Bv)+\sum_{m=1}^M \log \frac{\partial G_i(v)}{\partial v}\Big|_{v_m},
\end{equation}
where $K_i:=\CC_i(T^{\TP}_{1:M},T^{\TP}_{1:M})$ for $i=0,1$ are defined as in Proposition \ref{L2}.
Assume that there is a $\widehat \Bv_i \in \left(\text{range}(W_i)\right)^M$ satisfying the following conditions:
\begin{enumerate}
\item $\widehat \Bv_i $ maximizes $Q_i$;
\item $\widehat \Bv_i $ is in the interior of $\left(\text{range}(W_i)\right)^M$;
\item $Q_i$ is twice differentiable at $\widehat \Bv_i$;
\item $A_i:=-\nabla^2 Q_i(\Bv)|_{\Bv=\widehat \Bv_i}$ is positive definite. 
\end{enumerate}
Moreover, let $Q_i(\Bv)$ be approximated by its second-order Taylor polynomial as
\begin{equation}\label{Temporal:Qhat}
    \widehat Q_i(\Bv):=-\frac{1}{2}(\Bv-\widehat \Bv_i)^\TRANSP A_i(\Bv-\widehat \Bv_i)+Q_i(\widehat \Bv_i),\quad \Bv\in \left(\text{range}(W_i)\right)^M,
\end{equation}
and the approximated marginal likelihood function be given by
\begin{align}\label{Temporal:approximated likelihood}
\begin{split}
    \widehat p(\BZ_n^{\TP}|\CH_i)&:=\int_{\R^M}\exp\left(\widehat Q_i(\BZ_n^{\TP}-\Bepsilon_n^{\TP})-\frac{1}{2}\log \det K_i-\frac{M}{2}\log{2\pi}\right )\\
    &\qquad\qquad\times(2\pi\sigma_{\TP}^2)^{-\frac{M}{2}}\exp{\left(-\frac{1}{2}\sigma_{\TP}^{-2}(\Bepsilon_n^{\TP})^{\TRANSP}\Bepsilon_n^{\TP}\right)}\DIFFX \Bepsilon_n^{\TP}.
\end{split} 
\end{align} 
Then,
\begin{equation}
      \widehat p(\BZ_n^{\TP}|\CH_i) = \widehat C_i \exp\left(-\frac{1}{2}(\BZ_n^{\TP}-\widehat \Bv_i)^\TRANSP (A_i^{-1}+\sigma^{2}_{\TP}I)^{-1}(\BZ_n^{\TP}-\widehat \Bv_i)\right),
\end{equation}
where 
\[\widehat C_i:=\exp\left(-\frac{1}{2}\log \det K_i -\frac{M}{2}\log 2\pi-M\log \sigma_{\TP}+ Q(\widehat \Bv_i)-\frac{1}{2}\log\det (A_i+\sigma^{-2}_{\TP}I)\right).\]
\end{proposition}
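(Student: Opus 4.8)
The plan is to evaluate the integral in \eqref{Temporal:approximated likelihood} directly: once $Q_i$ is replaced by its quadratic surrogate $\widehat Q_i$ from \eqref{Temporal:Qhat} (read as its quadratic extension to all of $\R^M$, so that the integrand is defined for every noise vector), the integrand becomes the exponential of a quadratic form in $\Bepsilon_n^{\TP}$, and the integral reduces to a pure Gaussian integral. First I would substitute $\widehat Q_i$ with argument $\Bv=\BZ_n^{\TP}-\Bepsilon_n^{\TP}$ and abbreviate $\Bu:=\BZ_n^{\TP}-\widehat\Bv_i$, so that $\BZ_n^{\TP}-\Bepsilon_n^{\TP}-\widehat\Bv_i=\Bu-\Bepsilon_n^{\TP}$. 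The exponent then splits into a constant piece $Q_i(\widehat\Bv_i)-\frac{1}{2}\log\det K_i-\frac{M}{2}\log 2\pi$, which factors out of the integral, and the $\Bepsilon_n^{\TP}$-dependent piece $-\frac{1}{2}(\Bu-\Bepsilon_n^{\TP})^\TRANSP A_i(\Bu-\Bepsilon_n^{\TP})-\frac{1}{2}\sigma_\TP^{-2}(\Bepsilon_n^{\TP})^\TRANSP\Bepsilon_n^{\TP}$ coming from the Taylor term together with the Gaussian noise density.

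Next I would complete the square in $\Bepsilon_n^{\TP}$. Writing $B:=A_i+\sigma_\TP^{-2}I$, which is positive definite by the positive definiteness of $A_i$ in condition~(4) and hence guarantees convergence, and using that $A_i$ and $B$ are symmetric, expanding and regrouping gives
\[
(\Bu-\Bepsilon_n^{\TP})^\TRANSP A_i(\Bu-\Bepsilon_n^{\TP})+\sigma_\TP^{-2}(\Bepsilon_n^{\TP})^\TRANSP\Bepsilon_n^{\TP}=(\Bepsilon_n^{\TP}-B^{-1}A_i\Bu)^\TRANSP B(\Bepsilon_n^{\TP}-B^{-1}A_i\Bu)+\Bu^\TRANSP A_i(I-B^{-1}A_i)\Bu.
\]
The first term integrates, via the standard normalisation $\int_{\R^M}\exp\left(-\frac{1}{2}(\cdot)^\TRANSP B(\cdot)\right)\DIFFX\Bepsilon_n^{\TP}=(2\pi)^{M/2}(\det B)^{-1/2}$, to a factor independent of $\BZ_n^{\TP}$; the second term is the $\BZ_n^{\TP}$-dependent quadratic that survives in the exponent.

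It then remains to identify $A_i(I-B^{-1}A_i)$ with $(A_i^{-1}+\sigma_\TP^2 I)^{-1}$. Since $B-A_i=\sigma_\TP^{-2}I$, one has $I-B^{-1}A_i=\sigma_\TP^{-2}B^{-1}$, whence $A_i(I-B^{-1}A_i)=\sigma_\TP^{-2}A_i(A_i+\sigma_\TP^{-2}I)^{-1}=A_i(\sigma_\TP^2 A_i+I)^{-1}$; because $A_i$ commutes with $\sigma_\TP^2 A_i+I$, this equals $(\sigma_\TP^2 A_i+I)^{-1}A_i=(A_i^{-1}+\sigma_\TP^2 I)^{-1}$, giving the stated quadratic form in $\BZ_n^{\TP}-\widehat\Bv_i$. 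Finally I would gather the prefactor $(2\pi\sigma_\TP^2)^{-M/2}$, the Gaussian factor $(2\pi)^{M/2}(\det B)^{-1/2}$, and the pulled-out constant into $\widehat C_i$; using $(2\pi\sigma_\TP^2)^{-M/2}(2\pi)^{M/2}=\exp(-M\log\sigma_\TP)$ and $(\det B)^{-1/2}=\exp\!\left(-\frac{1}{2}\log\det(A_i+\sigma_\TP^{-2}I)\right)$ reproduces the claimed expression for $\widehat C_i$ exactly. The computation is essentially routine; the only steps demanding care are the completion of the square and the matrix identity above, plus the bookkeeping of the powers of $2\pi$ and $\sigma_\TP$. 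There is no deep obstacle, so the main risk is merely a dropped determinant or constant factor.
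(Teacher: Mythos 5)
Your proposal is correct and follows essentially the same route as the paper's proof: substitute $\widehat Q_i$, complete the square in $\Bepsilon_n^{\TP}$ with $B=A_i+\sigma_{\TP}^{-2}I$, integrate out the Gaussian, and collect constants. The only cosmetic difference is that the paper verifies the final matrix identity $A_i-A_i(A_i+\sigma_{\TP}^{-2}I)^{-1}A_i=(A_i^{-1}+\sigma_{\TP}^{2}I)^{-1}$ by citing Woodbury's identity, whereas you derive the equivalent identity $A_i(I-B^{-1}A_i)=(A_i^{-1}+\sigma_{\TP}^2 I)^{-1}$ by the elementary observation $I-B^{-1}A_i=\sigma_{\TP}^{-2}B^{-1}$ together with commutativity; both are fine, and your explicit remark that $\widehat Q_i$ must be read as its quadratic extension to all of $\R^M$ is a point the paper leaves implicit.
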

\begin{proof}
See \ref{L3_P}.
\end{proof}

Consequently, the test statistic (\ref{Temporal:test statistic}) is approximated by
\begin{equation}\label{Temporal:WGPtest}
   \begin{aligned}
     -\log 
    \widehat \Lambda (\BZ^{\TP}_n)&=-\log \widehat p(\BZ^{\TP}_n|\CH_0)+\log \widehat p(\BZ^{\TP}_n|\CH_1)\\
     &=\frac{1}{2}\Big(\log\det (A_0+\sigma^{-2}_{\TP}I)+\log\det K_0-2Q(\widehat \Bv_0)-\log\det (A_1+\sigma^{-2}_{\TP}I)\\
     &\phantom{=} -\log\det K_1+2Q(\widehat \Bv_1)\Big) 
     +\frac{1}{2}(\BZ^{\TP}_n-\widehat \Bv_0)^\TRANSP (A_0^{-1}+\sigma_{\TP}^{2}I)^{-1}(\BZ^{\TP}_n-\widehat \Bv_0)\\
     &\phantom{=} -\frac{1}{2}(\BZ^{\TP}_n-\widehat \Bv_1)^\TRANSP (A_1^{-1}+\sigma_{\TP}^{2}I)^{-1}(\BZ^{\TP}_n-\widehat \Bv_1).
\end{aligned} 
\end{equation}

We use $-\log \widehat \Lambda(\BZ^{\TP}_n)$ as the test statistic in the WGPLRT and reject $\CH_0$ if $ -\log 
    \widehat \Lambda (\BZ^{\TP}_n)>-\log \gamma^{\TP}$. 
Notice that given $K_0$ and $K_1$, the terms $\widehat{\Bv}_0$, $\widehat{\Bv}_1$, $Q(\widehat{\Bv}_0)$, $Q(\widehat{\Bv}_1)$, $\log\det (A_0+\sigma^{-2}_{\TP}I)$, $\log\det (A_1+\sigma^{-2}_{\TP}I)$, $(A_0^{-1}+\sigma_{\TP}^{2}I)^{-1}$, and $(A_1^{-1}+\sigma_{\TP}^{2}I)^{-1}$ in (\ref{Temporal:WGPtest}) do not depend on the observations $\BZ^{\TP}_n$.
Thus, the test procedure can be divided into the \textbf{offline phase} and the \textbf{online phase}. 
In the offline phase, the values of $\widehat{\Bv}_0$, $\widehat{\Bv}_1$, $(A_0^{-1}+\sigma_{\TP}^{2}I)^{-1}$, $(A_1^{-1}+\sigma_{\TP}^{2}I)^{-1}$ as well as the constant term $\frac{1}{2}\big(\log\det (A_0+\sigma^{-2}_{\TP}I)+\log\det K_0-2Q(\widehat \Bv_0)-\log\det (A_1+\sigma^{-2}_{\TP}I)-\log\det K_1+2Q(\widehat \Bv_1)\big)$ are evaluated and stored.
Subsequently, in the online phase, after making the observations $\BZ^{\TP}_n$, the test statistic $-\log \widehat \Lambda(\BZ^{\TP}_n)$ is computed using the terms that are pre-computed in the offline phase.
Our test procedure is summarized in Algorithm~\ref{Temporal:WGPLRT}. 
    
\begin{algorithm}[t]
\SetAlgoLined
\KwData{the point observations $\BZ^{\TP}_n$ at the $n$-th P-sensor, the covariance matrix $K_i$ defined as in Proposition \ref{L2} and the warping function $W_i$ for $i=0,1$, the noise variance $\sigma^2_{\TP}>0$, and the test threshold $\gamma^{\TP}$.}
\KwResult{ the log-approximated test statistic $-\log 
    \widehat \Lambda (\BZ^{\TP}_n)$ and the binary decision $\widehat y^{\TP}_n$ at $\Bx_n^{\TP}$.}
\textbf{Offline phase:} Compute $\widehat \Bv_i=\arg \max_\Bv Q_i(\Bv)$ and $A_i=-\nabla^2 Q_i(\Bv)|_{\Bv=\widehat \Bv_i}$ for $i=0,1$.\
Compute the values of $(A_0^{-1}+\sigma_{\TP}^{2}I)^{-1}$, $(A_1^{-1}+\sigma_{\TP}^{2}I)^{-1}$, and $\frac{1}{2}\big(\log\det (A_0+\sigma^{-2}_{\TP}I)+\log\det K_0-2Q(\widehat \Bv_0)-\log\det (A_1+\sigma^{-2}_{\TP}I)-\log\det K_1+2Q(\widehat \Bv_1)\big)$.\label{Temporal:WGPLRT-offline}


\textbf{Online phase:} After observing $\BZ^{\TP}_n$, compute $-\log 
    \widehat \Lambda (\BZ^{\TP}_n)$ from (\ref{Temporal:WGPtest}).\
The decision $\widehat y^{\TP}_n$ is given by 
\[
\widehat y^{\TP}_n=\begin{cases}
1,\quad \text{if } -\log 
    \widehat \Lambda (\BZ^{\TP}_n)>-\log \gamma^{\TP}\\
0, \quad \text{if } -\log 
    \widehat \Lambda (\BZ^{\TP}_n)\leq -\log \gamma^{\TP}.
\end{cases}
\]
 \caption{Warped Gaussian Process Likelihood Ratio Test (WGPLRT)}
 \label{Temporal:WGPLRT}
\end{algorithm}
\begin{remark}
The Laplace approximation in Proposition \ref{Temporal: Laplace Approximation} may perform poorly when \linebreak$\text{range}(W_i) \neq \R$ unless the distance between $\widehat{\Bv}_i$ and the boundary of $\left(\text{range}(W_i)\right)^M$ is large relative to $\sigma_{\TP}$. 
Moreover, since the Laplace approximation captures only the local characteristics of the integrand in \eqref{Temporal:WGPmarginal} around its maximum, WGPLRT sometimes approximates the LRT poorly when the only difference between $\CH_0$ and $\CH_1$ lies in the tails of the warping distributions.
\end{remark}    
We have now acquired the analytic expression of the approximated test statistic in the WGPLRT for point observations. We devote the next subsection to studying the case of integral observations.

\subsection{Local Binary Decisions for Integral Observations}\label{Temporal:Integral}
We now derive the algorithm called {\textbf{Neighborhood-density-based Likelihood Ratio Test (NLRT)} for making local binary decisions at the Integral sensors. At a glance, NLRT uses a Monte Carlo method to approximate the likelihood ratio. The samples generated in the Monte Carlo method can be used to compute the significance and power of this approximate likelihood ratio test. To begin with, for $i=0,1$, we are again interested in the marginal likelihood $p(\BZ^{\TI}_n|\CH_i)$, which is given by
\begin{equation}\label{Temporal: Integral1}
    p(\BZ^{\TI}_n|\CH_i)=\int_{\R^K} p(\BZ^{\TI}_n|\tilde \BZ^{\TI}_n)p(\tilde \BZ^{\TI}_n|\CH_i)d\tilde \BZ^{\TI}_n.
\end{equation}
 Note that $\BZ^{\TI}_n|\tilde \BZ^{\TI}_n\sim \CN(\tilde \BZ^{\TI}_n,\sigma^2_{\TI} \BI_K)$. 
Unfortunately, the marginal likelihood in \eqref{Temporal: Integral1} is also computationally intractable as the observations $\BZ_n^{\TI}$ are defined by integrals. To overcome this problem, we propose the NLRT, which is based on the idea of the Approximate Bayesian Computation method (see\textit{,} e.g., \cite[Section~2.1]{Toni_2008}).

In NLRT,  we generate samples under $\CH_0$ and $\CH_1$ and accept each sample if it is within a certain error tolerance $\delta^\TI>0$ to the given observation. Subsequently, the test statistic \eqref{Temporal:test statistic} is approximated by the ratio of the number of accepted samples under $\CH_0$ and $\CH_1$; see details in Algorithm~\ref{Temporal:NLRT}. To measure the distances between the generated samples and the observations, the Euclidean distance is a natural and effective candidate in most cases. However, for high-dimensional data, the Euclidean distance fails to define a meaningful notion of proximity as shown in Theorem~1 in \cite{Beyer1999}. Therefore, summary statistics are required to project high-dimensional data to a low-dimensional space while preserving their distinct characteristics under $\CH_0$ and $\CH_1$. In our case, depending on the type of the hypotheses, feature-engineering is needed for designing summary statistics: mode, mean, variance, kurtosis can be used to capture the differences in the warping functions; and autocorrelation (ACF) is helpful to detect the differences between covariance functions $\CC_0$ and $\CC_1$.

\begin{algorithm}[t]
\SetAlgoLined
\KwData{ the integral observations $\BZ^{\TI}_n$ at the $n$-th I-sensor, the number of generated samples $J$, the warping functions $W_i$ and the covariance functions $\CC_i(\cdot,\cdot)$ for $i=0,1$, the summary statistics $S:\R^K\to \R^l$ for some $l\in \N$ with $l<K$, the distance measure $d(\cdot,\cdot): \R^l \times \R^l \to \R$, the noise variance $\sigma_{\TI}^2>0$, the error tolerance $\delta^{\TI}>0$,  some $\epsilon^{\TI}>0$ small enough, and the test threshold $\gamma^{\TI}$.}
\KwResult{the approximated test statistic $\widehat \Lambda(\BZ_n^{\TI})$ and the decision $\widehat y_n^{\TI}$ at $\Bx_N^{\TI}$.}

\textbf{Offline phase:} For $i=0,1$ and $j=1,\cdots,J$, generate a sample $\widehat \BZ_j^{\TI,i}$ according to \ref{Temporal: IntegralObservation}.\
 
\textbf{Online phase:} For $i=0,1$ and $j=1,\cdots,J$, accept $\widehat \BZ_j^{\TI,i}$ if 
 \[
 d(S(\BZ_n^{\TI}),S(\widehat \BZ_j^{\TI,i}))\leq \delta^{\TI}.
 \]

For $i=0,1$,  denote the number of accepted samples as $n_{\CH_i}$. Compute the approximated test statistic 
 \[\widehat\Lambda(\BZ_n^{\TI})= \frac{  n_{\CH_0}+\epsilon^{\TI}}{  n_{\CH_1}+\epsilon^{\TI}}.
 \] \label{Temporal:epsilon}
 The decision $\widehat y_n^{\TI}$ is given by
 \[
 \widehat y_n^{\TI}=\begin{cases}
1,\quad \text{if } \widehat\Lambda(\BZ_n^{\TI})<\gamma^{\TI}\\
0, \quad \text{if } \widehat\Lambda(\BZ_n^{\TI})\geq \gamma^{\TI}.
    \end{cases}
 \]  
\caption{Neighborhood-density-based Likelihood Ratio Test (NLRT)}
\label{Temporal:NLRT}
\end{algorithm}
\begin{remark}
The Laplace Approximation technique does not work for the integral case as the marginal likelihood $p(\BZ^{\TI}_n|\CH_i)$ for $i=0,1$ cannot be expressed analytically.
\end{remark}
\begin{remark}
A small positive number $\epsilon^{\TI}$ is added in Line \ref{Temporal:epsilon} of Algorithm \ref{Temporal:NLRT} when evaluating the approximated test statistic to avoid division by zero.
\end{remark}
Similar to WGPLRT, since the samples $\big\{\widehat \BZ_j^{\TI,i}\big\}_{j=1:J,\,i=0,1}$ can be generated before making the observations $\BZ_n^{\TI}$, we can also divide the test procedure of NLRT into the \textbf{offline phase} and the \textbf{online phase}.
In the offline phase, the samples $\big\{\widehat \BZ_j^{\TI,i}\big\}_{j=1:J,\,i=0,1}$ are generated, and their summary statistics $\big\{S(\widehat \BZ_j^{\TI,i})\big\}_{j=1:J,\,i=0,1}$ are stored.
Subsequently, in the online phase, after making the observations $\BZ_n^{\TI}$, the test statistic $\widehat\Lambda(\BZ_n^{\TI})$ is computed with respect to the summary statistics that are pre-computed in the offline phase.

\section{Step 2: Spatial Best Linear Unbiased Estimator (S-BLUE)}\label{Sec:S-BLUE}

After all $\TP$-sensors compute the binary decisions $\{\widehat{y}^\TP_j\}_{i=1}^{N^\TP}$ by Algorithm~\ref{Temporal:WGPLRT} and after all $\TI$-sensors compute the binary decisions $\{\widehat{y}^\TI_j\}_{j=1}^{N^{\TI}}$ by Algorithm~\ref{Temporal:NLRT}, let $\widehat{\BY}_{1:N}$ denote the collection of binary decisions $(\widehat y^{\TP}_1,\cdots,\widehat y^{\TP}_{N^{\TP}},\widehat y^{\TI}_1,\cdots,\widehat y^{\TI}_{N^{\TI}})^\TRANSP$ at all N sensors. We aim to derive the \textbf{Spatial Best Linear Unbiased Estimator (S-BLUE)} $\widehat g_*$ for $g_*:=g(\Bx_*)$ at a fixed un-monitored spatial location $\Bx_* \in \CX$. Then, the prediction of $y_*:=y(\Bx_*)$ is given by 
\begin{equation}\label{Spatial:prediction}
    \widehat y_*= \INDI_{\{ \widehat g_*\geq c\}}.
\end{equation}
Let $l:\R\times \R \to \R_+$ denote the loss function. Let $\CR[h]$ denote the Bayes risk of any $h:\R^N\to \R$, that is
\begin{equation}\label{SBLUE:BayesRisk}
    \CR[h]=\EXP[l(h(\widehat \BY_{1:N}),g_*)].
\end{equation}

We restrain the estimator to be a member of the family of linear estimators, $\CH:=\{{h:\R^N \to \R}: h(\widehat \BY_{1:N})=\Bw^\TRANSP\widehat \BY_{1:N}+b, \Bw \in \R^N, b\in\R \}$, where $\Bw$ is called a weight vector, $b$ is called an intercept, and neither is a function of $\widehat \BY_{1:N}$. The S-BLUE is defined to be the optimal linear estimator minimizing the Bayes risk under the quadratic loss function defined by
\begin{equation}
\begin{split}
    \widehat h_{\mathrm{S\mbox{-}BLUE}}&:=\argmin_{h\in\CH} \EXP[(h(\widehat \BY_{1:N})-g_*)^2].
\end{split}
\end{equation}

Let us re-index $(\Bx^{\TP}_{1},\cdots,\Bx^{\TP}_{N^{\TP}},\Bx^{\TI}_{1},\cdots,\Bx^{\TI}_{N^{\TI}})$ as  $(\Bx_1,\cdots,\Bx_N)$ and re-index $\widehat \BY_{1:N}$ as \linebreak$(\widehat y_1,\cdots, \widehat y_N)^\TRANSP$. Observe that the binary decision $\widehat y_n$ at location $\Bx_n$ from the Likelihood Ratio Test is an estimator of the ground-truth $y_n$ with type $\rom{1}$ and type $\rom{2}$ errors. Therefore, the effect of the Likelihood Ratio Test at each $\Bx_n$ can be treated as applying a transition matrix that adds noise to the true label during data transmission, as detailed in the following Remark. 
\begin{remark}\label{SBLUE:noisy channel}
For $n=1,\cdots,N$, let $\gamma^n\in \{\gamma^{\TP},\gamma^{\TI}\}$ denote the test threshold for the Likelihood Ratio Test (either WGPLRT or NLRT) at $\Bx_n$, then the effect of the Likelihood Ratio Test at $\Bx_n$ is equivalent to transmitting the ground-truth $y(\Bx_n)$ via a noisy channel, where the transition matrix $U_n$ is given by
\begin{equation}\label{LRT:noisy channel}
U_n=\begin{blockarray}{ccc}
&\scriptstyle 0 & \scriptstyle1  \\
\begin{block}{c(cc)}
\scriptstyle0&p^n_{00}&p^n_{01} \\
\scriptstyle1&p^n_{10} & p^n_{11}\\
\end{block}
\end{blockarray}\,,
\end{equation}
where $p^n_{01}:=\PROB[\widehat \Lambda < \gamma^n|\CH_0]$ is the type $\rom{1}$ error rate, $p^n_{00}:=1-p^n_{01}$, $p^n_{10}:=\PROB[\widehat \Lambda \geq \gamma^n|\CH_1]$ is the type $\rom{2}$ error rate, and $p^n_{11}:=1-p^n_{01}$.
\end{remark}

Therefore, our spatial field reconstruction procedure can now be partitioned into two steps: in the first step, we perform the approximated Likelihood Ratio Tests (LRTs) to compute the binary decisions; and in the the second step, we reconstruct the binary spatial field via S-BLUE with these noisy binary inputs. The S-BLUE is presented in the following theorem.

\begin{theorem}[Spatial BLUE]\label{SBLUE:Noisy_diff_U}Given the binary decisions $\widehat \BY_{1:N}=(\widehat y_1,\cdots,\widehat y_N)^\TRANSP$ obtained in Algorithm \ref{Temporal:WGPLRT} \& \ref{Temporal:NLRT} and the transition matrices $\{U_n\}_{n=1}^N$ specified in \eqref{LRT:noisy channel},
the S-BLUE of $g_*=g(\Bx_*)$ for a fixed un-monitored location $\Bx_*\in\CX$ is given by
\begin{equation}
\label{SBLUE:Noisy_estimator}
   \widehat g_*:= \widehat h_{\mathrm{S\mbox{-}BLUE}}(\widehat \BY_{1:N})=\mu_*+\COV[g_*, \widehat \BY_{1:N}]\COV[\widehat \BY_{1:N}]^{-1}(\widehat \BY_{1:N}-\EXP[\widehat \BY_{1:N}]),
\end{equation}
where $\mu_*:=\EXP[g_*]$, and for $i,j=1,\cdots,N$,
\begin{align}\label{SBLUE:cov}
\begin{split}
\EXP[\widehat y_i]&=p^i_{11}\Phi\left(-\frac{c-\mu_i}{\sigma_i}\right)+p^i_{01}\Phi\left(\frac{c-\mu_i}{\sigma_i}\right),\\
    \COV[\widehat y_i,\widehat y_j]&=p^i_{01}p^j_{01}\PROB(g_i<c,g_j<c)+p^i_{01}p^j_{11}\PROB(g_i<c,g_j\geq c)\\
    &\phantom{=}+p^i_{11}p^j_{01}\PROB(g_i\geq c,g_j<c)+p^i_{11}p^j_{11}\PROB(g_i\geq c,g_j\geq c)\\
    &\phantom{=}-\left [p^i_{11}\Phi\left(-\frac{c-\mu_i}{\sigma_i}\right)+p^i_{01}\Phi\left(\frac{c-\mu_i}{\sigma_i}\right)\right] \left [p^j_{11}\Phi\left(-\frac{c-\mu_j}{\sigma_j}\right)+p^j_{01}\Phi\left(\frac{c-\mu_j}{\sigma_j}\right)\right],\\
    \COV[g_*,\widehat y_i]&=\frac{1}{\sqrt{2\pi}\sigma_i}(p^i_{11}-p^i_{01})\CC(\Bx_*,\Bx_i)\exp\left(-\frac{(c-\mu_i)^2}{2\sigma_i^2}\right).\\
\end{split}
\end{align}
with $g_i:=g(\Bx_i)$, $\mu_i:=\mu(\Bx_i)$, $\sigma_i^2:=\CC(\Bx_i,\Bx_i)$, and $c$ being the constant threshold in \eqref{Spatial:definition}.
\end{theorem}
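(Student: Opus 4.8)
The plan is to split the proof into two essentially independent parts: first, derive the generic form of the best linear unbiased estimator by minimizing the quadratic Bayes risk over the affine family $\CH$; second, evaluate the three moments $\EXP[\widehat y_i]$, $\COV[\widehat y_i,\widehat y_j]$ and $\COV[g_*,\widehat y_i]$ explicitly using the noisy-channel representation of Remark~\ref{SBLUE:noisy channel} together with the joint Gaussianity of the latent field $g$.

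For the first part, I would treat $\EXP[(\Bw^\TRANSP\widehat\BY_{1:N}+b-g_*)^2]$ as a convex quadratic in $(\Bw,b)$ and set its gradient to zero. Minimizing in $b$ first gives the unbiasedness relation $b=\mu_*-\Bw^\TRANSP\EXP[\widehat\BY_{1:N}]$, so that $\widehat h(\widehat\BY_{1:N})=\mu_*+\Bw^\TRANSP(\widehat\BY_{1:N}-\EXP[\widehat\BY_{1:N}])$; substituting this back reduces the objective to a centred quadratic in $\Bw$ whose stationarity (normal) equation is $\COV[\widehat\BY_{1:N}]\,\Bw=\COV[\widehat\BY_{1:N},g_*]$. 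Provided $\COV[\widehat\BY_{1:N}]$ is invertible (a point I would flag explicitly), this yields $\Bw=\COV[\widehat\BY_{1:N}]^{-1}\COV[\widehat\BY_{1:N},g_*]$, and combining with the expression for the intercept gives exactly \eqref{SBLUE:Noisy_estimator}. This part is entirely standard and mirrors the classical BLUE/Gauss--Markov derivation.

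The substance lies in the second part. The key structural facts I would invoke are: (i) conditional on $\BIg$, each decision $\widehat y_i$ is Bernoulli whose law depends on $\BIg$ only through $y_i=\INDI_{\{g_i\ge c\}}$, with $\PROB[\widehat y_i=1\mid y_i=1]=p^i_{11}$ and $\PROB[\widehat y_i=1\mid y_i=0]=p^i_{01}$; and (ii) conditional on $\BIg$, the decisions $\widehat y_i$ are mutually independent. Both follow from the conditional independence of the temporal processes given $\BIg$ stated in \ref{Temporal: independence} and from the fact that the temporal law at $\Bx_i$ is governed solely by $y(\Bx_i)$. Granting these, $\EXP[\widehat y_i]$ follows by conditioning on $y_i$ and using $\PROB[g_i\ge c]=\Phi(-(c-\mu_i)/\sigma_i)$ for $g_i\sim\CN(\mu_i,\sigma_i^2)$. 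For $\COV[\widehat y_i,\widehat y_j]$ I would tower over $\BIg$, factor $\EXP[\widehat y_i\widehat y_j\mid\BIg]=\PROB[\widehat y_i=1\mid y_i]\,\PROB[\widehat y_j=1\mid y_j]$ by independence, expand the product of the two conditional Bernoulli means, and take expectations term-by-term; the four resulting quantities are precisely the orthant probabilities of the bivariate normal $(g_i,g_j)$, after which subtracting $\EXP[\widehat y_i]\EXP[\widehat y_j]$ produces the stated formula. For $\COV[g_*,\widehat y_i]$ I would condition on $(g_*,g_i)$, replace $\widehat y_i$ by $\EXP[\widehat y_i\mid g_i]=p^i_{11}\INDI_{\{g_i\ge c\}}+p^i_{01}\INDI_{\{g_i<c\}}$, then use the Gaussian regression identity $\EXP[g_*\mid g_i]=\mu_*+\CC(\Bx_*,\Bx_i)\sigma_i^{-2}(g_i-\mu_i)$ and the one-dimensional partial moment $\EXP[(g_i-\mu_i)\INDI_{\{g_i\ge c\}}]=\sigma_i\,\phi((c-\mu_i)/\sigma_i)$; the terms proportional to $\mu_*$ reassemble into $\mu_*\EXP[\widehat y_i]$ and cancel against the $\EXP[g_*]\EXP[\widehat y_i]$ subtracted in the covariance, leaving the single exponential term displayed in \eqref{SBLUE:cov}.

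I expect the only genuine obstacle to be making the two conditional-independence claims (i)--(ii) rigorous, together with the analogous statement $\widehat y_i\perp g_*\mid g_i$ used in the last moment, since these are exactly what licenses replacing jointly dependent decisions by products of one- and two-dimensional Gaussian integrals. Once they are justified from the model's DAG, every remaining computation is a routine Gaussian tail/partial-moment integral (using $\int_a^\infty z\phi(z)\,dz=\phi(a)$ and the standard bivariate orthant probabilities), so the bulk of the work is bookkeeping rather than analysis.
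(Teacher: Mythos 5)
Your proposal is correct and follows essentially the same route as the paper's own proof: the affine-family risk minimization yielding the normal equations for $(\Bw,b)$, the tower/conditional-independence argument reducing $\EXP[\widehat y_i]$ and $\EXP[\widehat y_i\widehat y_j]$ to Gaussian tail and orthant probabilities via $\EXP[\widehat y_i\mid g_i]=p^i_{11}\INDI_{\{g_i\geq c\}}+p^i_{01}\INDI_{\{g_i<c\}}$, and the Gaussian regression identity plus the partial moment $\EXP[(g_i-\mu_i)\INDI_{\{g_i\geq c\}}]=\sigma_i\,\Phi'\bigl((c-\mu_i)/\sigma_i\bigr)$ for $\COV[g_*,\widehat y_i]$. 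Your explicit flagging of the invertibility of $\COV[\widehat\BY_{1:N}]$ and your direct use of the one-dimensional partial-moment identity (in place of the paper's case-by-case substitution) are minor, harmless refinements.
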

\begin{proof}
See \ref{NoisySBLUE_proof}.
\end{proof}

\begin{remark}\label{SBLUE:individual-terms}
For $i,j=1,\cdots, N$, $\PROB(g_i<c,g_j<c)$, $\PROB(g_i<c,g_j\geq c)$,  $\PROB(g_i\geq c,g_j<c)$, $\PROB(g_i\geq c,g_j \geq c)$ are the probabilities of the bivariate normal $ \CN\left (\begin{pmatrix}
\mu_i \\ \mu_j
\end{pmatrix},
\begin{pmatrix}
\sigma_i^2&\CC(\Bx_i,\Bx_j)\\
\CC(\Bx_i,\Bx_j)&\sigma_j^2
\end{pmatrix}\right)$ over the regions $(-\infty,c)\times(-\infty,c)$, $(-\infty,c)\times[c,\infty)$, $[c,\infty)\times(-\infty,c)$, $[c,\infty)\times[c,\infty)$, respectively.  
\end{remark}

As a consequence,  the theoretical Bayes risk of S-BLUE can be computed analytically as shown in the following Corollary.
\begin{corollary}[Bayes risk]
\label{SBLUE: bayes risk}
Under the quadratic loss function, the Bayes risk \eqref{SBLUE:BayesRisk} associated with $\widehat h_{\mathrm{S\mbox{-}BLUE}}(\widehat \BY_{1:N})$ defined in \eqref{SBLUE:Noisy_estimator} is given by
\begin{equation}\label{SBLUE:TheoreticalRisk}
    \CR[\widehat h_{\mathrm{S\mbox{-}BLUE}}(\widehat \BY_{1:N})]=\COV[g_*]-\COV[g_*,\widehat \BY_{1:N}]\COV[\widehat \BY_{1:N}]^{-1}\COV[g_*,\widehat \BY_{1:N}]^\TRANSP.
\end{equation}
\end{corollary}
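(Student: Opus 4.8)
The plan is to substitute the explicit S-BLUE estimator from Theorem~\ref{SBLUE:Noisy_diff_U} into the Bayes-risk functional \eqref{SBLUE:BayesRisk} under the quadratic loss, and then reduce the resulting expectation to a purely second-moment computation. Write $\Sigma_{yy}:=\COV[\widehat\BY_{1:N}]$, $\Sigma_{gy}:=\COV[g_*,\widehat\BY_{1:N}]$ (a $1\times N$ row vector), and $\Bmu_y:=\EXP[\widehat\BY_{1:N}]$, so that the estimator reads $\widehat g_*=\mu_*+\Sigma_{gy}\Sigma_{yy}^{-1}(\widehat\BY_{1:N}-\Bmu_y)$. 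First I would form the estimation error $e:=\widehat g_*-g_*$ and observe, by linearity of expectation, that $\EXP[e]=\mu_*-\EXP[g_*]+\Sigma_{gy}\Sigma_{yy}^{-1}(\EXP[\widehat\BY_{1:N}]-\Bmu_y)=0$. Hence the estimator is unbiased, and the Bayes risk collapses to $\CR[\widehat h_{\text{S-BLUE}}(\widehat\BY_{1:N})]=\EXP[e^2]=\VAR[e]$.

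Next I would centre the two quantities, setting $\tilde g:=g_*-\mu_*$ and $\tilde\BY:=\widehat\BY_{1:N}-\Bmu_y$, so that $e=\Sigma_{gy}\Sigma_{yy}^{-1}\tilde\BY-\tilde g$. Expanding the square produces three terms,
\[
\EXP[e^2]=\EXP[\tilde g^2]-2\,\Sigma_{gy}\Sigma_{yy}^{-1}\EXP[\tilde\BY\,\tilde g]+\Sigma_{gy}\Sigma_{yy}^{-1}\EXP[\tilde\BY\tilde\BY^{\TRANSP}]\Sigma_{yy}^{-1}\Sigma_{gy}^{\TRANSP}.
\]
I would then identify each second moment with the covariance notation of the statement: $\EXP[\tilde g^2]=\COV[g_*]$, $\EXP[\tilde\BY\,\tilde g]=\COV[\widehat\BY_{1:N},g_*]=\Sigma_{gy}^{\TRANSP}$, and $\EXP[\tilde\BY\tilde\BY^{\TRANSP}]=\Sigma_{yy}$. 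Substituting these, the quadratic term simplifies to $\Sigma_{gy}\Sigma_{yy}^{-1}\Sigma_{gy}^{\TRANSP}$ via $\Sigma_{yy}^{-1}\Sigma_{yy}\Sigma_{yy}^{-1}=\Sigma_{yy}^{-1}$, while the cross term equals $2\,\Sigma_{gy}\Sigma_{yy}^{-1}\Sigma_{gy}^{\TRANSP}$.

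Combining the three contributions gives $\CR=\COV[g_*]-2\Sigma_{gy}\Sigma_{yy}^{-1}\Sigma_{gy}^{\TRANSP}+\Sigma_{gy}\Sigma_{yy}^{-1}\Sigma_{gy}^{\TRANSP}=\COV[g_*]-\Sigma_{gy}\Sigma_{yy}^{-1}\Sigma_{gy}^{\TRANSP}$, which is exactly \eqref{SBLUE:TheoreticalRisk} once the shorthand is unfolded. This is precisely the orthogonality principle for linear minimum-mean-square estimation: the residual $e$ is uncorrelated with $\widehat\BY_{1:N}$, so the explained variance $\Sigma_{gy}\Sigma_{yy}^{-1}\Sigma_{gy}^{\TRANSP}$ is subtracted from the prior variance exactly once. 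I do not expect any substantive obstacle; the only points requiring care are (i) that $\Sigma_{yy}=\COV[\widehat\BY_{1:N}]$ is invertible, so the estimator and the manipulation are well defined, which follows from the positive definiteness inherited from the model's covariance structure, and (ii) consistent bookkeeping of the $1\times N$ versus $N\times 1$ orientation of $\Sigma_{gy}$ and its transpose, since the scalar risk must come out symmetric. The explicit entries of $\Sigma_{yy}$, $\Sigma_{gy}$, and $\Bmu_y$ are already furnished by \eqref{SBLUE:cov}, so no further distributional computation is needed for the corollary itself.
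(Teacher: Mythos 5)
Your proposal is correct and follows essentially the same route as the paper's proof: substitute the S-BLUE expression into the quadratic risk, expand the square into the variance, cross, and quadratic terms, identify each expectation with the corresponding covariance (using that $\COV[g_*,\widehat\BY_{1:N}]$ and $\COV[\widehat\BY_{1:N}]$ are constants), and cancel via $\Sigma_{gy}\Sigma_{yy}^{-1}\Sigma_{yy}\Sigma_{yy}^{-1}\Sigma_{gy}^{\TRANSP}=\Sigma_{gy}\Sigma_{yy}^{-1}\Sigma_{gy}^{\TRANSP}$. The unbiasedness remark and the orthogonality-principle interpretation are harmless additions not present in the paper but do not change the argument.
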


\begin{proof}
See \ref{proof: bayes risk}.
\end{proof}

We can execute the S-BLUE algorithm in two phases: the \textbf{offline phase} and the \textbf{online phase}.
In the offline phase, we compute and store the values of $\mu_*$, $\EXP[\widehat \BY_{1:N}]$, \linebreak$\COV[g_*, \widehat \BY_{1:N}]\COV[\widehat \BY_{1:N}]^{-1}$, and $\CR[\widehat h_\text{S-BLUE}(\widehat \BY_{1:N})]$.
Subsequently, in the online phase, given the decisions $\widehat \BY_{1:N}$,  we compute the estimator $\widehat g_*$ using (\ref{SBLUE:Noisy_estimator}).
This is presented in Algorithm~\ref{SBLUE:algo}.

After formulating the S-BLUE, the overall algorithm is outlined in Algorithm \ref{Overall:algo}.
\begin{algorithm}[t] 
\SetAlgoLined
\KwData{ the binary decisions $\widehat \BY_{1:N}$,  the transition matrices $\{U_n\}_{n=1}^N$, the constant threshold $c$, the mean function $\mu(\cdot)$, the covariance function $\CC(\cdot,\cdot)$, the un-monitored location $\Bx_*$. }
\KwResult{the prediction $\widehat y_*$ at $\Bx_*$ and the Bayes risk $\CR[\widehat h_\text{S-BLUE}(\widehat \BY_{1:N})]$.}

 \textbf{Offline phase:}\
 Compute $\mu_*$, $\COV[g_*,\widehat \BY_{1:N}]$, $\COV[\widehat \BY_{1:N}]$, $\EXP[\widehat \BY_{1:N}]$ using (\ref{SBLUE:cov}).\ \linebreak
Compute $ \CR[\widehat h_\text{S-BLUE}(\widehat \BY_{1:N})]$ using (\ref{SBLUE:TheoreticalRisk}).\
 
 \textbf{Online phase:}\
 After collecting the decisions $\widehat \BY_{1:N}$, compute
 $\widehat{g}_*= \widehat h_\text{S-BLUE}(\widehat \BY_{1:N})=\mu_*+\COV[g_*, \widehat \BY_{1:N}]\COV[\widehat \BY_{1:N}]^{-1}(\widehat \BY_{1:N}-\EXP[\widehat \BY_{1:N}])$.\
 
 Compute $\widehat y_*= \INDI_{\{ \widehat g_*\geq c\}}$.\
 
 \caption{Classification of binary spatial random field}
\label{SBLUE:algo}
\end{algorithm}

\begin{algorithm}[t]
\caption{The Overall Algorithm}
\label{Overall:algo}
\SetAlgoLined
\KwData{ the observations $( \BZ^{\TP}_{1:N^{\TP}},\BZ^{\TI}_{1:N^{\TI}})$, the time points $T_{1:M}^{\TP}$, the time intervals $T_{1:K}^{\TI}$, the test threshold $\gamma^{j}$ for the Likelihood Ratio Test and the noise variance $\sigma_j^2>0$ for $j\in \{\TI,\TP\}$, the covariance function $\CC_i(\cdot,\cdot)$ and the warping function $W_i$ of the temporal processes for $i=0,1$, the summary statistics $S(\cdot)$, the distance measure $d(\cdot,\cdot)$, the number of generated samples $J$,  the error tolerance $\delta^{\TI}>0$, some $\epsilon^{\TI}>0$ small enough,  the mean function $\mu(\cdot)$ and the covariance function $\CC(\cdot,\cdot)$ of the binary spatial field, the transition matrices $\{U_n\}_{n=1}^N$, the constant threshold $c$, the un-monitored location $\Bx_*$.}
\KwResult{the prediction $\widehat y_*$ at $\Bx_*$ and the Bayes risk $\CR[\widehat h_\text{S-BLUE}(\widehat \BY_{1:N})]$.}

 Compute $K_i=\CC_i(T_{1:M}^{\TP},T_{1:M}^{\TP})$ by definition for $i=0,1$.\
 
 For $j=1,\cdots,N^{\TP}$, compute $\widehat y_j^{\TP}$ using Algorithm \ref{Temporal:WGPLRT}.\
 
 For $j=1,\cdots,N^{\TI}$, compute $\widehat y_j^{\TI}$ using Algorithm \ref{Temporal:NLRT}.\
  
  Compute $\widehat y_*$ and $\CR[\widehat h_\text{S-BLUE}(\widehat \BY_{1:N})]$ using Algorithm \ref{SBLUE:algo}.\
\end{algorithm}

\begin{remark}[Computational cost of Algorithm~\ref{Overall:algo}]\label{Remark: computational cost}
Let us denote the computational cost of the optimization in the Laplace approximation in Line~\ref{Temporal:WGPLRT-offline} of WGPLRT as $\CT_{\text{opt}}$. 
Moreover, let us denote the computational cost of generating a sample of integral observations, the computational cost of the summary statistics of each sample of integral observations, and the computational cost of each pairwise distance $d(\cdot,\,\cdot)$ in NLRT as $\CT_{\text{samp}}$, $\CT_{\text{summ}}$, and $\CT_{\text{dist}}$, respectively.  
Recall that $N$ denotes the total number of sensors, $M$ denotes the number of point observations at each P-sensor, and $J$ denotes the number of generated samples in NLRT. 
Then, the computational costs incurred at each P-sensor, each I-sensor, and the FC in Algorithm~\ref{Overall:algo} are given as follows.
\begin{itemize}
\item At each P-sensor, the \textbf{offline phase} costs $O(\CT_{\text{opt}}+M^3)$ and
the \textbf{online phase} costs $O(M^2)$ for each time-series of point observations.
\item At each I-sensor, the \textbf{offline phase} costs $O(J(\CT_{\text{samp}}+\CT_{\text{summ}}))$ and the \textbf{online phase} costs $O(\CT_{\text{summ}}+J\CT_{\text{dist}})$ for each time-series of integral observations.
\item At the FC, the \textbf{offline phase} costs $O(N^3)$ and the \textbf{online phase} costs $O(N)$ for each set of binary decisions from the P-sensors and I-sensors.
\end{itemize}
See \ref{proof: computational complexity} for the detailed analyses of the computational cost.

\end{remark}

\section{Experiments with Synthetic Data}\label{Sec:Experiments Synthetic}
We conduct two experiments with synthetic data to study the performance of the proposed method in Algorithm \ref{Overall:algo}. Section \ref{Section:synthetic overall} demonstrates the ability of our proposed method to reconstruct the binary spatial random field. In Section \ref{Section:synthetic sensitivity}, we perform a sensitivity analysis on the two LRT algorithms when the noise variance and the number of observations vary. To simplify the nomenclature, from now on, we say a process is warped by a random variable when the warping function is of the form $W=F^{-1}\circ \Phi$, where $F$ is the CDF of that random variable.

The proposed algorithm is compared to the $k$-Nearest Neighbors (KNN) algorithm.
The KNN algorithm takes the binary decisions $\widehat \BY_{1:N}$ from the sensors and assigns an un-monitored spatial location $\Bx_*$ to a class, i.e., $y(\Bx_*)=1$ or $y(\Bx_*)=0$, based on a plurality vote among the $k$ nearest neighbors (i.e., sensors) of $\Bx_*$. 
If the value of $k$ and the distance metric are fixed beforehand (as opposed to, e.g., using a cross-validation scheme to determine them), then the $k$ nearest neighbors of $\Bx_*$ can be determined in the \textbf{offline phase} and subsequently the decision $y(\Bx_*)$ is computed in the \textbf{online phase} once the sensors transmit their binary decisions.
Hence, assuming that the computational cost of evaluating each pairwise distance is $O(1)$, the computational cost of KNN is given by $O(N+k\log N)$ in the \textbf{offline phase} and $O(k)$ in the \textbf{online phase} for each set of binary decisions from the P-sensors and the I-sensors; see \ref{proof: computational complexity} for the detailed analysis.
Despite that the computational cost of KNN in the offline phase is lower than the computational cost $O(N^3)$ of S-BLUE in the offline phase (see Remark~\ref{Remark: computational cost}), we would like to remark that KNN is a heuristic method that does not use the correlation structure of the spatial random field $y(\cdot)$.
Moreover, as discussed in Remark~\ref{Remark: computational cost}, the online phase of S-BLUE costs only $O(N)$, which makes it highly efficient.

Besides the KNN algorithm, we also compare the proposed algorithm to an \textit{unattainable} benchmark called the \textit{oracle}, which is able to access the values of the latent Gaussian Process, i.e., $g(X_{1:N})$, that are assumed to be unobservable. 
The oracle reconstructs the latent Gaussian Process at an un-monitored spatial location $\Bx_*$ using Gaussian Process regression (see\textit{,} e.g., \cite[Section~2.2]{C.E.Rasmussen2006}), that is, it estimates $g(\Bx_*)$ by
\[
\widehat g_{\text{oracle}}:= \mu(\Bx_*)+\CC(\Bx_*, X_{1:N})\CC(X_{1:N}, X_{1:N})^{-1}(g(X_{1:N})-\mu(X_{1:N})),
\]
where $\mu(\cdot)$ is the mean function and $\CC(\cdot, \cdot)$ is the covariance function of the latent Gaussian Process.
Subsequently, the oracle reconstructs the spatial field by $\widehat{y}_{\text{oracle}}:=\INDI_{\{\widehat{g}_{\text{oracle}}\geq c\}}$.

\subsection{Experiment 1: Evaluation on Synthetic Datasets -- Spatial Field Reconstruction}\label{Section:synthetic overall}
We first examine the ability of the proposed method to reconstruct the binary spatial random field using synthetic datasets. 
In this experiment, we consider a sensor network deployed over the geographical region $[-5,5] \times [-5,5] \subset \R^2$. 
Within this region, we create a grid of $50\times 50$ evenly spaced spatial locations. 
Out of these $2500$ spatial locations, we sample $250$ locations uniformly at random without replacement to be monitored by sensors. 
These sensor locations are fixed\footnote{We have conducted the same experiments with different configurations of sensor locations and we have found that the locations of the sensors do not significantly impact the experimental results.}
throughout this experiment.
The remaining 2250 un-monitored locations are used to evaluate the performance of our algorithms. 
Subsequently, each realization of the synthetic dataset is independently generated via the following three-step procedure.

\underline{Step~1: generation of the binary spatial random field.}
The spatial random field $g$ is randomly generated from a GP with mean $0$ and the Squared Exponential covariance function $\CC_{l_g,s_g}(\cdot,\cdot)$ (see\textit{,} e.g., \cite[Section~4.2.1]{C.E.Rasmussen2006}), where the length-scale $l_g=1/2$ and the scale $s_g=1$, i.e., 
\begin{equation*}
\CC_{l_g,s_g}(\Bx,\Bx')= s_g^2 \exp\left(-\frac{(\Bx-\Bx')^\TRANSP (\Bx-\Bx')}{2l_g^2}\right).
\end{equation*}
The generated random field $g$ is then warped by a $\mathrm{Bernoulli(\pi)} $, where $\pi=0.5$, i.e., $c=\Phi^{-1}(1-\pi)=0$ (see details in Remark \ref{Remark: spatial field}) to generate the binary spatial field $y$ over the $50\times 50$ spatial locations.

\underline{Step~2: generation of the temporal processes at sensor locations.}
At each sensor location $\Bx$, based on the value $y(\Bx)$ of the binary spatial field, a temporal process $f(\cdot;\Bx)$ is generated according to \ref{Temporal:LatentProcess}, where the corresponding mean functions are equal to zero and the covariance functions $\CC_0, \CC_1$ are the Mat\'ern covariance functions (see\textit{,} e.g., \cite[Section~4.2.1]{C.E.Rasmussen2006}) with $\nu=1/2,5/2$, respectively. That is, for any $t,t'\in[0,T]$, we have
\begin{align}
&\,\CC_0(t,t')= s_f^2\exp\left(-\frac{r}{l_f}\right), \label{Synthetic:hypothese C0}\\
&\,\CC_1(t,t')= s_f^2\left(1+\frac{\sqrt{5}r}{l_f}+\frac{5r^2}{3l_f^2}\right)\exp\left(-\frac{\sqrt{5}r}{l_f}\right), \label{Synthetic:hypothese C1}
\end{align}
where the signal variance $s_f^2=1$, the length-scale $l_f=1$, and $r:=|t-t'|$ is the distance between $t$ and $t'$.

In either case, the temporal process is warped by a random variable following Tukey's $g$-and-$h(g, h, l, s)$ distribution where $g=0.1, h=0.4$, $l=1$, and $s=1$; see\textit{,} e.g., \cite{Zhang2019}. Note that a random variable $Y$ follows $g$-and-$h(g, h, l, s)$ if
\[
Y = l+ s\left(\frac{\exp(gZ)-1}{g}\right)\exp(hZ^2),
\]
where $Z\sim \CN(0,1)$ is a standard normal random variable.

\underline{Step~3: generation of the sensor observations.} Out of the 250 sensor locations, we randomly select half to deploy P-sensors, 
whereas I-sensors are deployed at the other half of the sensor locations, that is, $N^\TP = N^\TI = 125$. 
The P-sensors collect point observations at $M$ time points equally spaced over $[0, 20]$ and the I-sensors collect integral observations over $K$ consecutive time intervals of equal lengths over $[0,20]$, where the point observations of the P-sensors are randomly generated according to \ref{Temporal: PointObservation}, and the integral observations of the I-sensors are randomly generated according to \ref{Temporal: IntegralObservation}.

For each realization of the synthetic dataset, our objective is to reconstruct the values of the binary spatial field $y(\cdot)$ at the remaining $2250$ un-monitored locations from the point and integral observations $( \BZ^{\TP}_{1:N^{\TP}},\BZ^{\TI}_{1:N^{\TI}})$ collected from the sensors.

To study the performance of the proposed algorithms for spatial field reconstruction, we set $M=K=50$ and set the noise standard deviation of Point and Integral sensors to $\sigma_{\TP}=\sigma_{\TI}=0.1$.
For NLRT, we choose the summary statistics to be the autocorrelations (ACF) with lags 1 to 4 (see\textit{,} e.g., \cite[Section~2.1]{box2015}), and we choose the distance measure to be the Euclidean distance. 
Moreover, we set $\delta^{\TI}=0.1$, $\epsilon^{\TI}=0.1$, and we set the number of generated samples to $J=10000$.
Then, the LRT thresholds are set to $\gamma^{\TP}=\exp(176.9204)$ and $\gamma^{\TI}=\exp(-0.2555)$, such that the False Positive Rate (FPR) of each sensor is controlled by the significance level $\alpha=0.1$.  Finally, at each sensor location, we set the transition matrix (see \eqref{LRT:noisy channel}) to either 
\begin{align*}
U^{\TP}=\begin{pmatrix}
  0.9022 & 0.0978 \\
0.1772 & 0.8228
\end{pmatrix}
\quad\text{or}\quad
U^{\TI}=\begin{pmatrix}
    0.9016 & 0.0984 \\
0.0920 &  0.9080
\end{pmatrix}
\end{align*}
depending on the sensor type. The test thresholds $\gamma^{\TP}, \gamma^{\TI}$ and the transition matrices $U^{\TP}, U^{\TI}$ are computed via simulation beforehand, where we estimate the CDFs of the approximated test statistics under $\CH_0$ and $\CH_1$ via Monte Carlo}.

Under the parameters specified above, Figure~(\ref{fig:true field}) and (\ref{fig:reconstructed field}) show the true binary spatial field and the reconstructed binary spatial field of a single realization of the synthetic dataset, respectively. Table~\ref{table:synthetic F1} presents the average performance of our proposed approach and other algorithms based on 100 realizations of the synthetic dataset generated from the aforementioned three-step procedure. The table includes columns for the average mean-square-error (MSE), F1 score\footnote{The F1 score is given by 2TP/(2TP + FP + FN), where TP, FP, and FN stand for the numbers of true positive, false positive, and false negative cases, respectively.}, false positive rate (FPR), and true positive rate (TPR). Each row corresponds to a specific method: the oracle, our proposed method (S-BLUE), S-BLUE using only point observations from the P-sensors, S-BLUE using only integral observations from the I-sensors, and KNN. Overall, we observe that our proposed method outperforms KNN and achieves slightly worse results compared to the oracle. Furthermore, Table~\ref{table:synthetic F1} suggests that utilizing both Point and Integral sensors for reconstruction purposes yields superior results compared to the case where only one single type of sensor is utilized. Finally, Table~\ref{table:synthetic time} shows the average computational time of the algorithms over 100 realizations. We observe that S-BLUE is indeed a highly efficient algorithm.


To further study the impacts of the noise variance when observing the temporal processes, we set $M=K=50$ and $\sigma_{\TP}=\sigma_{\TI}$ and plot the average MSE, F1 score, FPR, and TPR over $100$ realizations against the noise variance $\sigma^2_{\TP}=\sigma^2_{\TI}$ in Figure~(\ref{fig:sn}). The test thresholds and transition matrices are computed via simulation beforehand for each value of the noise variance. 
Note that these average metrics are computed based on 100 realizations of the synthetic dataset generated via the three-step procedure described above for each value of noise variance. When the noise variance becomes smaller, observe that the FPR stays relatively constant because the significance level is controlled to be around $0.1$. In contrast, notice that the TPR increases, which leads to an improvement in the F1 score and MSE. 

\begin{figure}[htp]
     \centering
     \begin{subfigure}[b]{0.328\textwidth}
         \centering
         \includegraphics[width=1\textwidth]{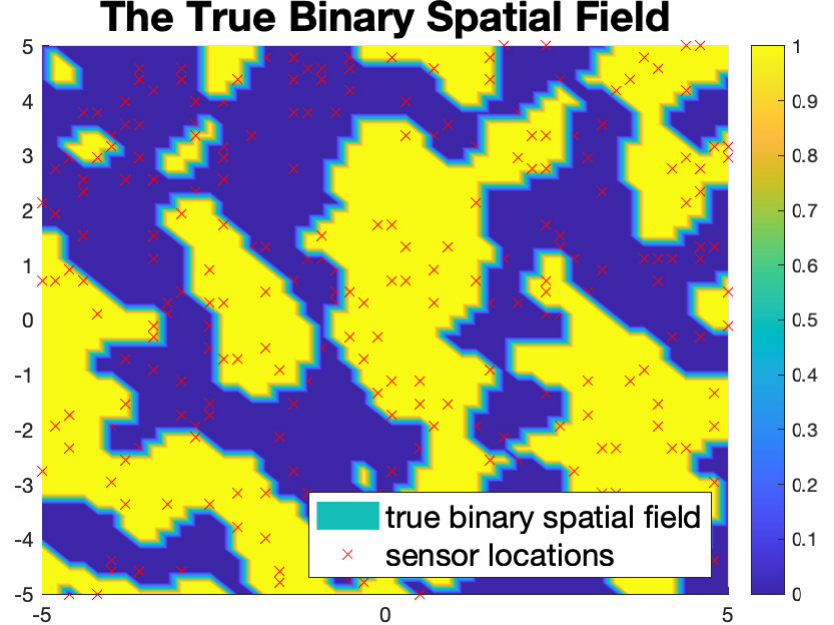}
         \caption{}
         \label{fig:true field}
     \end{subfigure}
     \hfill
     \begin{subfigure}[b]{0.328\textwidth}
         \centering
         \includegraphics[width=1\textwidth]{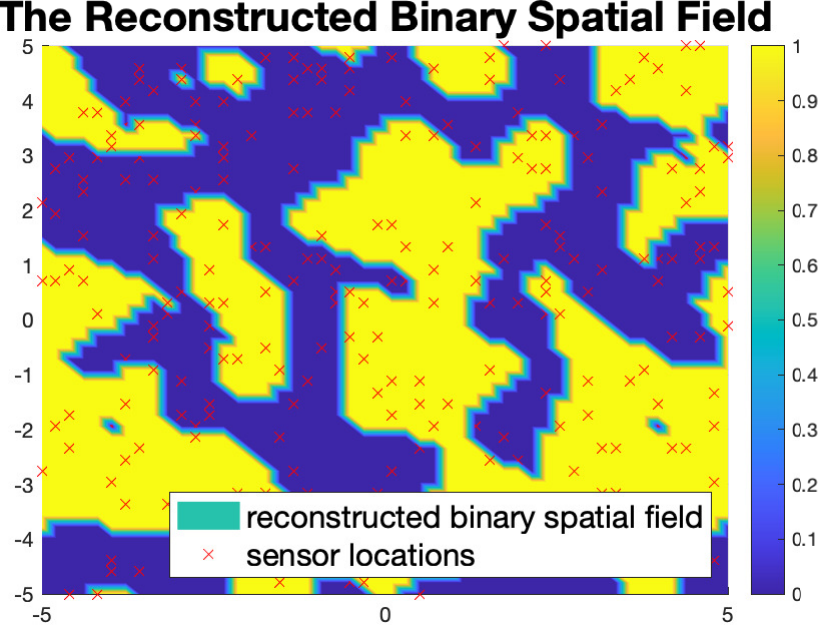}
         \caption{}
         \label{fig:reconstructed field}
     \end{subfigure}
     \hfill
     \begin{subfigure}[b]{0.328\textwidth}
         \centering
         \includegraphics[width=1\textwidth]{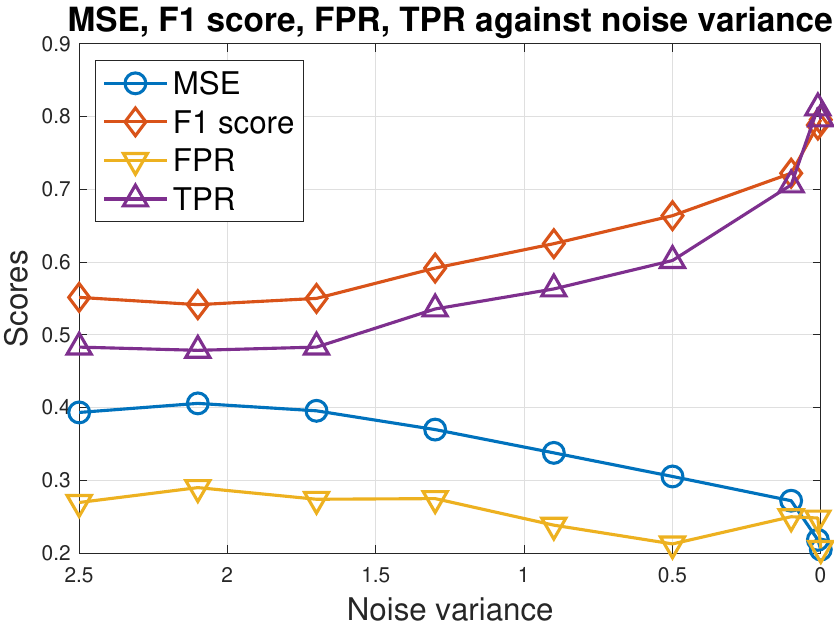}
         \caption{}
         \label{fig:sn}
     \end{subfigure}
        \caption{Experiment 1 -- Figure~(\ref{fig:true field}) represents the true binary spatial field. In the figure, the red crosses indicate the sensor locations, the blue area represents the spatial locations where the values of the true binary spatial random field are ``0", and the yellow area represents the spatial locations where the values are ``1". Figure~(\ref{fig:reconstructed field}) visualizes the reconstructed binary spatial field from our algorithm. Figure~(\ref{fig:sn}) shows the average MSE, F1 score, FPR, and TPR over $100$ realizations against the noise variance.}
        \label{fig:simulated data}
\end{figure}

\begin{table}[h]
\centering
\begin{tabular}{|c |c | c |c|c|c|} 
 \hline
  Algorithm& MSE & F1 score & FPR & TPR \\ [0.5ex] 
  \hline 
  Oracle & 0.1201& 0.8799& 0.1224 & 0.8800 \\
 \hline
 S-BLUE& 0.2578&  0.7412 &0.2595  &  0.7400 \\
 \hline
 S-BLUE (pt. obs.) & 0.3259 & 0.6685& 0.3147&0.6591\\
 \hline
 S-BLUE (int. obs.) & 0.3067&0.6946 &0.3188 & 0.7005\\
 \hline 
  KNN &   0.3073 & 0.6846 &0.2970   &  0.6739  \\
 \hline
\end{tabular}
\caption{Experiment 1 -- Average MSE, F1 score, FPR, TPR over $100$ realizations.}
\label{table:synthetic F1}
\end{table}

\begin{table}[h]
\centering
\begin{tabular}{|c |c | c |c|c|c|} 
 \hline
  Algorithm& Offline phase & Online phase\\ [0.5ex] 
 \hline
 S-BLUE&$4.912\times 10^{-1}$&$3.161 \times 10^{-4}$\\
 \hline
 S-BLUE (pt. obs.) & $1.244 \times 10^{-1}$ & $1.716 \times 10^{-4}$\\
 \hline
 S-BLUE (int. obs.) &$1.239\times 10^{-1}$&$1.724 \times 10^{-4}$\\
 \hline 
  KNN &   $3.965\times 10^{-3}$ & $4.818\times 10^{-3}$   \\
 \hline
\end{tabular}
\caption{Experiment 1 -- Average computational time (in seconds) over 100 realizations.}
\label{table:synthetic time}
\end{table}

\subsection{Experiment 2: Evaluation on Synthetic Datasets -- Sensitivity Analyses on LRTs}\label{Section:synthetic sensitivity}
In the second experiment, we perform a sensitivity analysis on the two LRT algorithms by varying  the number of observations as well as the noise variance. We aim to differentiate between the same hypotheses defined in \eqref{Synthetic:hypothese C0} and \eqref{Synthetic:hypothese C1}. 
Except for the number of observations $M$, $K$ and the noise variances $\sigma_{\TP}^2$, $\sigma_{\TI}^2$, all other parameters in the two LRTs are set up as in Subsection~\ref{Section:synthetic overall}. Figure~(\ref{fig:M}) examines the impact of the number of observations on the Receiver Operating Characteristic (ROC) curves of WGPLRT and NLRT. We evaluate the performance using the Area Under Curve (AUC) metric under two different noise standard deviation values: $\sigma_\TP = \sigma_\TI = 0.1$ and $\sigma_\TP = \sigma_\TI = 0.01$. Notably, the AUC of WGPLRT and NLRT increases as the number of observations increases. In addition, we study the effect of the noise variance on both WGPLRT and NLRT in Figure~(\ref{fig:AUCSN}) when the number of observations is again set to be $M=K=50$. Unsurprisingly, the AUC of both WGPLRT and NLRT increases as the noise variance decreases.

\begin{figure}[t]
     \centering
     \begin{subfigure}[b]{0.49\textwidth}
         \centering
         \includegraphics[width=1\textwidth]{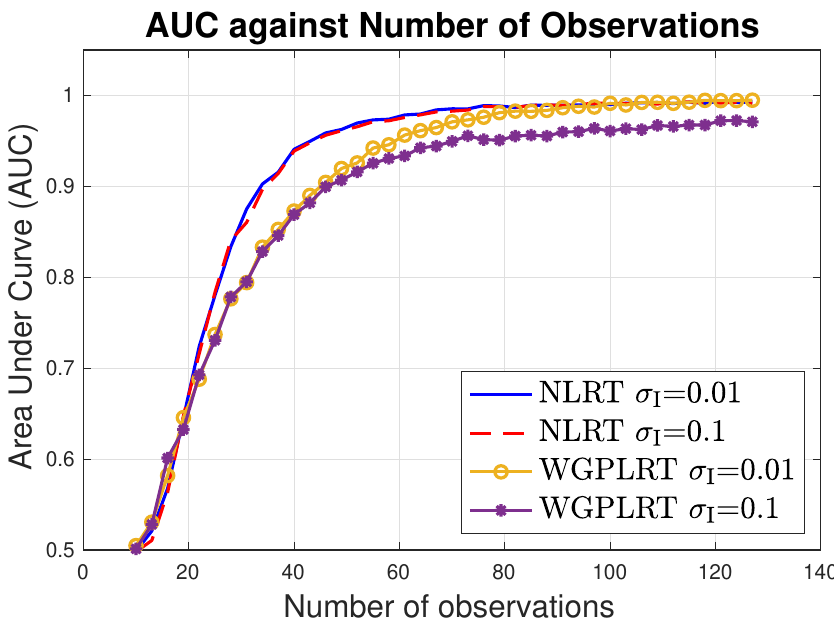}
         \caption{}
         \label{fig:M}
     \end{subfigure}
     \begin{subfigure}[b]{0.49\textwidth}
         \centering
         \includegraphics[width=1\textwidth]{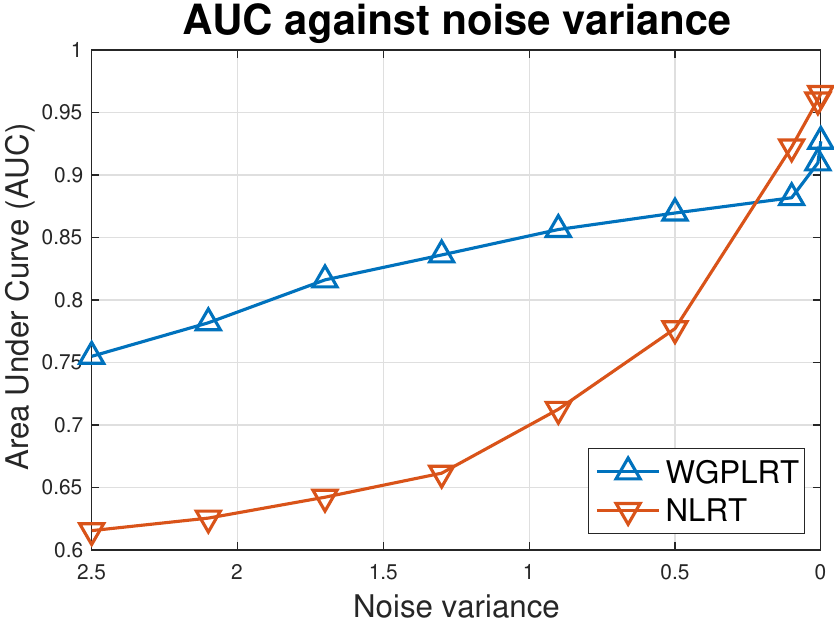}
         \caption{}
         \label{fig:AUCSN}
     \end{subfigure}
        \caption{Experiment 2 -- Figure~(\ref{fig:M}) shows the effect of the number of observations on the AUC of WGPLRT and NLRT. Figure~(\ref{fig:AUCSN}) shows the AUC of WGPLRT and NLRT against the noise variance when $M=K=50$.} 
        \label{fig:Simulated}
\end{figure}

\section{Evaluation on a Real Dataset}\label{Sec:Experiments Semisynthetic}
In the real-world experiments, we study a weather dataset\footnote{\url{https://data.gov.sg/search?groups=environment}, retrieved on 7 January 2022.} from the National Environment Agency (NEA) of Singapore. The dataset contains hourly measurements of Temperature, Wet Bulb Temperature, Dew Point Temperature, Scalar Mean Wind Direction, Relative Humidity, Scalar Mean Wind Speed, and Sea-Level Pressure at $21$ weather stations. The measurements are mainly between 2010 to 2018, yet their starting dates and ending dates vary depending on the stations. Among the $21$ stations, $5$ stations (with station number $\mathrm{S06, S23, S24, S25, S80}$) are installed as early as 2005 and they also take hourly measurements of Total Rainfall and Cloud Cover. The spatial locations of the weather stations are visualized in Figure~(\ref{fig:stns}).

In this section, we aim to synthetically generate sensor observations based on the dataset and reconstruct the binary spatial random field at other spatial locations over Singapore where no weather stations are deployed.

\subsection{Data Preprocessing}
To begin with, we observe that some weather stations only take measurements between 5:00~a.m.\ to 23:00~p.m., hence all measurements outside this period are removed for consistency. Furthermore, to reduce seasonal variation, we take the data over $17$ weeks during the Southwest Monsoon Season\footnote{See details in \url{http://www.weather.gov.sg/climate-climate-of-singapore}.} in 2012, i.e., from 03/06/2012 (Sunday) to 29/09/2012 (Saturday). Next, we select the fields of interest to represent the binary spatial random field and the temporal processes. For the binary spatial random field, we take the average weekly relative humidity as the latent GP and the constant threshold $c$ is taken to be the median of all data. Figure~\ref{fig:spatial} shows the histogram and normal Q-Q plot of the average weekly relative humidity. Observe that the normality assumption holds. For the temporal processes, hourly measurements of temperature are used. For each hour from 5:00~a.m.\ to 23:00~p.m., the average hourly temperature over all weather stations is subtracted from the measurements to center the measurements. The processed measurements are named as \textit{centered temperatures~(CT)}. Then, they are further partitioned into two subsets based on the level of the average weekly relative humidity (i.e., over or below the constant threshold $c$) to represent the temporal processes under $\CH_0$ and $\CH_1$. Figure~\ref{fig:temporal histogram} shows the density plots of the centered temperatures from the two subsets. We notice that the data are left-skewed and the transformed data under $x \mapsto 10 -x$ can be modeled using Gamma distributions for suitable hyperparameters fitted from data; see also Table~\ref{table:temporal hyper}. Figure \ref{fig:temporal QQplot} presents the Q-Q plots of the transformed data from the two subsets against the fitted Gamma distributions, respectively.

\begin{figure}[t]
     	\centering
     \begin{subfigure}[b]{0.49\textwidth}
         \centering
         \includegraphics[width=1\textwidth]{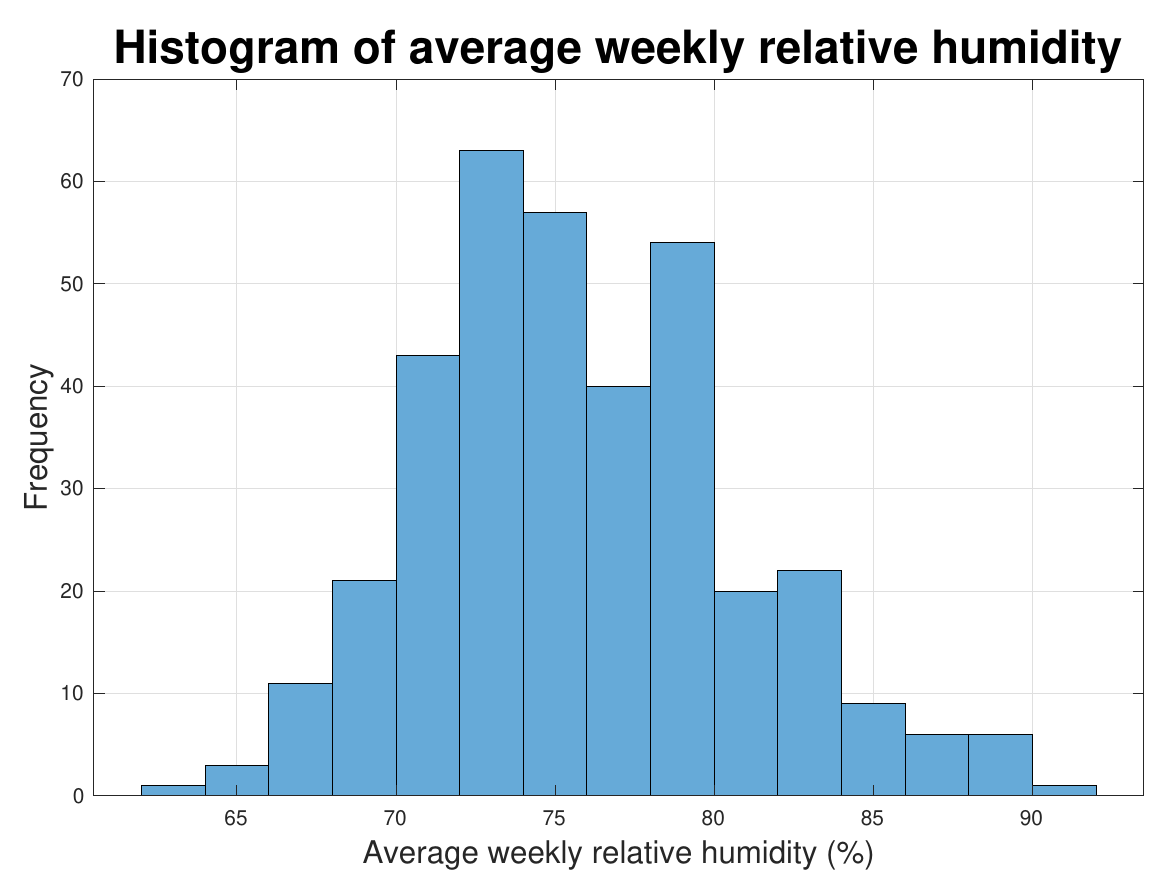}
          \label{fig:spatial QQplot}
     \end{subfigure}
     \hfill
     \begin{subfigure}[b]{0.49\textwidth}
         \centering
         \includegraphics[width=1\textwidth]{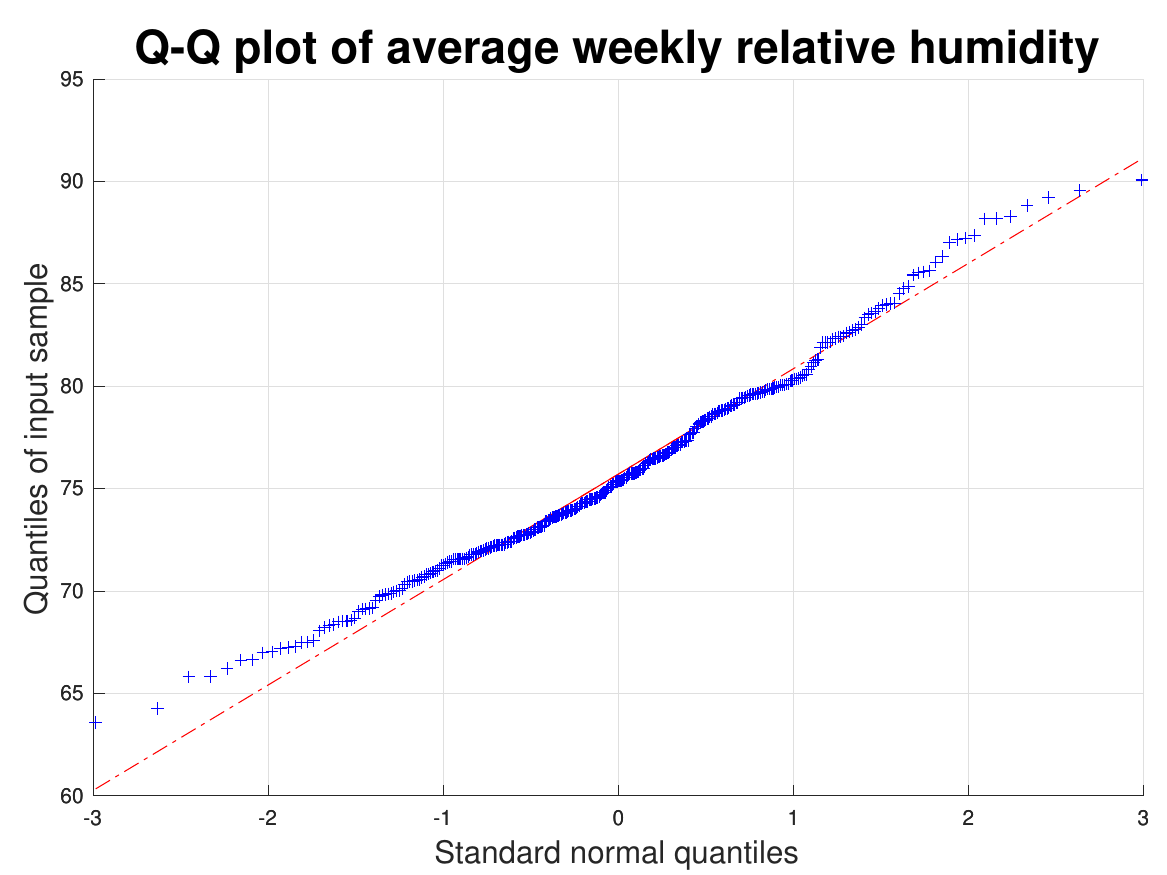}
         \label{fig:spatial histogram}
     \end{subfigure}
        \caption{Real-world experiment -- Histogram and normal Q-Q plot of the average weekly relative humidity.}
        \label{fig:spatial}
\end{figure}

\begin{figure*}[t]  
\begin{subfigure}{0.49\textwidth}
\includegraphics[width=\linewidth]{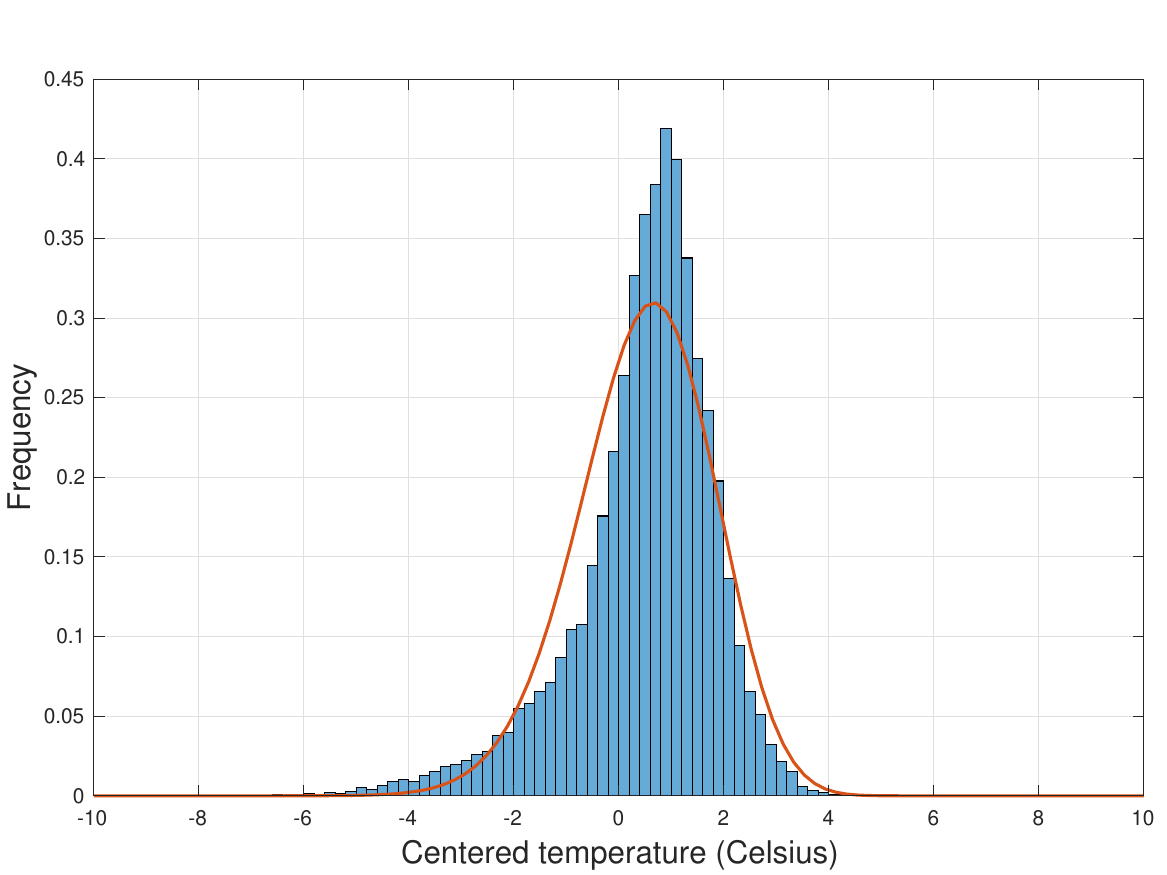}
\caption{}
\label{fig:histogram temperature below}
\end{subfigure}
\hfill
\begin{subfigure}{0.49\textwidth}
\includegraphics[width=\linewidth]{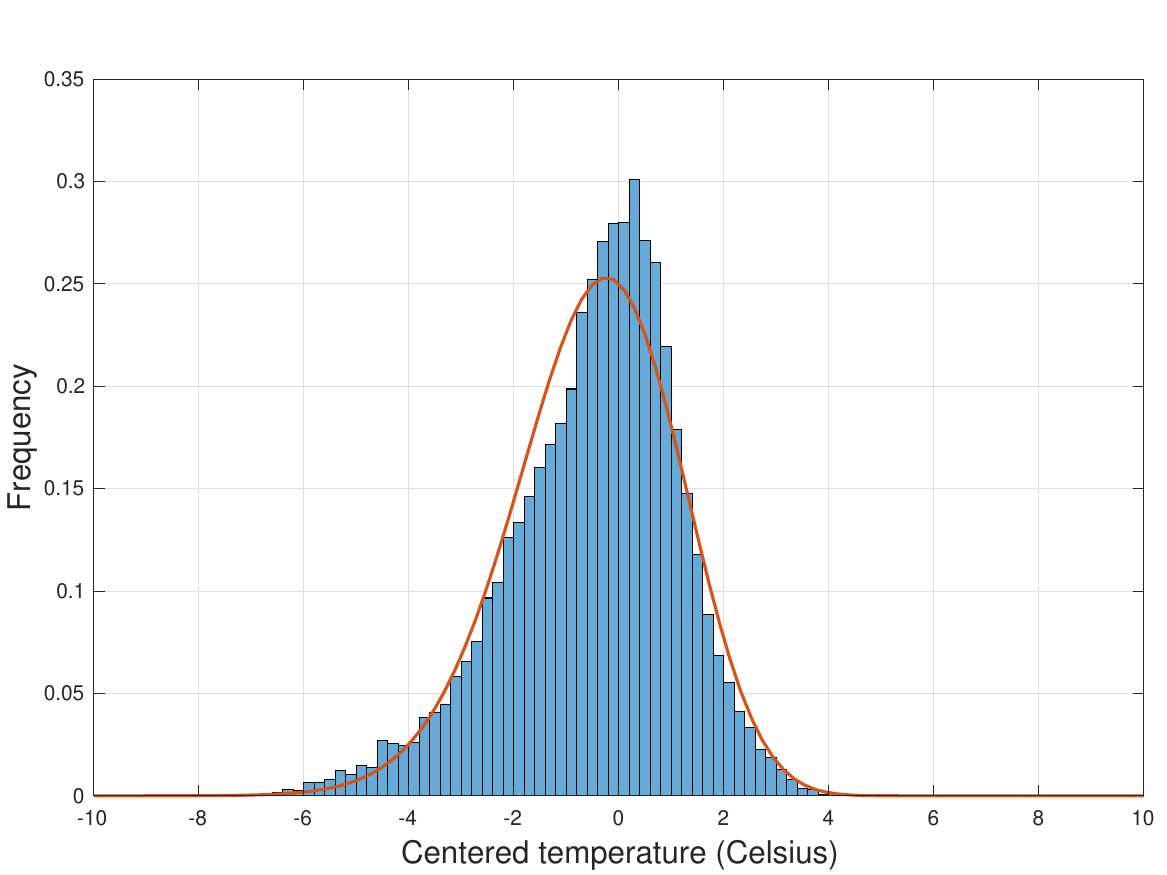}
\caption{}
\label{fig:histogram temperature above}
\end{subfigure}
\caption{Real-world experiment --  Figure~(\ref{fig:histogram temperature below}) shows the density plot of the centered temperatures when the average weekly relative humidity is below the constant threshold $c$. Figure~(\ref{fig:histogram temperature above}) shows the histogram of the centered temperatures when the average weekly relative humidity is above the constant threshold $c$. The orange-colored curves in both Figure~(\ref{fig:histogram temperature below}) and Figure~(\ref{fig:histogram temperature above}) indicate the flipped and shifted graphs of the probability density functions of the fitted Gamma distributions, respectively.} 
\label{fig:temporal histogram}
\end{figure*}

\begin{figure*}  
\begin{subfigure}{0.49\textwidth}
\includegraphics[width=\linewidth]{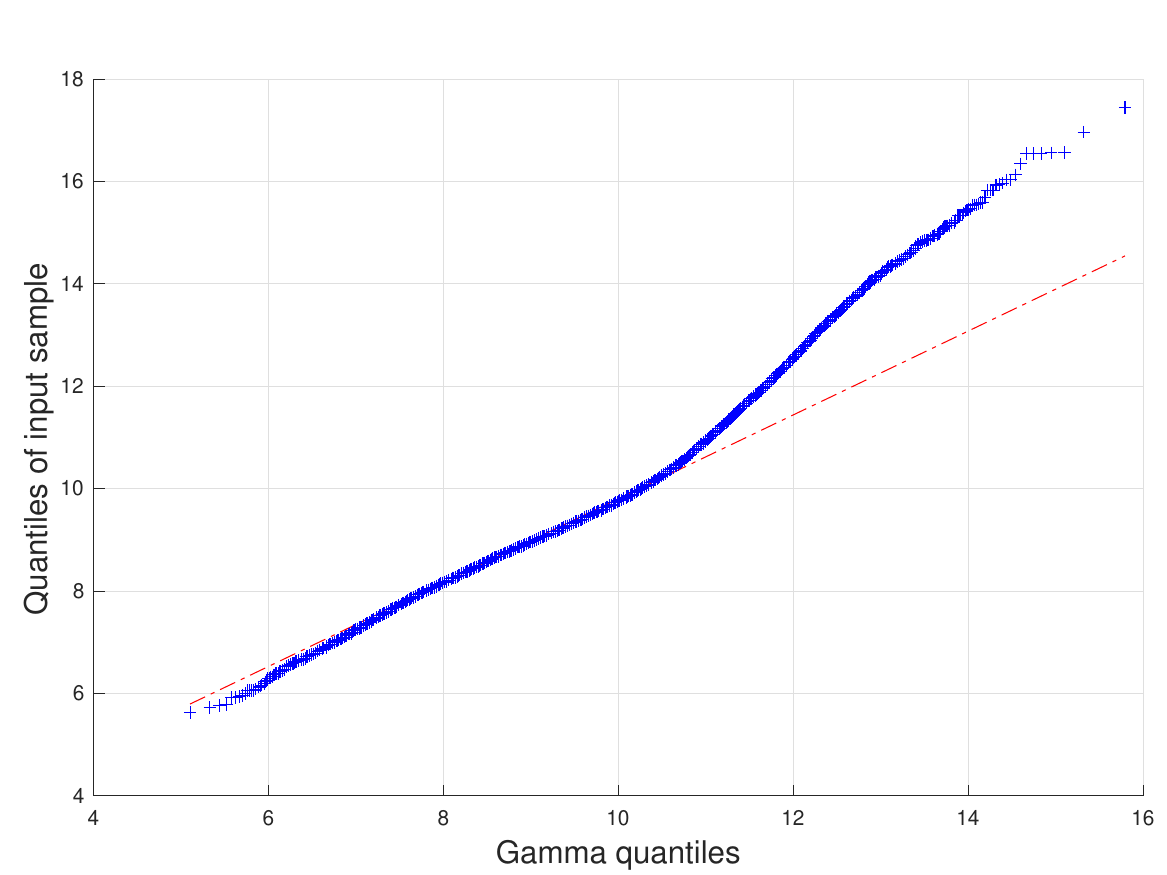}
\caption{}
\label{fig:QQplot gamma temperature below}
\end{subfigure}
\hfill
\begin{subfigure}{0.49\textwidth}
\includegraphics[width=\linewidth]{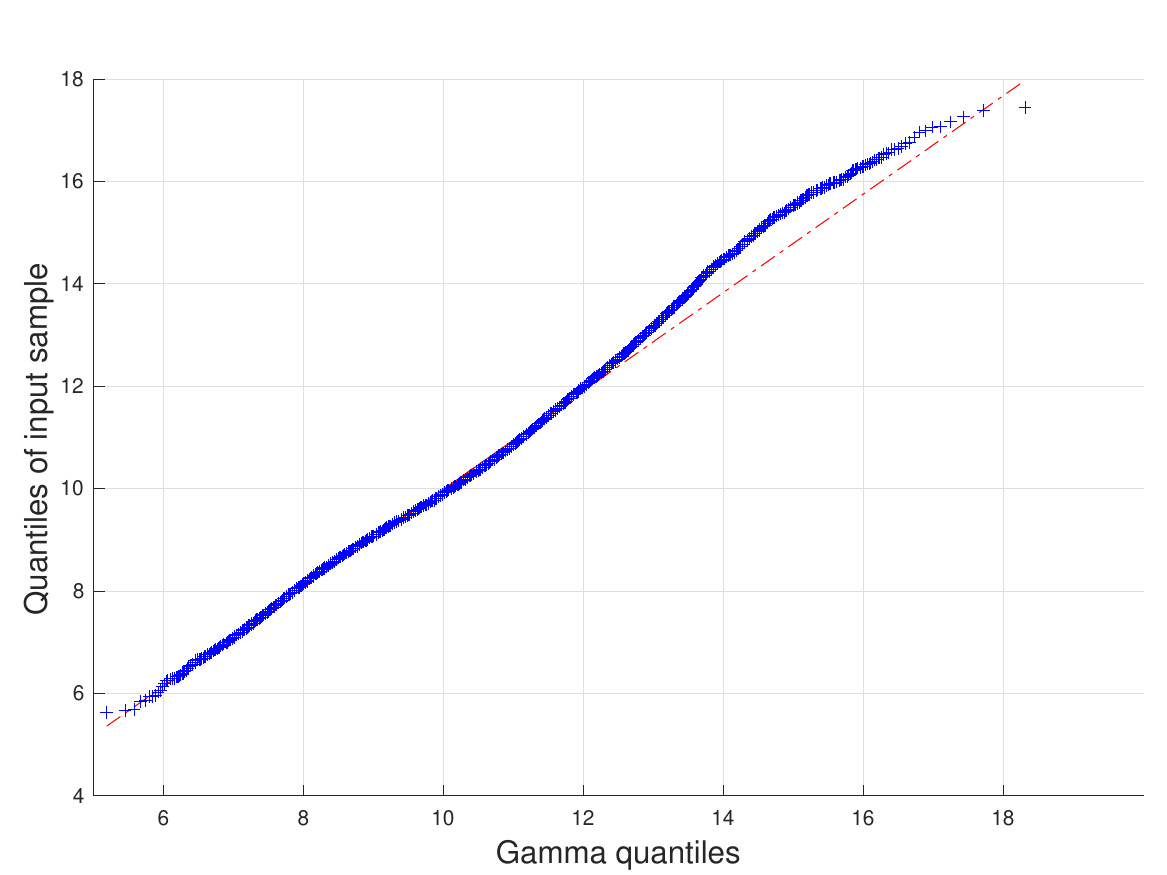}
\caption{}
\label{fig:QQplot gamma temperature above}
\end{subfigure}
\caption{Real-world experiment -- Figure~(\ref{fig:QQplot gamma temperature below}) presents the Q-Q plot of the transformed centered temperatures when the average weekly relative humidity is below the constant threshold $c$ against the fitted Gamma distribution.  Figure~(\ref{fig:QQplot gamma temperature above}) shows the corresponding Q-Q plot when the average weekly relative humidity is above the constant threshold $c$.} 
\label{fig:temporal QQplot}
\end{figure*}

\subsection{Model Selection}\label{Real:MS}
After determining the spatial and temporal fields of interest, we model the measurements of the average weekly relative humidity as noisy samples from a GP with a constant mean function and a Mat\'ern covariance function with $\nu=5/2$ defined in \eqref{Synthetic:hypothese C1} with $r \leftarrow \|x-x'\|$ for $x, x' \in \CX$. For each week from 03/06/2012 to 29/09/2012, we estimate the signal mean, signal standard deviation, noise standard deviation, and length-scale via \textit{maximum a posterior} (MAP) estimation (see Chapter~2 of \cite{C.E.Rasmussen2006} for details) based on the measurements from the $21$ weather stations. The prior distributions of the logarithms of the signal standard deviation, length-scale, and noise standard deviation are $\CN(-2, 0.1)$, $\CN(\log{3.8}, 0.1)$, $\CN(\log{0.1}, 0.01)$, respectively. The median of the estimated values over $17$ weeks are then taken as the final hyperparameters of the GP. The estimated hyperparameters are shown in Table~\ref{table:spatial hyper}. 

A similar approach is also applied to specify the model for the centered temperatures (CT). For each set of measurements, we model the data after the transformation $x\mapsto 10-x$ as noise-free samples from a GP warped by a Gamma random variable. First, the hyperparameters for the Gamma distribution are estimated based on the measurements from the $21$ weather stations. Next, the data are transformed by $z \mapsto \Phi^{-1}\circ F (z)$, where $F$ is the CDF of the fitted Gamma distribution. Then, the transformed data are modeled by a GP in which the mean function is constant zero and the covariance function is again the Mat\'ern covariance function with $\nu=5/2$ defined in \eqref{Synthetic:hypothese C1}. The estimated hyperparameters based on the transformed measurements over all sensors via \textit{maximum marginal likelihood estimation} (see Chapter 5 of \cite{C.E.Rasmussen2006} for details) are shown in Table~\ref{table:temporal hyper}.

\begin{table}[h]
\centering
\begin{tabular}{ |c | c |c|c|c|} 
 \hline
  threshold $c$ & signal mean & signal std. dev & length-scale & noise std. dev\\ [0.5ex] 
 \hline
 75.3692& 75.0566 & 5.3068 & 0.0344 & 0.1000\\
 \hline 
\end{tabular}
\caption{Real-world experiment -- Estimated GP hyperparameters }
\label{table:spatial hyper}
\end{table}

\begin{table}[h]
\centering
\begin{tabular}{|c |c | c |c|c|} 
 \hline
  &$\mathrm{Gamma}(a, b)$& GP mean & GP std. dev & GP length-scale\\ [0.5ex] 
 \hline
$\CH_0$ & $(53.7457, 0.1771)$ & 0 & 1 & 3.7622\\
\hline
$\CH_1$ & $(43.3694, 0.2417)$ & 0 & 1 & 4.0654\\
 \hline 
\end{tabular}
\caption{Real-world experiment -- Estimated WGP hyperparameters}
\label{table:temporal hyper}
\end{table}

\subsection{Experiment Setting}\label{Subsection: Experiment Setting}
In this experiment, we create a $50 \times 50$ grid of spatial locations over Singapore to evaluate the performance of our algorithms. 
Subsequently, we randomly generate each realization of the dataset via the following three-step procedure.

\underline{Step~1: generation of the binary spatial random field.}
The latent spatial random field is a GP with a constant mean function and a Mat\'ern covariance function with $\nu=5/2$, where the hyperparameters are specified in Table~\ref{table:spatial hyper}. The process is then warped by a Bernoulli random variable to generate the binary spatial random field over both the $50 \times 50$ grid of spatial locations and the $21$ weather station locations.

\underline{Step~2: generation of the temporal processes at the weather stations.}  At each weather station, based on the value of the binary spatial field, a temporal process is generated according to \ref{Temporal:LatentProcess}. The corresponding mean functions are equal to zero and the covariance functions are Mat\'ern covariance functions with $\nu=5/2$, where the hyperparamters are specified in Table~\ref{table:temporal hyper}. Then, the GP is warped by a Gamma random variable with hyperparameters presented in Table~\ref{table:temporal hyper}.

\underline{Step~3: generation of the sensor observations.} We equip all $21$ weather stations with P-sensors, i.e., $N^\TP = 21, N^\TI= 0$. Each weather station collects hourly measurements of the temporal process from $5:00$ a.m. to $23:00$ p.m., $7$ days a week. For simplicity, we set $T=133$.

To evaluate the performance of the proposed algorithms in reconstructing the spatial field, we set $M=133$ and $\sigma_{\TP} =0.1$.
Given the point observations, WGPLRT is applied with $\gamma^{\TP}=\exp(-6.5387)$ such that the significance level $\alpha$ is approximately $0.1$, which results in the following transition matrix  (see~\eqref{LRT:noisy channel})
\begin{equation}
U=\begin{pmatrix}
0.8996 & 0.1004 \\
0.0774 & 0.9226
\end{pmatrix}.
\end{equation}
Then, S-BLUE is used to predict the values of the binary spatial field at the $50 \times 50$ grid of spatial locations where no weather station is deployed.

\subsection{Result and Discussion}

Figure~\ref{fig:real} shows the spatial locations of the $21$ weather stations, the true binary spatial random field, and the reconstructed binary spatial field based on the point observations at the $21$ weather stations from a single realization of the dataset. Note that the values of the binary spatial field are not directly observed at the sensors. Instead, they are inferred using the temporal observations at the sensors. Consequently, there are instances, such as at sensor S23, where the value of the reconstructed binary spatial field differs from the true value. This discrepancy arises due to the estimation error that occurs during the first step of our proposed approach.
Figure~\ref{fig:real Latent} presents the true latent GP, the reconstructed latent GP, and the map of the Bayes risk given in \eqref{SBLUE:TheoreticalRisk} from that particular realization of the dataset. 
From Figure~(\ref{fig:real bayes risk}), we see that the Bayes risks are small around the weather stations and high over the regions where no weather stations are deployed. The average performance of the oracle, S-BLUE, and KNN over 100 realizations is presented in Table~\ref{table:real performance}. We see that S-BLUE achieves performance comparable to the oracle, showcasing the effectiveness of our proposed algorithm.

\begin{figure}[htp]
     \centering
     \begin{subfigure}[b]{0.328\textwidth}
         \centering
         \includegraphics[width=1\textwidth]{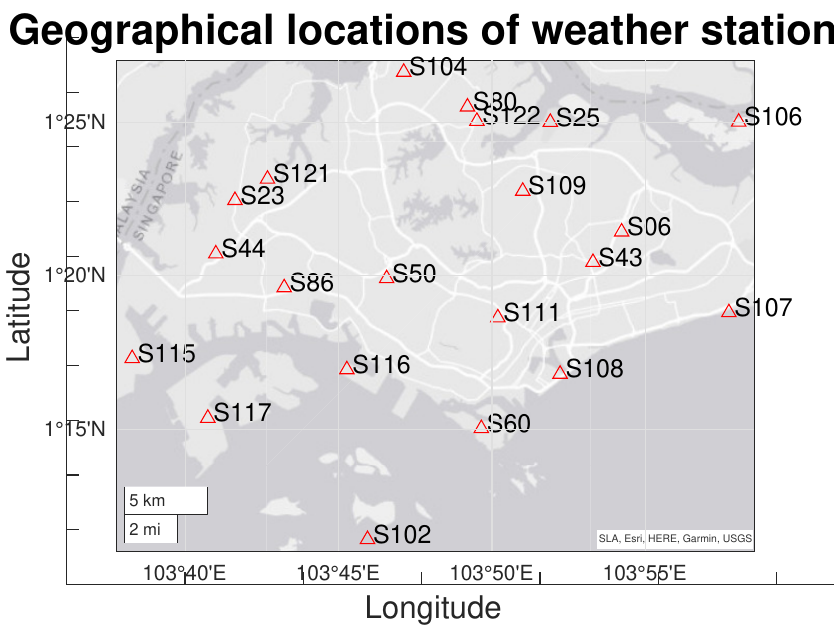}
         \caption{}
         \label{fig:stns}
     \end{subfigure}
     \hfill
     \begin{subfigure}[b]{0.328\textwidth}
         \centering
         \includegraphics[width=1\textwidth]{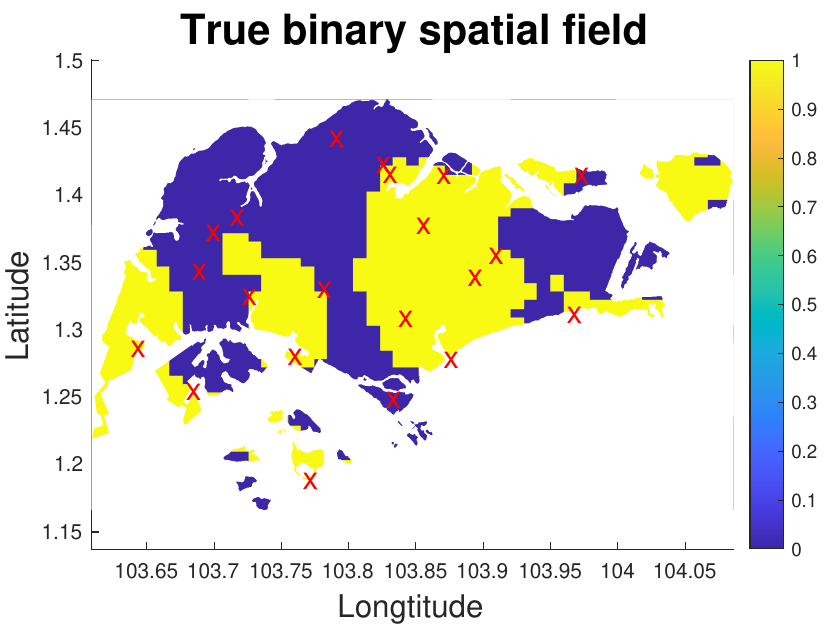}
         \caption{}
         \label{fig:real true field}
     \end{subfigure}
     \hfill
     \begin{subfigure}[b]{0.328\textwidth}
         \centering
         \includegraphics[width=1\textwidth]{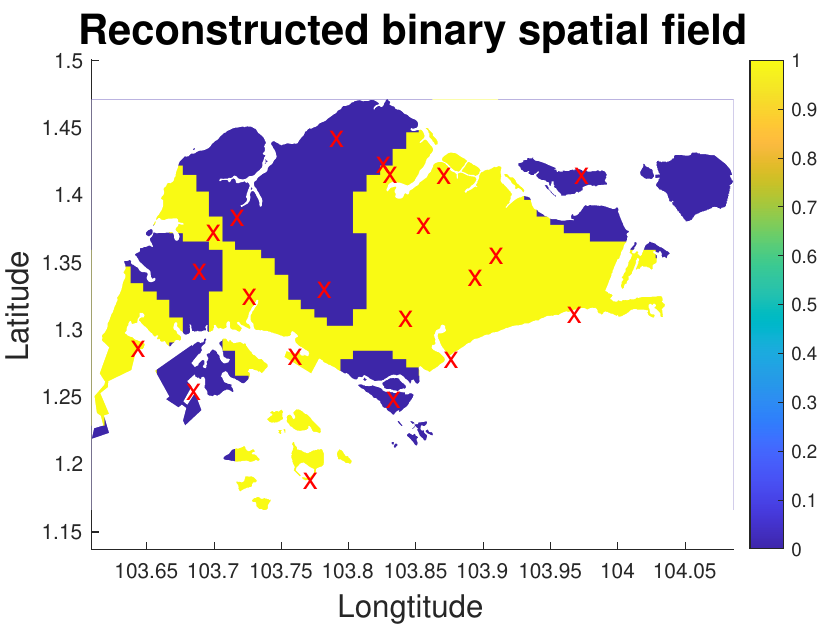}
         \caption{}
         \label{fig:real reconstructed field}
     \end{subfigure}
        \caption{Real-world experiment -- Figure~(\ref{fig:stns}) shows the spatial locations of the $21$ weather stations over Singapore. Figure~(\ref{fig:real true field}) represents the true binary spatial field, where the red crosses denote the weather stations, the blue area represents the spatial locations at which the values of the true binary spatial random field are ``0", and the yellow area represents the spatial locations at which the values are ``1". Figure~(\ref{fig:real reconstructed field}) visualizes the reconstructed binary spatial field from our algorithm based on the point observations at the $21$ weather stations.} 
        \label{fig:real}
\end{figure}

\begin{figure}[h!]
     \centering
     \begin{subfigure}[b]{0.328\textwidth}
         \centering
         \includegraphics[width=1\textwidth]{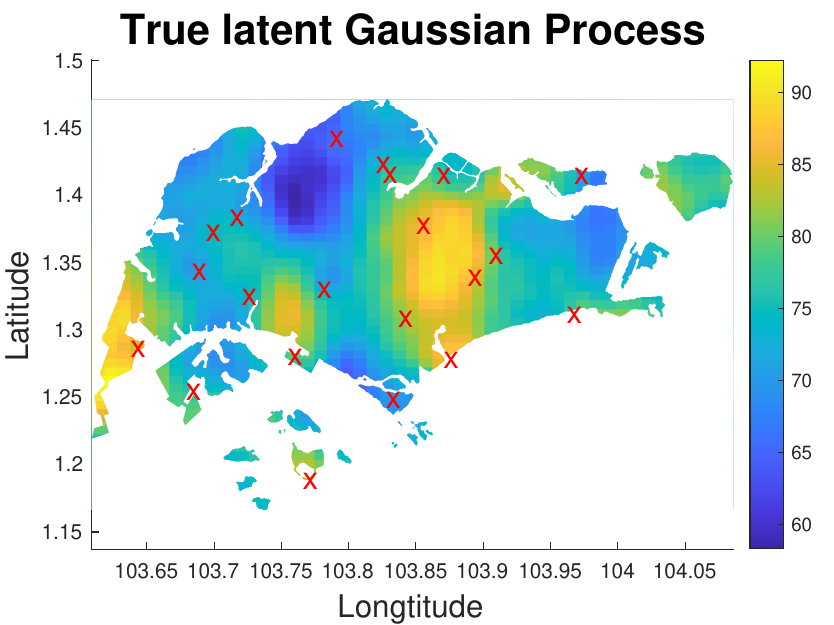}
         \caption{}
         \label{fig:real true latent field}
     \end{subfigure}
     \hfill
     \begin{subfigure}[b]{0.328\textwidth}
         \centering
         \includegraphics[width=1\textwidth]{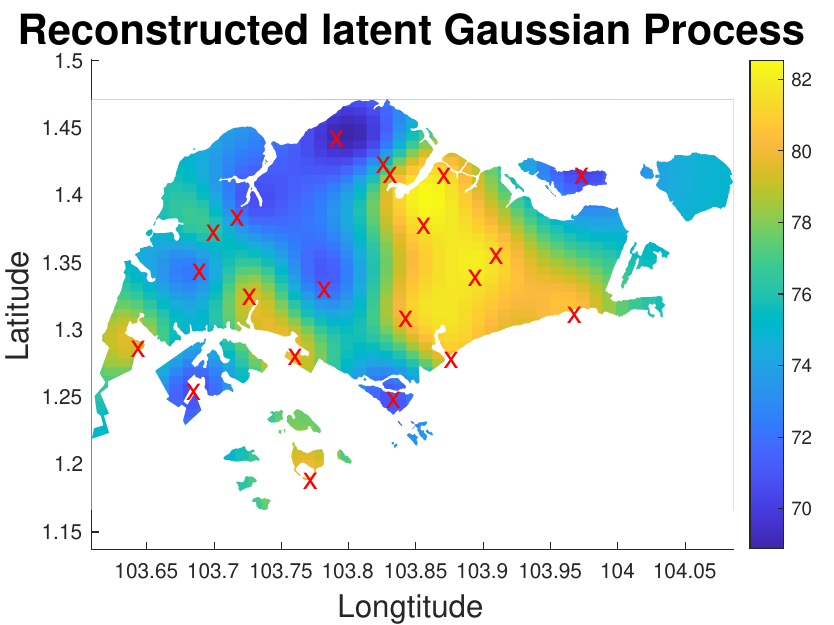}
         \caption{}
         \label{fig:real reconstructed latent field}
     \end{subfigure}
     \hfill
     \begin{subfigure}[b]{0.328\textwidth}
         \centering
         \includegraphics[width=1\textwidth]{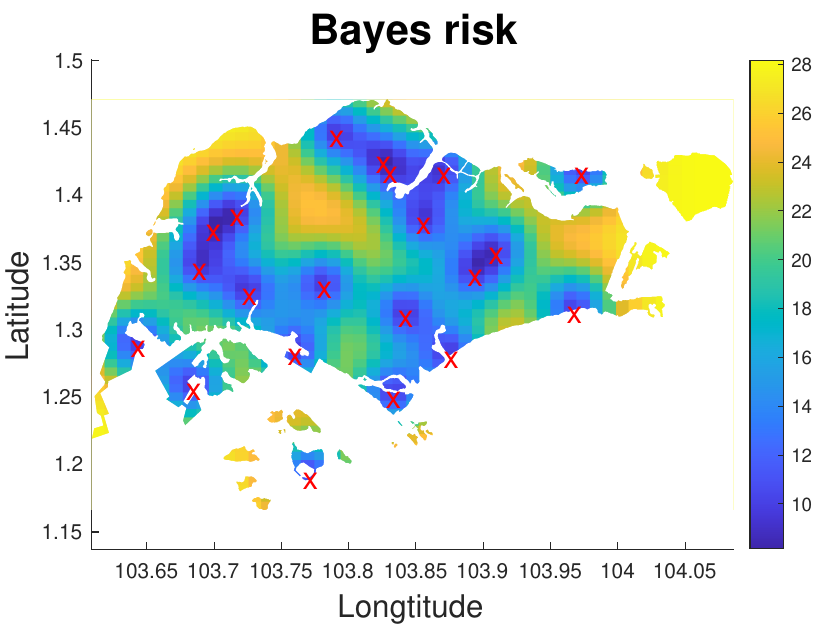}
         \caption{}
         \label{fig:real bayes risk}
     \end{subfigure}
        \caption{Real-world experiment -- Figure~(\ref{fig:real true latent field}) shows the true latent GP. Figure~(\ref{fig:real reconstructed latent field}) visualizes the reconstructed latent GP from our algorithm based on the point observations at the $21$ weather stations. Figure~(\ref{fig:real bayes risk}) shows the map of the Bayes risks defined in \eqref{SBLUE:TheoreticalRisk}.} 
        \label{fig:real Latent}
\end{figure}

\begin{table}[h!]
\centering
\begin{tabular}{|c |c | c |c|c|c|} 
 \hline
  Algorithm&MSE & F1 score  & FPR &  TPR \\ [0.5ex] 
    \hline
 Oracle &0.3403 &0.5962&0.2282&0.5326 \\
 \hline
 S-BLUE &0.3790& 0.5544  & 0.2740 & 0.5046\\
 \hline
 KNN &0.4252 & 0.5391 & 0.4430 &  0.5678 \\
 \hline 
\end{tabular}
\caption{Real-world experiment -- Average MSE, F1 score, FPR, TPR over $100$ realizations.}
\label{table:real performance}
\end{table}

\subsection{Sensitivity Analysis}
In this subsection, we analyze the effects of the significance level, number of point observations, and noise variance on the proposed algorithm. To start with, Figure~(\ref{fig:real roc}) shows the ROC curve of WGPLRT under the experiment setting specified in Subsection \ref{Subsection: Experiment Setting}. To study the impact of the significance level, i.e., FPR, the transition matrix for S-BLUE is determined by points along the ROC curve in Figure~(\ref{fig:real roc}) at each significance level. We plot the MSE against the significance level in Figure~(\ref{fig:real SBLUEvsROC}). Notice that the MSE decreases initially and then increases because of the trade-off between the TPR and FPR as suggested in Figure~(\ref{fig:real roc}). 

\begin{figure}[t]
     \centering
     \begin{subfigure}[b]{0.49\textwidth}
         \centering
         \includegraphics[width=1\textwidth]{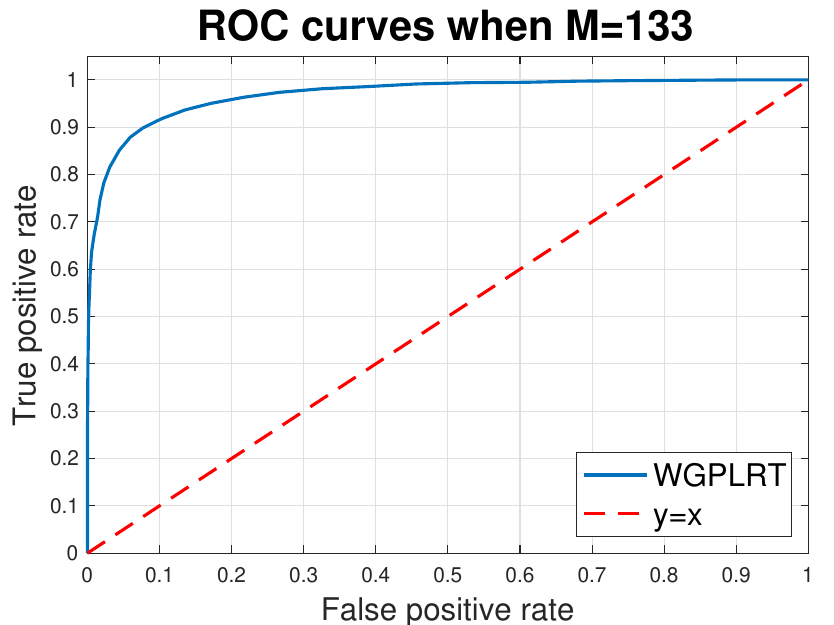}
         \caption{}
         \label{fig:real roc}
     \end{subfigure}
     \hfill
     \begin{subfigure}[b]{0.49\textwidth}
         \centering
         \includegraphics[width=1\textwidth]{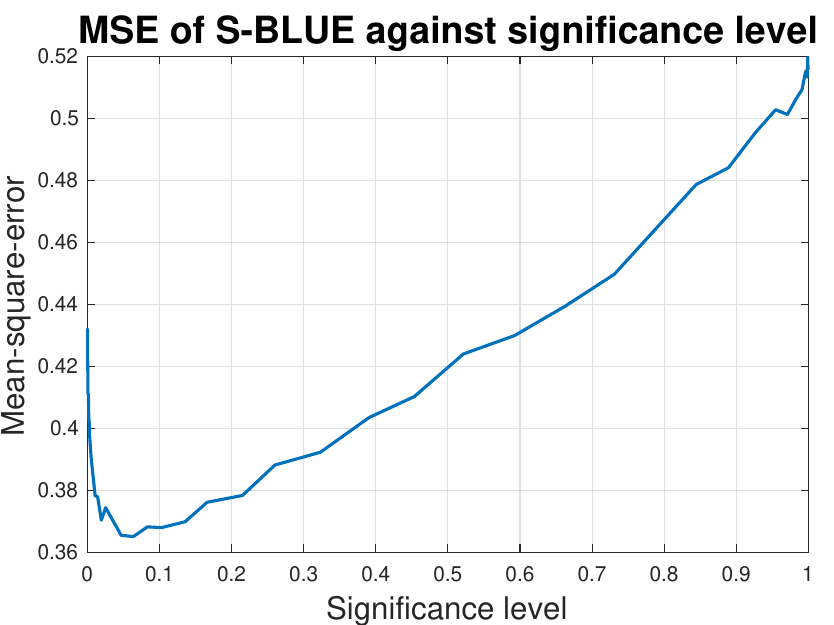}
         \caption{}
         \label{fig:real SBLUEvsROC}
     \end{subfigure}
      \caption{Real-world experiment -- Figure~(\ref{fig:real roc}) shows the ROC curve of WGPLRT under the experiment setting in Subsection \ref{Subsection: Experiment Setting}.  Figure~(\ref{fig:real SBLUEvsROC}) shows the MSE of S-BLUE against the significance level.}
\end{figure}

We proceed to study the effects of the number of point observations and the noise variance. Figure~(\ref{fig:real AUCvsM}) shows the AUC of WGPLRT against the number of point observations. We observe that even though the AUC improves when the number of point observations increases, the slope is close to zero when the number of point observations exceeds approximately $19$. Therefore, we take $M = 19$ when analyzing the effects of the noise variance. Figure~(\ref{fig:real scoresVsSn}) presents the average MSE, F1 score, FPR, TPR over $100$ realizations against the noise variance $\sigma^2_{\TP}$. As in the synthetic experiment, the FPR stays relatively constant, the F1 score, TPR increase, and the MSE decreases when the noise variance decreases.

\begin{figure}[!h]  
\begin{subfigure}{0.49\textwidth}
\includegraphics[width=\linewidth]{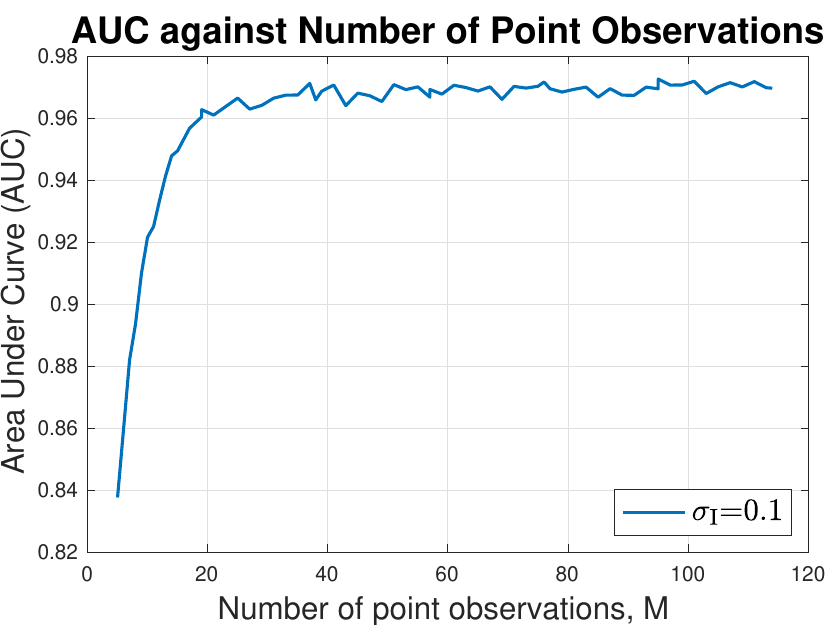}
\caption{}
\label{fig:real AUCvsM}
\end{subfigure}
\hfill
\begin{subfigure}{0.49\textwidth}
\includegraphics[width=\linewidth]{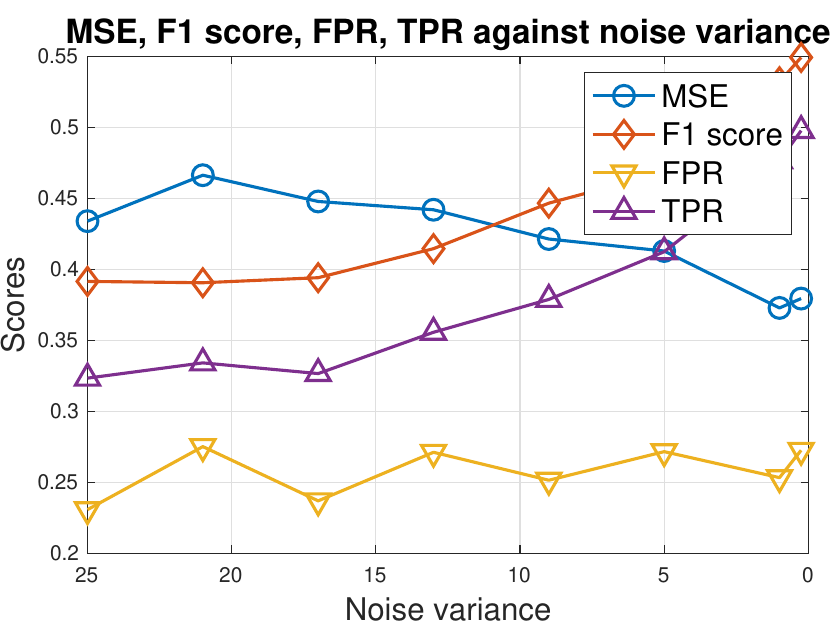}
\caption{}
\label{fig:real scoresVsSn}
\end{subfigure}
\caption{Real-world experiment -- Figure~(\ref{fig:real AUCvsM}) shows the effect of the number of point observations on the AUC of WGPLRT. Figure~(\ref{fig:real scoresVsSn}) presents the average MSE, F1 score, FPR, and TPR over $100$ realizations against the noise variance of Algorithm~\ref{Overall:algo} when $M=19$.} 
\label{fig:Real}
\end{figure}

\section{Conclusion}\label{Sec:Conclusion}
This paper addressed the problem of \textit{binary spatial random field reconstruction} of a hierarchical spatial-temporal system based on sensor observations of the temporal processes. A novel model was proposed to represent a hierarchical spatial-temporal physical phenomenon using WGPs such that the processes may follow arbitrary distributions appeared in real-world applications. A sensor network was deployed over a vast geographical region to monitor the hierarchical spatial-temporal physical phenomenon, where two types of sensors were considered; one collects \textit{point observations} at specific time points while the other collects \textit{integral observations} over time intervals.} We developed two algorithms: the \textit{Warped Gaussian Process Likelihood Ratio Test (WGPLRT)} and the \textit{Neighborhood-density-based Likelihood Ratio Test (NLRT)} to compress the local sensor observations of the temporal processes to a single-bit. Next, based on the local inferences, we developed the \textit{Spatial Best Linear Unbiased Estimator (S-BLUE)} to solve the problem of binary spatial random field reconstruction. We preformed both synthetic experiments and real-world experiments, the latter of which are based on a weather dataset from the National Environment Agency (NEA) of Singapore. The results showed that our proposed algorithms can effectively reconstruct the binary spatial random field.

\section{Acknowledgments}
The research was conducted under the Undergraduate Research Experience on Campus (URECA) project, supported by the School of Physics and Mathematical Sciences at Nanyang Technological University. AN gratefully acknowledges the financial support by his Nanyang Assistant Professorship Grant (NAP Grant) \textit{Machine Learning based Algorithms in Finance and Insurance}.

\appendix
\section{Proof of Proposition~\ref{L1}}
\label{L1_P}
\begin{proof}[Proof of Proposition~\ref{L1}]
Let $g_*:=g(\Bx_*)$ denote the value of the latent Gaussian Process at a fixed un-monitored location $\Bx_* \in \CX$. Note that $y_*:=y(\Bx_*)\in\{0,1\}$, and hence it is $\mathrm{Bernoulli}(\pi_*)$ distributed for some $\pi_* \in [0,1]$. The Bernoulli parameter $\pi_*$ of the posterior predictive distribution of $y_*$ conditional on $(\BZ^{\TP}_{1:N^{\TP}},\BZ^{\TI}_{1:N^{\TI}})$ is given by 
\begin{align}\label{Lemma2:1}
\begin{split}
\pi_*=\PROB( y_*=1|\BZ^{\TP}_{1:N^{\TP}},\BZ^{\TI}_{1:N^{\TI}})=\int_{\R} \PROB(y_*=1|g_*,\BZ^{\TP}_{1:N^{\TP}},\BZ^{\TI}_{1:N^{\TI}})p(g_*|\BZ^{\TP}_{1:N^{\TP}},\BZ^{\TI}_{1:N^{\TI}})\DIFFX{g_*}.
\end{split}
\end{align}
Since $y_*$ is independent of $(\BZ^{\TP}_{1:N^{\TP}}, \BZ^{\TI}_{1:N^{\TI}})$ conditional on $g_*$ (see Figure~\ref{fig:plate}), we obtain that 
\begin{equation}
\PROB(y_*=1|g_*,\BZ^{\TP}_{1:N^{\TP}},\BZ^{\TI}_{1:N^{\TI}}) = \PROB(y_*=1|g_*)=\INDI_{\{g_*\geq c\}},
\end{equation}
which is a deterministic function of $g_*$. 

By the Law of Total Probability, the term $p(g_*|\BZ^{\TP}_{1:N^{\TP}},\BZ^{\TI}_{1:N^{\TI}})$ can be written as
\begin{equation}\label{Lemma2:2}
\begin{split}
       p(g_*|\BZ^{\TP}_{1:N^{\TP}},\BZ^{\TI}_{1:N^{\TI}})&=\int_{\R^N} p(g_*|\BIg,\BZ^{\TP}_{1:N^{\TP}},\BZ^{\TI}_{1:N^{\TI}})p(\BIg|\BZ^{\TP}_{1:N^{\TP}},\BZ^{\TI}_{1:N^{\TI}})\DIFFX{\BIg}.
\end{split}
\end{equation}
Then, by the independence of $g_*$ and $(\BZ^{\TP}_{1:N^{\TP}}, \BZ^{\TI}_{1:N^{\TI}})$ conditional on $\BIg$ (see Figure~\ref{fig:plate}), we derive that
\begin{equation}
        p(g_*|\BZ^{\TP}_{1:N^{\TP}},\BZ^{\TI}_{1:N^{\TI}})=\int_{\R^N} p(g_*|\BIg)p(\BIg|\BZ^{\TP}_{1:N^{\TP}},\BZ^{\TI}_{1:N^{\TI}})\DIFFX{\BIg}.
\end{equation}
Moreover, by the Bayes' rule, we have
\begin{align}
\label{Lemma2:3}
p(\BIg|\BZ^{\TP}_{1:N^{\TP}},\BZ^{\TI}_{1:N^{\TI}})=\frac{p(\BZ^{\TP}_{1:N^{\TP}},\BZ^{\TI}_{1:N^{\TI}}|\BIg)p(\BIg)}{\int_{\R^N}p(\BZ^{\TP}_{1:N^{\TP}},\BZ^{\TI}_{1:N^{\TI}}|\BIg')p(\BIg')\DIFFX{\BIg'}}.
\end{align}
Due to the conditional independence of $\big\{\BZ^{\TP}_n\big\}_{n=1:N^{\TP}}$, $\big\{\BZ^{\TI}_n\big\}_{n=1:N^{\TI}}$ given $\BIg=(g(\Bx^{\TP}_1),$ $\ldots,$ $g(\Bx^{\TP}_{N^{\TP}}),$ $g(\Bx^{\TI}_1),$ $\ldots,$ $g(\Bx^{\TI}_{N^{\TI}}))$ stated in \ref{Temporal:LatentProcess}, one can factorize $p(\BZ^{\TP}_{1:N^{\TP}},\BZ^{\TI}_{1:N^{\TI}}|\BIg)$ as follows
\begin{align}\label{Lemma2:4}
p(\BZ^{\TP}_{1:N^{\TP}},\BZ^{\TI}_{1:N^{\TI}}|\BIg)=
\left(\prod_{n=1}^{N^{\TP}}p(\BZ^{\TP}_{n}|\BIg)\right)\left(\prod_{n=1}^{N^{\TI}}p(\BZ^{\TI}_{n}|\BIg)\right).
\end{align}
Therefore, (\ref{Lemma2:1})--(\ref{Lemma2:4}) imply that $\PROB( y_*=1|\BZ^{\TP}_{1:N^{\TP}},\BZ^{\TI}_{1:N^{\TI}})$ is given by
\begin{equation}\label{Lemma2:5}
    \int_{c}^{\infty}
\int_{\R^N} p(g_*|\BIg)\frac{
\left(\prod_{n=1}^{N^{\TP}}p(\BZ^{\TP}_{n}|\BIg)\right)\left(\prod_{n=1}^{N^{\TI}}p(\BZ^{\TI}_{n}|\BIg)\right)
p(\BIg)}{\int_{\R^N}
\left(\prod_{n=1}^{N^{\TP}}p(\BZ^{\TP}_{n}|\BIg')\right)\left(\prod_{n=1}^{N^{\TI}}p(\BZ^{\TI}_{n}|\BIg')\right)
p(\BIg')\DIFFX{\BIg'}}\DIFFX{\BIg}
\DIFFX{g_*}.
\end{equation}
This proves \eqref{Spatial:Y_posterior}.

Now, let $y_n^j:=y(\Bx_n^j)$, $g_n^j:=g(\Bx_n^j)$ for $j\in \{\TP,\TI\}$ and observe that 
\begin{equation}\label{Lemma5:6}
\begin{split}
&\;\;\;\;\int_{\R^N}\textstyle\Big(\prod_{n=1}^{N^{\TP}}p(\BZ^{\TP}_{n}|\BIg')\Big)\Big(\prod_{n=1}^{N^{\TI}}p(\BZ^{\TI}_{n}|\BIg')\Big)p(\BIg')\DIFFX{\BIg'}\\
&=\int_{\R^N}\textstyle\Big(\prod_{n=1}^{N^{\TP}}\big(\sum_{l=0,1}p(\BZ^{\TP}_{n}|y_n^{\TP}=l,\BIg')p(y_n^{\TP}=l|\BIg')\big)\Big)\\
&\qquad\qquad\textstyle\times\Big(\prod_{n=1}^{N^{\TI}}\big(\sum_{l=0,1}p(\BZ^{\TI}_{n}|y_n^{\TI}=l,\BIg')p(y_n^{\TI}=l|\BIg')\big)\Big)p(\BIg')\DIFFX{\BIg'}\\
&=\int_{\R^N}\textstyle\Big(\prod_{n=1}^{N^{\TP}}\big(p(\BZ^{\TP}_{n}|y_n^{\TP}=0)\INDI_{\{g_n^{\TP}<c\}}+p(\BZ^{\TP}_{n}|y_n^{\TP}=1)\INDI_{\{g_n^{\TP}\geq c\}}\big)\Big)\\
&\qquad\qquad\times\textstyle\Big(\prod_{n=1}^{N^{\TI}}\big(p(\BZ^{\TI}_{n}|y_n^{\TI}=0)\INDI_{\{g_n^{\TI}< c\}}+p(\BZ^{\TI}_{n}|y_n^{\TI}=1)\INDI_{\{g_n^{\TI}\geq c\}}\big)\Big)p(\BIg')\DIFFX{\BIg'}.
\end{split}
\end{equation}
Hence, we see that the denominator of the integrand in equation (\ref{Lemma2:5}) contains a sum of $2^{N^{\TI}+N^{\TP}}$ terms and is therefore computationally intractable.
\end{proof}

\section{Proof of Proposition~\ref{L2}}
\label{L2_P}
The following Lemma is needed in the proof of Proposition~\ref{L2}.
\begin{lemma}
\label{GaussianCopula}
Let $N\in\N$, let $(\Bx_n)_{n=1:N}\subset\CX$, and let $\Bz:=(z(\Bx_1),z(\Bx_2),\ldots,z(\Bx_N))^\TRANSP$ where $z=W\circ f $ is a WGP defined as in Definition~\ref{Definition:WGP} with $W: \R\to \text{range}(W)\subseteq \R$ being strictly increasing and continuously differentiable. Let $W^{-1}$ denote the inverse of $W$ and let $\Sigma_{\Bx_{1:N}}$ denote the covariance matrix of $f$ evaluated at $(\Bx_n)_{n=1:N}$. 
Then, the probability density function of $\Bz$ is given by
\begin{align}\label{Lemma: gaussianCopula general}
\begin{split}
p_\Bz(\Bz)&=\phi_{\CN}\Big(\left(W^{-1}(z(\Bx_1)),\cdots,W^{-1}(z(\Bx_N))\right);\veczero,\Sigma_{\Bx_{1:N}}\Big)\prod \limits_{i=1}^N \frac{\partial W^{-1}(z(\Bx_i))}{\partial z(\Bx_i)},
\end{split}
\end{align}
where $\phi_{\CN}(\cdot; \veczero, \Sigma_{\Bx_{1:N}}):\R^N \to \R$ denotes the density function of a multivariate normal distribution with mean vector $\veczero$ and covariance matrix $\Sigma_{\Bx_{1:N}}$. 
\\
Moreover, if $W :=F^{-1}\circ \Phi$, where $F$ is strictly increasing and is the CDF of a continuous random variable with continuous density, and $\Phi$ is the CDF of $\CN(0,1)$, then we have
\begin{equation}\label{Lemma: gaussianCopula distribution}
p_\Bz(\Bz) = \phi_{\CN}\Big(\left(W^{-1}(z(\Bx_1)),\cdots,W^{-1}(z(\Bx_N))\right);\veczero,\Sigma_{\Bx_{1:N}}\Big)\prod \limits_{i=1}^N
\frac{F'\left(z(\Bx_i)\right)}{\Phi'\big(\Phi^{-1}(F(z(\Bx_i)))\big)}.
\end{equation}
\end{lemma}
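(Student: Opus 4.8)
The plan is to prove this as a direct application of the multivariate change-of-variables formula for probability densities, exploiting the fact that $W$ acts on each coordinate separately. First I would record the starting point: by Definition~\ref{DefGP} the vector $\Bf := (f(\Bx_1),\ldots,f(\Bx_N))^\TRANSP$ is jointly normal, and since the mean appearing in the target formulas is $\veczero$, I take $f$ to be centered (consistent with the temporal GPs used throughout), so that $\Bf \sim \CN(\veczero,\Sigma_{\Bx_{1:N}})$ with density $\phi_{\CN}(\cdot;\veczero,\Sigma_{\Bx_{1:N}})$. By Definition~\ref{Definition:WGP} we then have $\Bz = T(\Bf)$, where $T:\R^N \to (\text{range}(W))^N$ is the componentwise map $T(\Bu)=(W(u_1),\ldots,W(u_N))^\TRANSP$.

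Next I would argue that $T$ is a $C^1$ diffeomorphism onto its image with a diagonal Jacobian, so that the transformation theorem yields $p_\Bz(\Bz)=\phi_{\CN}(T^{-1}(\Bz);\veczero,\Sigma_{\Bx_{1:N}})\,\lvert\det DT^{-1}(\Bz)\rvert$. Because $T$ is the scalar bijection $W$ applied coordinatewise, one has $T^{-1}(\Bz)=(W^{-1}(z_1),\ldots,W^{-1}(z_N))^\TRANSP$ and $DT^{-1}(\Bz)=\diag\big((W^{-1})'(z_1),\ldots,(W^{-1})'(z_N)\big)$; since $W^{-1}$ is strictly increasing every diagonal entry is positive, whence $\lvert\det DT^{-1}(\Bz)\rvert = \prod_{i=1}^N (W^{-1})'(z_i)$. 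Substituting this into the transformation formula gives exactly \eqref{Lemma: gaussianCopula general}.

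For the second identity I would specialize to $W=F^{-1}\circ\Phi$, so that $W^{-1}=\Phi^{-1}\circ F$, and differentiate via the chain rule together with the derivative-of-inverse rule $(\Phi^{-1})'(u)=1/\Phi'(\Phi^{-1}(u))$. This yields $(W^{-1})'(z)=F'(z)/\Phi'(\Phi^{-1}(F(z)))$ at each coordinate, and inserting these factors into \eqref{Lemma: gaussianCopula general} produces \eqref{Lemma: gaussianCopula distribution}.

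The step requiring the most care is the regularity needed to justify the change of variables. Being strictly increasing and continuously differentiable makes $W$ a continuous bijection onto $\text{range}(W)$, but to invoke the inverse function theorem and obtain a $C^1$ inverse I really want $W'>0$; if $W'$ were to vanish on some set, $W^{-1}$ could fail to be differentiable there, and one would have to restrict to the full-measure set where $(W^{-1})'$ exists. In the case of genuine interest, namely $W=F^{-1}\circ\Phi$ with $F$ strictly increasing and admitting a continuous density, $W$ is a bona fide $C^1$ diffeomorphism with strictly positive derivative, so this subtlety disappears and the computation is clean.
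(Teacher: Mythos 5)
Your proposal is correct and follows essentially the same route as the paper: a direct application of the multivariate change-of-variables theorem with the coordinatewise map $W$, a diagonal Jacobian whose determinant is the product of the derivatives $(W^{-1})'(z(\Bx_i))$, and then the chain rule together with the derivative-of-inverse formula to specialize to $W=F^{-1}\circ\Phi$. Your closing remark on the regularity of $W^{-1}$ (needing $W'>0$ to get a $C^1$ inverse) is a sensible point of care that the paper passes over silently, but it does not change the argument.
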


\begin{proof}
As $\Bz=\left(W(f(\Bx_1)),\cdots,W(f(\Bx_N))\right)$ and $\left(f(\Bx_1),\cdots,f(\Bx_N)\right)$ follows a multivariate normal distribution with mean vector $\veczero$ and covariance matrix $\Sigma_{\Bx_{1:N}}$, the Change of Variable Theorem (see\textit{,} e.g., \cite[Theorem~10.9]{Rudin1976}) yields that the joint density of $\Bz$ is given by
\begin{equation*}
\begin{split}
    p_\Bz(\Bz)&=\phi_{\CN}\Big(\left(W^{-1}(z(\Bx_1)),\cdots,W^{-1}(z(\Bx_N))\right);\veczero,\Sigma_{\Bx_{1:N}}\Big)\left|\frac{\partial W^{-1} (\Bz)}{\partial \Bz}\right|\\
    &=\phi_{\CN}\Big(\left(W^{-1}(z(\Bx_1)),\cdots,W^{-1}(z(\Bx_N))\right);\veczero,\Sigma_{\Bx_{1:N}}\Big)\det
\begin{pmatrix}
\frac{\partial W^{-1}(z(\Bx_1))}{\partial z(\Bx_1)}&\cdots&0\\
\vdots&\ddots&\vdots\\
0&\cdots&\frac{\partial W^{-1}(z(\Bx_N))}{\partial z(\Bx_N)}
\end{pmatrix}.
\end{split}
\end{equation*}
Therefore, we obtain that
\begin{equation}
p_\Bz(\Bz)=\phi_{\CN}\Big(\left(W^{-1}(z(\Bx_1)),\cdots,W^{-1}(z(\Bx_N))\right);\veczero,\Sigma_{\Bx_{1:N}}\Big)\prod \limits_{i=1}^N
\frac{\partial W^{-1}(z(\Bx_i))}{\partial z(\Bx_i)},
\end{equation}
which proves \eqref{Lemma: gaussianCopula general}.

Moreover, if $W=F^{-1}\circ \Phi$, where $F$ is strictly increasing and is the CDF of a continuous random variable with continuous density, and $\Phi$ is the CDF of $\CN(0,1)$, then $W$ is also strictly increasing and continuously differentiable. Therefore, since
\begin{equation}
    \frac{\partial W^{-1}(y)}{\partial y}\bigg|_{y=z(\Bx_i)}=\frac{\partial \Phi^{-1}\big(F(y)\big)}{\partial y}\bigg|_{y=z(\Bx_i)}=
    \frac{F'\left(z(\Bx_i)\right)}{\Phi'\big(\Phi^{-1}(F(z(\Bx_i)))\big)},
\end{equation}
we can conclude the desired result.
\end{proof}

\begin{proof}[Proof of Proposition~\ref{L2}]
For $i=0,1$, as $W_i = F_i^{-1} \circ \Phi$ where $F_i$ is strictly increasing and is the CDF of a continuous random variable and by Lemma~\ref{GaussianCopula}, the probability density of the ground-truth values $\tilde \BZ^{\TP}_n$ of the temporal process $\tilde z$ at $\Bx_n^{\TP}$ over $T_{1:M}^{\TP}$ with $G_i:=W_i^{-1}$ is given by
\begin{equation}
\begin{split}
  p(\tilde \BZ^{\TP}_n|\CH_i)&=(2\pi)^{-\frac{M}{2}}\det(K_i)^{-\frac{1}{2}}\exp\bigg(\!\!-\frac{1}{2}G_i(\tilde \BZ^{\TP}_n)^\TRANSP K_i^{-1}G_i(\tilde \BZ^{\TP}_n)+\sum_{m=1}^M \log \frac{\partial G_i(z)}{\partial z}\bigg|_{\tilde z^{\TP}_{n,m}}\bigg).
\end{split}
\end{equation}
Since $\BZ^{\TP}_n = \tilde \BZ^{\TP}_{n} + \Bepsilon_n^{\TP}$ and by $\Bepsilon_n^{\TP}\sim \CN(\veczero, \sigma^2_{\TP}I_{M})$, we have $\BZ^{\TP}_n | \tilde \BZ^{\TP}_{n} \sim \CN(\tilde \BZ^{\TP}_{n}, \sigma^2_{\TP}I_{M})$. Therefore, by the Law of Total Probability, the marginal likelihood of $ \BZ^{\TP}_n$ is given by
\begin{equation}
\begin{split}
  p(\BZ^{\TP}_n|\CH_i)&= \int_{\R^M}p(\BZ^{\TP}_n|\tilde \BZ^{\TP}_n; \CH_i)p(\tilde \BZ^{\TP}_n|\CH_i)\DIFFX{\tilde\BZ^{\TP}_n}
  \\&=\int_{\R^M}\exp\bigg(\!\!-\frac{1}{2}G_i(\tilde \BZ^{\TP}_n)^\TRANSP K_i^{-1}G_i(\tilde \BZ^{\TP}_n)-\frac{1}{2}\log \det K_i-\frac{M}{2}\log{2\pi}\\
   &\phantom{=}+\sum_{m=1}^M \log \frac{\partial G_i(z)}{\partial z}\bigg|_{\tilde z^{\TP}_{n,m}}\bigg)(2\pi\sigma_{\TP}^2)^{-\frac{M}{2}}\exp{\left(-\frac{1}{2}\sigma_{\TP}^{-2}(\BZ^{\TP}_n -\tilde \BZ^{\TP}_n)^{\TRANSP}(\BZ^{\TP}_n -\tilde \BZ^{\TP}_n)\right)}\DIFFX{\tilde\BZ^{\TP}_n}.
\end{split}
\end{equation}
Finally, by applying the Change of Variable Theorem with $\Bepsilon_n^{\TP} =\BZ^{\TP}_{n} -\tilde \BZ^{\TP}_n$, we conclude that
\begin{equation}
\begin{split}
  p(\BZ^{\TP}_n|\CH_i)&=\int_{\R^M}\exp\bigg(\!\!-\frac{1}{2}G_i(\BZ^{\TP}_n-\Bepsilon_n^{\TP})^\TRANSP K_i^{-1}G_i(\BZ^{\TP}_n-\Bepsilon_n^{\TP})-\frac{1}{2}\log \det K_i-\frac{M}{2}\log{2\pi}
 \\&\qquad\qquad+\sum_{m=1}^M \log \frac{\partial G_i(z)}{\partial z}\bigg|_{z^{\TP}_{n,m}-\epsilon^{\TP}_{n,m}}\bigg )(2\pi\sigma_{\TP}^2)^{-\frac{M}{2}}\exp{\left(-\frac{1}{2}\sigma_{\TP}^{-2}(\Bepsilon_n^{\TP})^{\TRANSP}\Bepsilon_n^{\TP}\right)}\DIFFX \Bepsilon_n^{\TP}.
\end{split}
\end{equation}
\end{proof}

\section{Proof of Proposition~\ref{Temporal: Laplace Approximation}}
\label{L3_P}
\begin{proof}
In the following proof, for the ease of notation, we drop the subscript $i$, denote $\BZ_n^{\TP}$ as $\Bz$, denote $\Bepsilon_n^{\TP}$ as $\Bepsilon$, denote $\widehat p(\BZ_n^{\TP}|\CH_i)$ as $\widehat p(\Bz)$, and let $K_M:=\CC_i(T_{1:M}^{\TP},T_{1:M}^{\TP})$ denote the covariance matrix evaluated at $T^{\TP}_{1:M}$.
We have
\begin{align*}
      \widehat p(\Bz)&:= \int_{\R^M}\exp\Big(\widehat Q(\Bz-\Bepsilon)-\frac{1}{2}\log \det K_M-\frac{M}{2}\log{2\pi}\Big)(2\pi\sigma^2_{\TP})^{-\frac{M}{2}}\exp\Big(-\frac{1}{2}\sigma^{-2}_{\TP}\Bepsilon^\TRANSP\Bepsilon\Big)\DIFFX \Bepsilon\\
      &=C\int_{\R^M} \exp\left(-\frac{1}{2}(\Bz-\Bepsilon-\widehat \Bv)^\TRANSP A (\Bz-\Bepsilon-\widehat \Bv) -\frac{1}{2}\sigma^{-2}_{\TP}\Bepsilon^\TRANSP\Bepsilon\right)\DIFFX \Bepsilon, 
\end{align*}
where the coefficient $C$ is given by
\begin{align*}
    C&=\exp\left(-\frac{1}{2}\log \det K_M-\frac{M}{2}\log 2\pi\right)\exp\left(Q(\widehat \Bv)\right)\exp\left(-\frac{M}{2}\log 2\pi-\frac{M}{2}\log \sigma^2_{\TP}\right)\\
    &=\exp\left(-\frac{1}{2}\log \det K_M -M\log 2\pi-M\log \sigma_{\TP}+ Q(\widehat \Bv)\right).
\end{align*}
Therefore, by completing the square and by $A^\TRANSP=A$, we obtain that
\begin{equation*}
\begin{split}
     \widehat p(\Bz)&=C\int_{\R^M} \exp\Bigg(-\frac{1}{2}(\Bz-\Bepsilon-\widehat \Bv)^\TRANSP A (\Bz-\Bepsilon-\widehat \Bv)-\frac{1}{2}\sigma^{-2}_{\TP}\Bepsilon^T\Bepsilon)\Bigg)\DIFFX \Bepsilon\\
      &=C \int_{\R^M} \exp\Big(-\frac{1}{2}\big[(\Bz-\widehat \Bv)^\TRANSP A(\Bz-\widehat \Bv)-2\Bepsilon^\TRANSP A(\Bz-\widehat \Bv)+\Bepsilon^\TRANSP(A+\sigma^{-2}_{\TP} I)\Bepsilon \big]\Big)\DIFFX \Bepsilon\\ 
      &=C\exp\left(-\frac{1}{2}(\Bz-\widehat \Bv)^\TRANSP A(\Bz-\widehat \Bv)\right)\\
      &\phantom{=}\times\int_{\R^M} \exp\Big(-\frac{1}{2}[\Bepsilon-(A+\sigma^{-2}_{\TP}I)^{-1}A(\Bz-\widehat \Bv)]^\TRANSP(A+\sigma^{-2}_{\TP}I)[\Bepsilon-(A+\sigma^{-2}_{\TP}I)^{-1}A(\Bz-\widehat \Bv)]\\
      &\qquad\qquad\qquad+\frac{1}{2}(\Bz-\widehat \Bv)^\TRANSP A(A+\sigma^{-2}_{\TP}I)^{-1}A(\Bz-\widehat \Bv)\Big)\DIFFX\Bepsilon\\
      &=C\exp\left(-\frac{1}{2}(\Bz-\widehat \Bv)^\TRANSP A(\Bz-\widehat \Bv)+\frac{1}{2}(\Bz-\widehat \Bv)^\TRANSP A(A+\sigma^{-2}_{\TP}I)^{-1}A(\Bz-\widehat \Bv)\right)\\
      &\phantom{=}\times\int_{\R^M} \exp\Big(-\frac{1}{2}[\Bepsilon-(A+\sigma^{-2}_{\TP}I)^{-1}A(\Bz-\widehat \Bv)]^\TRANSP(A+\sigma^{-2}_{\TP}I)[\Bepsilon-(A+\sigma^{-2}_{\TP}I)^{-1}A(\Bz-\widehat \Bv)]
     \Big)\DIFFX\Bepsilon\\
     \end{split}
\end{equation*}
\begin{equation}
\begin{split}\label{Lemma4:1}
     \phantom{\widehat p(\Bz} &=C\exp\left(\frac{1}{2}(\Bz-\widehat \Bv)^\TRANSP\left(A(A+\sigma^{-2}_{\TP}I)^{-1}A-A\right)(\Bz-\widehat \Bv)\right)(2\pi)^{\frac{M}{2}}\big(\det (A+\sigma^{-2}_{\TP}I)\big)^{-\frac{1}{2}}\\
      &\phantom{=}\times\int_{\R^M}\exp\Big (-\frac{1}{2}[\Bepsilon-(A+\sigma^{-2}_{\TP}I)^{-1}A(\Bz-\widehat \Bv)]^\TRANSP(A+\sigma^{-2}_{\TP}I)[\Bepsilon-(A+\sigma^{-2}_{\TP}I)^{-1}A(\Bz-\widehat \Bv)]\Big) \\
      &\qquad\qquad\times(2\pi)^{-\frac{M}{2}}\big(\det (A+\sigma^{-2}_{\TP}I)\big)^{\frac{1}{2}}\DIFFX\Bepsilon.
\end{split}
\end{equation}
Moreover, note that the integrand in the last equality is the density function of $\CN\left(\widehat \Bmu,\widehat \BSigma\right)$ with $\widehat \Bmu:=(A+\sigma^{-2}_{\TP}I)^{-1}A(\Bz-\widehat \Bv)$ and $ \widehat \BSigma:=(A+\sigma^{-2}_{\TP}I)^{-1}$, and hence
\begin{align*}
   \int_{\R^M}  (2\pi)^{-\frac{M}{2}}\big(\det (A+\sigma^{-2}_{\TP}I)\big)^{\frac{1}{2}}\exp\Big (-\frac{1}{2}[\Bepsilon-(A+\sigma^{-2}_{\TP}I)^{-1}A(\Bz-\widehat \Bv)]^\TRANSP(A+\sigma^{-2}_{\TP}I)& \\
      \quad\quad\quad\times[\Bepsilon-(A+\sigma^{-2}_{\TP}I)^{-1}A(\Bz-\widehat \Bv)]\Big)\DIFFX\Bepsilon&=1.
\end{align*}
Combining this with (\ref{Lemma4:1}) yields that
\[
\widehat p(\Bz)=\widehat C \exp\Bigg(\frac{1}{2}(\Bz-\widehat \Bv)^\TRANSP(A(A+\sigma^{-2}_{\TP}I)^{-1}A-A)(\Bz-\widehat \Bv)\Bigg),
\]
where the coefficient $\widehat C$ is defined as follows
\begin{align*}
      \widehat C&:=\exp\Big(-\frac{1}{2}\log \det K_M -M\log 2\pi-M\log \sigma_{\TP}+ Q(\widehat \Bv)\Big )\exp\Big(\frac{M}{2}\log 2\pi-\frac{1}{2}\log\det (A+\sigma^{-2}_{\TP}I)\Big)\\
      &=\exp\Bigg(-\frac{1}{2}\log \det K_M -\frac{M}{2}\log 2\pi-M\log \sigma_{\TP}+ Q(\widehat \Bv)-\frac{1}{2}\log\det (A+\sigma^{-2}_{\TP}I)\Bigg ).
\end{align*}
Finally, using Woodbury's matrix identity (with $Z\leftarrow A^{-1}$, $U\leftarrow I$, $V\leftarrow I$, and $W \leftarrow \sigma_{\TP}^2I$ in the notation of Appendix~A.3 in \cite{C.E.Rasmussen2006}), we therefore see that
\begin{equation*}
    \widehat p(\Bz)=\widehat C \exp\Bigg(-\frac{1}{2}(\Bz-\widehat \Bv)^\TRANSP(A^{-1}+\sigma_{\TP}^{2}I)^{-1}(\Bz-\widehat \Bv)\Bigg).
\end{equation*}
\end{proof}

\section{Proof of Theorem~\ref{SBLUE:Noisy_diff_U}}
\label{NoisySBLUE_proof}

In order to derive the analytic expression for the S-BLUE, the following Lemma is needed. Let $g_*:=g(\Bx_*)$ denote the value of the latent GP at a fixed un-monitored location $\Bx_* \in \CX$. 
\begin{lemma}\label{cond_pred}
The conditional predictive distribution of $g_*$ given $\BIg$ is $\CN(\mu_{g_*|\BIg},\sigma^2_{g_*|\BIg})$ with
\[
\mu_{g_*|\BIg}:=\mu(\Bx_*)+\CC(\Bx_*,X_{1:N})\CC(X_{1:N},X_{1:N})^{-1}(\BIg-\mu(X_{1:N})),
\]
\[
\sigma^2_{g_*|\BIg}:=\CC(\Bx_*,\Bx_*)-\CC(\Bx_*,X_{1:N})\CC(X_{1:N},X_{1:N})^{-1}\CC(X_{1:N},\Bx_*).
\]
\end{lemma}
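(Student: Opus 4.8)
The plan is to exploit the defining property of a Gaussian process, namely that all of its finite-dimensional distributions are jointly Gaussian (Definition~\ref{DefGP}), and then to invoke the standard formula for the conditional law of one block of a jointly Gaussian vector given another. First I would observe that the augmented vector $(g_*,\BIg^\TRANSP)^\TRANSP=(g(\Bx_*),g(\Bx_1),\ldots,g(\Bx_N))^\TRANSP$ is a finite-dimensional marginal of $g\sim\GP(\mu(\cdot),\CC(\cdot,\cdot))$, and is therefore jointly normal with mean $(\mu(\Bx_*),\mu(X_{1:N})^\TRANSP)^\TRANSP$ and partitioned covariance matrix
\begin{equation*}
\Sigma=\begin{pmatrix}\CC(\Bx_*,\Bx_*)&\CC(\Bx_*,X_{1:N})\\ \CC(X_{1:N},\Bx_*)&\CC(X_{1:N},X_{1:N})\end{pmatrix}.
\end{equation*}

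Next I would apply the standard Gaussian conditioning identity: for a jointly Gaussian vector partitioned as $(a,b)$ with means $(\mu_a,\mu_b)$ and covariance blocks $\Sigma_{aa},\Sigma_{ab},\Sigma_{bb}$, the conditional distribution of $a$ given $b$ is $\CN\!\big(\mu_a+\Sigma_{ab}\Sigma_{bb}^{-1}(b-\mu_b),\,\Sigma_{aa}-\Sigma_{ab}\Sigma_{bb}^{-1}\Sigma_{ba}\big)$. Taking $a=g_*$ and $b=\BIg$ and substituting the blocks above gives precisely the claimed mean $\mu_{g_*|\BIg}=\mu(\Bx_*)+\CC(\Bx_*,X_{1:N})\CC(X_{1:N},X_{1:N})^{-1}(\BIg-\mu(X_{1:N}))$ and variance $\sigma^2_{g_*|\BIg}=\CC(\Bx_*,\Bx_*)-\CC(\Bx_*,X_{1:N})\CC(X_{1:N},X_{1:N})^{-1}\CC(X_{1:N},\Bx_*)$.

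The only point requiring care is the invertibility of $\CC(X_{1:N},X_{1:N})$, which is guaranteed by the standing assumption that all covariance functions are symmetric and positive definite. I do not expect any genuine obstacle: this is a textbook Gaussian-process conditioning result (see, e.g., \cite[Section~2.2]{C.E.Rasmussen2006}), so the proof reduces to recording the joint normality of $(g_*,\BIg)$ and quoting the conditioning formula. If a self-contained derivation were preferred, one could instead complete the square in the joint Gaussian density $p(g_*,\BIg)/p(\BIg)$ to recover the same mean and variance, but invoking the known formula is the cleaner route.
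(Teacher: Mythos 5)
Your proposal is correct and matches the paper's own proof: both record that $(g_*,\BIg)$ is a finite-dimensional marginal of the GP, hence jointly Gaussian with the stated partitioned mean and covariance, and then quote the standard multivariate-normal conditioning formula (the paper cites Appendix~A.2 of \cite{C.E.Rasmussen2006}). Your additional remark on the invertibility of $\CC(X_{1:N},X_{1:N})$ via the positive-definiteness assumption is a small but welcome extra precision.
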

\begin{proof}
Since $g$ is a GP, the joint distribution of $\BIg$ and $g_*$ follows a multivariate normal distribution given by
\begin{equation}
\begin{bmatrix}
\BIg\\g_*
\end{bmatrix}   \sim
\CN\Bigg(\begin{bmatrix}
\mu(X_{1:N})\\ \mu(\Bx_*)
\end{bmatrix}
,\begin{bmatrix}
\CC(X_{1:N},X_{1:N})& \CC(X_{1:N},\Bx_*)\\
\CC(\Bx_*,X_{1:N})& \CC(\Bx_*,\Bx_*)
\end{bmatrix}\Bigg).
\end{equation}
Hence, the result follows from the property of the multivariate normal distribution, see\textit{,} e.g., Appendix~A.2 in \cite{C.E.Rasmussen2006}.
\end{proof}

\begin{proof}[Proof of Theorem~\ref{SBLUE:Noisy_diff_U}]
Under the quadratic loss function, the Bayes risk $\CR[h]$ for any $h\in\CH$ is given by
\begin{equation}
\begin{split}
     \CR[h]&=\EXP \left[(\Bw^{\TRANSP}\widehat \BY_{1:N}+b-g_*)^2\right]\\
     &=\Bw^\TRANSP\EXP\left[\widehat \BY_{1:N}\widehat \BY_{1:N}^{\TRANSP}\right]\Bw+b^2-2b\EXP[g_*]+\EXP[g_*^2]+2\Bw^\TRANSP\EXP[\widehat \BY_{1:N}]b-2\Bw^\TRANSP\EXP[\widehat \BY_{1:N}g_*].
\end{split}
\end{equation}
Let $\mu_*:=\EXP[g_*]=\mu(\Bx_*)$. Differentiating $\CR[h]$ with respect to $\Bw$ and $b$ yields that
\begin{equation}
    \begin{split}
        &\frac{\partial \CR}{\partial \Bw}=2\EXP[\widehat \BY_{1:N}\widehat \BY_{1:N}^\TRANSP]\Bw+2b\EXP[\widehat \BY_{1:N}]-2\EXP[\widehat \BY_{1:N}g_*],\\
        &\frac{\partial \CR}{\partial b}=2b-2\mu_*+2\Bw^\TRANSP\EXP[\widehat \BY_{1:N}].
    \end{split}
\end{equation}
Setting the partial derivatives to zero, we obtain that
\begin{equation}
    \begin{split}
        b&=\mu_*-\Bw^\TRANSP\EXP[\widehat \BY_{1:N}],\\
        \Bw&=\left(\EXP[\widehat \BY_{1:N}\widehat \BY_{1:N}^\TRANSP]-\EXP[\widehat \BY_{1:N}]\EXP[\widehat \BY_{1:N}]^T\right)^{-1}\left(\EXP[\widehat \BY_{1:N}g_*]-\mu_*\EXP[\widehat \BY_{1:N}]\right)\\
        &=\COV[\widehat \BY_{1:N}]^{-1}\COV[\widehat \BY_{1:N},g_*].
    \end{split}
\end{equation}
Therefore, the S-BLUE is given by 
\begin{equation}
    \widehat h_\text{S-BLUE}(\widehat \BY_{1:N})= \mu_*+\COV[g_*, \widehat \BY_{1:N}]\COV[\widehat \BY_{1:N}]^{-1}(\widehat \BY_{1:N}-\EXP[\widehat \BY_{1:N}]).
\end{equation}
The mean and covariance terms involved in the S-BLUE are presented below. For the ease of notation, let $g_i:=g(\Bx_i)$, $y_i:=y(\Bx_i)$, $\mu_i:=\mu(\Bx_i)$, and $\sigma_i^2:=\CC(\Bx_i,\Bx_i)$ for $i=1\cdots,N$. Therefore, for $i=1,\cdots,N$,  as $g_i\sim \CN\left(\mu_i,\sigma_i^2\right)$, 
\begin{equation}\label{SBLUE: E Y}
    \begin{split}
        \EXP[\widehat y_i]&=\EXP[\widehat y_i|y_i=1]\PROB[y_i=1]+\EXP[\widehat y_i|y_i=0]\PROB[y_i=0]\\
        &=p^i_{11}\PROB[g_i\geq c]+p^i_{01}\PROB[g_i<c]\\
    &=p^i_{11}\Phi\left(-\frac{c-\mu_i}{\sigma_i}\right)+p^i_{01}\Phi\left(\frac{c-\mu_i}{\sigma_i}\right).
    \end{split}
\end{equation}
Similarly, for $i,j=1,\cdots,N$, due to the conditional independence of $\big\{\BZ^{\TP}_n\big\}_{n=1:N^{\TP}}$, $\big\{\BZ^{\TI}_n\big\}_{n=1:N^{\TI}}$ given $\left(g(\Bx^{\TP}_1),\ldots, g(\Bx^{\TP}_{N^{\TP}}), g(\Bx^{\TI}_1), \ldots, g(\Bx^{\TI}_{N^{\TI}})\right)$ stated in \ref{Temporal:LatentProcess}, we have
\begin{equation}\label{SBLUE: COV Y}
    \begin{split}
        \EXP[\widehat y_i \widehat y_j]&=\EXP\left[\EXP[\widehat y_i \widehat y_j|g_i,g_j]\right]\\
    &=\EXP\left[\EXP[\widehat y_i |g_i,g_j]\EXP[\widehat y_j |g_i,g_j]\right],
    \end{split}
\end{equation}
where $(g_i,g_j)^\TRANSP\sim \CN\left (\left(\begin{smallmatrix}
\mu_i \\ \mu_j
\end{smallmatrix}\right),
\left(\begin{smallmatrix}
\sigma_i^2&&\CC(\Bx_i,\Bx_j)\\
\CC(\Bx_i,\Bx_j)&&\sigma_j^2
\end{smallmatrix}\right)\right)$. Moreover, by Remark~\ref{SBLUE:noisy channel}, we have, for $i=1,\cdots, N$, $\EXP[\widehat y_i|g_i] = p^{i}_{11}\INDI_{\{g_i\geq c\}}+p^{i}_{01}\INDI_{\{g_i<c\}}$. Therefore, we see that $\EXP[\widehat y_i|g_i]$ is $\sigma(g_i)$-measurable,  where $\sigma(g_i)$ denotes the $\sigma$-algebra generated by $g_i$. Then \eqref{SBLUE: COV Y} becomes
\begin{equation}
    \begin{split}
        \EXP[\widehat y_i \widehat y_j]&=\EXP\left[\EXP[\widehat y_i |g_i]\EXP[\widehat y_j |g_j]\right]\\
        &=\EXP\left[\left(p^{i}_{11}\INDI_{\{g_i\geq c\}}+p^{i}_{01}\INDI_{\{g_i<c\}} \right)\left(p^{j}_{11}\INDI_{\{g_j\geq c\}}+p^{j}_{01}\INDI_{\{g_j<c\}} \right)\right]\\
     &=p^i_{01}p^j_{01}\EXP[\INDI_{\{g_i<c,g_j<c\}}]+p^i_{01}p^j_{11}\EXP[\INDI_{\{g_i<c,g_j\geq c\}}]\\
    &\phantom{=}+p^i_{11}p^j_{01}\EXP[\INDI_{\{g_i\geq c,g_j<c\}}]+p^i_{11}p^j_{11}\EXP[\INDI_{\{g_i\geq c,g_j\geq c\}}]\\
    &=p^i_{01}p^j_{01}\PROB(g_i<c,g_j<c)+p^i_{01}p^j_{11}\PROB(g_i<c,g_j\geq c)\\
    &\phantom{=}+p^i_{11}p^j_{01}\PROB(g_i\geq c,g_j<c)+p^i_{11}p^j_{11}\PROB(g_i\geq c,g_j\geq c).
    \end{split}
\end{equation}
Then, the covariance $\COV[\widehat y_i,\widehat y_j]$ is given by
\begin{equation}
\begin{split}
    \COV[\widehat y_i,\widehat y_j]&=\EXP[\widehat y_i\widehat y_j]-\EXP[\widehat y_i]\EXP[\widehat y_j]\\
    &=p^i_{01}p^j_{01}\PROB(g_i<c,g_j<c)+p^i_{01}p^j_{11}\PROB(g_i<c,g_j\geq c)\\
    &\phantom{=}+p^i_{11}p^j_{01}\PROB(g_i\geq c,g_j<c)+p^i_{11}p^j_{11}\PROB(g_i\geq c,g_j\geq c)\\
    &\phantom{=}-\left [p^i_{11}\Phi\left(-\frac{c-\mu_i}{\sigma_i}\right)+p^i_{01}\Phi\left(\frac{c-\mu_i}{\sigma_i}\right)\right] \left [p^j_{11}\Phi\left(-\frac{c-\mu_j}{\sigma_j}\right)+p^j_{01}\Phi\left(\frac{c-\mu_j}{\sigma_j}\right)\right].
\end{split}
\end{equation}
Finally, for $i=1,\cdots,N$, by the conditional independence of $g_*$ and $\widehat y_i$ given $g_i$ (see Figure~\ref{fig:plate}), we obtain that
\begingroup
\allowdisplaybreaks
\begin{align*}
     \EXP[\widehat y_ig_*]&=\EXP[\EXP[g_*\widehat y_i|g_i]]\\
    &=\EXP\left[\EXP\left[g_*|\widehat y_i=1,g_i\right]\PROB(\widehat y_i=1|g_i)\right]\\
    &=\EXP\left[\EXP[g_*|g_i]\PROB(\widehat y_i=1|g_i)\right]\\
   &=\EXP\Bigg[\EXP[g_*|g_i]\Big(\PROB(\widehat y_i=1|y_i=1, g_i)\PROB(y_i=1|g_i)+\PROB(\widehat y_i=1|y_i=0, g_i)\PROB(y_i=0|g_i)\Big)\Bigg]\\
   &=\EXP\Bigg[\EXP[g_*|g_i]\Big(p^i_{11}\INDI_{\{g_i\geq c\}}+p^i_{01}\INDI_{\{g_i<c\}}\Big)\Bigg]. \numberthis \label{SBLUE: COV*1}
\end{align*}
\endgroup
By Lemma \ref{cond_pred}, we have $\EXP[g_*|g_i]=\mu_*+\CC(\Bx_*,\Bx_i)(g_i-\mu_i)/\sigma_i^2$, and then (\ref{SBLUE: COV*1}) becomes
\begin{equation}\label{SBLUE: COV*i}
\begin{split}
    \EXP[\widehat y_ig_*]&=\mu_*\left(p^i_{11}\EXP[\INDI_{\{g_i\geq c\}}]+p^i_{01}\EXP[\INDI_{\{g_i<c\}}]\right)\\
    &\phantom{=}+\CC(\Bx_*,\Bx_i)\left(p^i_{11}\EXP\left[\frac{(g_i-\mu_i)}{\sigma_i^2}\INDI_{\{g_i\geq c\}}\right]+p^i_{01}\EXP\left[\frac{(g_i-\mu_i)}{\sigma_i^2}\INDI_{\{g_i<c\}}\right]\right)\\
    &=\mu_*\left( p^i_{11}\Phi\left(-\frac{c-\mu_i}{\sigma_i}\right)+p^i_{01}\Phi\left(\frac{c-\mu_i}{\sigma_i}\right)\right)\\
    &\phantom{=}+\CC(\Bx_*,\Bx_i)\left(p^i_{11}\int_{\{g_i\geq c\}}\frac{g_i-\mu_i}{\sqrt{2\pi}\sigma_i^3}\exp\left(-\frac{(g_i-\mu_i)^2}{2\sigma_i^2}\right)\DIFFX g_i\right .\\
    &\phantom{=}\left.+p^i_{01}\int_{\{g_i<c\}}\frac{g_i-\mu_i}{\sqrt{2\pi}\sigma_i^3}\exp\left(-\frac{(g_i-\mu_i)^2}{2\sigma_i^2}\right)\DIFFX g_i\right).
\end{split}
\end{equation}
Let $u = \frac{g_i-\mu_i}{\sigma_i}$, we obtain that
\begin{equation}\label{SBLUE: Cov g integral}
\int_{\{g_i\geq c\}}\frac{g_i-\mu_i}{\sqrt{2\pi}\sigma_i^3}\exp\left(-\frac{(g_i-\mu_i)^2}{2\sigma_i^2}\right)\DIFFX g_i= \int_{\frac{c-\mu_i}{\sigma_i}}^\infty \frac{u}{\sqrt{2\pi}\sigma_i}\exp(-u^2/2)\DIFFX u,
\end{equation}
and 
\begin{equation}\label{SBLUE: Cov g integral 2}
\int_{\{g_i< c\}}\frac{g_i-\mu_i}{\sqrt{2\pi}\sigma_i^3}\exp\left(-\frac{(g_i-\mu_i)^2}{2\sigma_i^2}\right)\DIFFX g_i= \int_{-\infty}^{\frac{c-\mu_i}{\sigma_i}} \frac{u}{\sqrt{2\pi}\sigma_i}\exp(-u^2/2)\DIFFX u.
\end{equation}
Subsequently, since $\R \ni u\mapsto u\exp(-u^2/2) \in \R$ is an odd function, we have 
\begin{equation}\label{SBLUE: Cov g integral 3}
 \int_{\frac{c-\mu_i}{\sigma_i}}^\infty \frac{u}{\sqrt{2\pi}\sigma_i}\exp(-u^2/2)\DIFFX u + \int_{-\infty}^{\frac{c-\mu_i}{\sigma_i}} \frac{u}{\sqrt{2\pi}\sigma_i}\exp(-u^2/2)\DIFFX u =0.
\end{equation}
If $c\geq \mu_i$, then using the substitution $u\mapsto v:=u^2/2$ gives
\begin{equation}
\begin{split}
\int_{\frac{c-\mu_i}{\sigma_i}}^\infty \frac{u}{\sqrt{2\pi}\sigma_i}\exp(-u^2/2)\DIFFX u
= \int_{\frac{(c-\mu_i)^2}{2\sigma_i^2}}^\infty \frac{1}{\sqrt{2\pi}\sigma_i} \exp(-v)\DIFFX v
=\frac{1}{\sqrt{2\pi}\sigma_i}\exp\left(-\frac{(c-\mu_i)^2}{2\sigma_i^2}\right).
\end{split}
\end{equation}
If $c<\mu_i$, then using \eqref{SBLUE: Cov g integral 3} and the substitution $u\mapsto v:=(-u)^2/2$ yields that
\begin{equation}
\begin{split}\label{SBLUE: Cov g integral 4}
\int_{\frac{c-\mu_i}{\sigma_i}}^\infty \frac{u}{\sqrt{2\pi}\sigma_i}\exp(-u^2/2)\DIFFX u 
& = -\int^{\frac{c-\mu_i}{\sigma_i}}_{-\infty} \frac{u}{\sqrt{2\pi}\sigma_i}\exp(-u^2/2)\DIFFX u \\
&=  \int_{\frac{(c-\mu_i)^2}{2\sigma_i^2}}^\infty \frac{1}{\sqrt{2\pi}\sigma_i} \exp(-v)\DIFFX v\\
& =\frac{1}{\sqrt{2\pi}\sigma_i}\exp\left(-\frac{(c-\mu_i)^2}{2\sigma_i^2}\right).
\end{split}
\end{equation}
Therefore, $\mu_* :=\EXP[g_*]$, \eqref{SBLUE: E Y}, and \eqref{SBLUE: COV*i}--\eqref{SBLUE: Cov g integral 4} imply that
\begin{equation}
    \begin{split}
            \COV[g_*,\widehat y_i]&=\EXP[\widehat y_ig_*]-\EXP[\widehat y_i]\EXP[g_*]\\
            &=\mu_*\left( p^i_{11}\Phi\left(-\frac{c-\mu_i}{\sigma_i}\right)+p^i_{01}\Phi\left(\frac{c-\mu_i}{\sigma_i}\right)\right)\\
            &\phantom{=}+\frac{1}{\sqrt{2\pi}\sigma_i}\CC(\Bx_*,\Bx_i)\left(p^i_{11}\exp\left(-\frac{(c-\mu_i)^2}{2\sigma_i^2}\right)-p^i_{01}\exp\left(-\frac{(c-\mu_i)^2}{2\sigma_i^2}\right)\right)\\
            &\phantom{=}-\mu_*\left( p^i_{11}\Phi\left(-\frac{c-\mu_i}{\sigma_i}\right)+p^i_{01}\Phi\left(\frac{c-\mu_i}{\sigma_i}\right)\right)\\
    &=\frac{1}{\sqrt{2\pi}\sigma_i}(p^i_{11}-p^i_{01})\CC(\Bx_*,\Bx_i)\exp\left(-\frac{(c-\mu_i)^2}{2\sigma_i^2}\right).
    \end{split}
\end{equation}
\end{proof}

\section{Proof of Corollary \ref{SBLUE: bayes risk}}
\label{proof: bayes risk}
\begin{proof}
By substituting (\ref{SBLUE:Noisy_estimator}) into (\ref{SBLUE:BayesRisk}), the Bayes risk $\CR[\widehat h_\text{S-BLUE}(\widehat \BY_{1:N})]$ associated with \linebreak$\widehat h_\text{S-BLUE}(\widehat \BY_{1:N})$ is given by
\begin{equation}\label{SBLUE: bayes risk 1}
\begin{split}
&\phantom{=}\EXP[(\widehat h_{\text{S-BLUE}}(\widehat \BY_{1:N})-g_*)^2]\\
&=\EXP\left[\left(\mu_*+\COV[g_*, \widehat \BY_{1:N}]\COV[\widehat \BY_{1:N}]^{-1}(\widehat \BY_{1:N}-\EXP[\widehat \BY_{1:N}])-g_* \right)^2\right]\\
&=\EXP\left[(g_*-\mu_*)^2-2(g_*-\mu_*)\COV[g_*, \widehat \BY_{1:N}]\COV[\widehat \BY_{1:N}]^{-1}(\widehat \BY_{1:N}-\EXP[\widehat \BY_{1:N}])\right.\\
&\left.\phantom{=}+ \COV[g_*, \widehat \BY_{1:N}]\COV[\widehat \BY_{1:N}]^{-1}(\widehat \BY_{1:N}-\EXP[\widehat \BY_{1:N}])(\widehat \BY_{1:N}-\EXP[\widehat \BY_{1:N}])^\TRANSP\COV[\widehat \BY_{1:N}]^{-1} \COV[g_*, \widehat \BY_{1:N}]^\TRANSP\right].
\end{split}
\end{equation}
Since $\COV[g_*, \widehat \BY_{1:N}], \COV[\widehat \BY_{1:N}]$ are constant matrices once $\Bx_*$ and $\BX_{1:N}$ are known (see \eqref{SBLUE:cov} in Theorem~\ref{SBLUE:Noisy_diff_U}), we have
\begin{equation}
\begin{split}
&\phantom{=}\EXP[(\widehat h_{\text{S-BLUE}}(\widehat \BY_{1:N})-g_*)^2]\\
&=\COV[g_*] - 2\COV[g_*, \widehat \BY_{1:N}]\COV[\widehat \BY_{1:N}]^{-1} \EXP\left[(\widehat \BY_{1:N}-\EXP[\widehat \BY_{1:N}])(g_*-\mu_*)\right]\\
&\phantom{=}+ \COV[g_*, \widehat \BY_{1:N}]\COV[\widehat \BY_{1:N}]^{-1}\EXP\left[(\widehat \BY_{1:N}-\EXP[\widehat \BY_{1:N}])(\widehat \BY_{1:N}-\EXP[\widehat \BY_{1:N}])^\TRANSP\right]\COV[\widehat \BY_{1:N}]^{-1} \COV[g_*, \widehat \BY_{1:N}]^\TRANSP\\
&=\COV[g_*] - 2\COV[g_*, \widehat \BY_{1:N}]\COV[\widehat \BY_{1:N}]^{-1} \COV[g_*, \widehat \BY_{1:N}]^\TRANSP \\
&\phantom{=}+ \COV[g_*, \widehat \BY_{1:N}]\COV[\widehat \BY_{1:N}]^{-1}\COV[\widehat \BY_{1:N}]\COV[\widehat \BY_{1:N}]^{-1} \COV[g_*, \widehat \BY_{1:N}]^\TRANSP\\
&=\COV[g_*] - \COV[g_*, \widehat \BY_{1:N}]\COV[\widehat \BY_{1:N}]^{-1} \COV[g_*, \widehat \BY_{1:N}]^\TRANSP.
\end{split}
\end{equation}
\end{proof}

\section{Details of the Computational Cost Analyses}
\label{proof: computational complexity}
In the following, let us analyze the computational cost incurred at each P-sensor, each I-sensor, and the FC in Algorithm~\ref{Overall:algo} in detail.
Recall that $\CT_{\text{opt}}$ denotes the computational cost of the optimization in the Laplace approximation in WGPLRT (Line~\ref{Temporal:WGPLRT-offline} of Algorithm~\ref{Temporal:WGPLRT}), $\CT_{\text{samp}}$ denotes the computational cost of generating a sample of integral observations, $\CT_{\text{summ}}$ denotes the computational cost of the summary statistics of each sample of integral observations in NLRT, $\CT_{\text{dist}}$ denotes the computational cost of each pairwise distance $d(\cdot,\,\cdot)$ in NLRT, $N$ denotes the total number of sensors, $M$ denotes the number of point observations at each P-sensor, and $J$ denotes the number of generated samples in NLRT.

At each P-sensor (i.e., in Algorithm~\ref{Temporal:WGPLRT}), the \textbf{offline phase} consists of first solving two optimization problems $\widehat{\Bv}_0=\argmax_{\Bv}Q_0(\Bv)$, $\widehat{\Bv}_1=\argmax_{\Bv}Q_1(\Bv)$, which incurs computational cost $2\CT_{\text{opt}}$, and then computing the values of $(A_1^{-1}+\sigma_{\TP}^{2}I)^{-1}$, $(A_0^{-1}+\sigma_{\TP}^{2}I)^{-1}$, and $\frac{1}{2}\big(\log\det (A_0+\sigma^{-2}_{\TP}I)+\log\det K_0-2Q(\widehat \Bv_0)-\log\det (A_1+\sigma^{-2}_{\TP}I)-\log\det K_1+2Q(\widehat \Bv_1)\big)$.
Note that $(A_1^{-1}+\sigma_{\TP}^{2}I)^{-1}$, $(A_0^{-1}+\sigma_{\TP}^{2}I)^{-1}$, $\log\det (A_0+\sigma^{-2}_{\TP}I)$, and $\log\det (A_0+\sigma^{-2}_{\TP}I)$ can all be computed  easily after diagonalizing $A_0$ and $A_1$, which costs $O(M^3)$. All subsequent computations cost $O(M^2)$. 
Therefore, the total computational cost during the offline phase at each P-sensor is $O(\CT_{\text{opt}}+M^3)$.
In the \textbf{online phase}, for each time-series of point observations $\BZ^{\TP}_n$, the computation of the test statistic $-\log \widehat \Lambda (\BZ^{\TP}_n)$ from (\ref{Temporal:WGPtest}) involves evaluating two vector-matrix-vector products, and thus the total computational cost during the online phase at each P-sensor is $O(M^2)$.

At each I-sensor (i.e., in Algorithm~\ref{Temporal:NLRT}), the \textbf{offline phase} consists of first generating $2J$ samples $\big\{\widehat \BZ_j^{\TI,i}\big\}_{j=1:J,\,i=0,1}$ of integral observations, which costs $2J\CT_{\text{samp}}$, and then computing their summary statistics $\big\{S(\widehat \BZ_j^{\TI,i})\big\}_{j=1:J,\,i=0,1}$, which costs $2J\CT_{\text{summ}}$.
Hence, the total computational cost during the offline phase at each I-sensor is $O(J(\CT_{\text{samp}}+\CT_{\text{summ}}))$.
In the \textbf{online phase}, for each time-series of integral observations $\BZ_n^{\TI}$, the computation of the test statistic $\widehat\Lambda(\BZ_n^{\TI})$ requires the evaluation of the summary statistics $S(\BZ_n^{\TI})$, which costs $\CT_{\text{summ}}$, as well as the evaluation of the pairwise distances between $S(\BZ_n^{\TI})$ and $\big\{S(\widehat \BZ_j^{\TI,i})\big\}_{j=1:J,\,i=0,1}$, which costs $2J\CT_{\text{dist}}$.
As a result, for each time-series of integral observations, the total computational cost during the online phase at each I-sensor is $O(\CT_{\text{summ}}+J\CT_{\text{dist}})$.

At the FC (i.e., in Algorithm~\ref{SBLUE:algo}), the \textbf{offline phase} computes the values of $\mu_*$, $\EXP[\widehat \BY_{1:N}]$, $\COV[g_*, \widehat \BY_{1:N}]\COV[\widehat \BY_{1:N}]^{-1}$, and $\CR[\widehat h_\text{S-BLUE}(\widehat \BY_{1:N})]$.
It follows from (\ref{SBLUE:cov}) and Remark~\ref{SBLUE:individual-terms} that the evaluation of each entry in $\EXP[\widehat \BY_{1:N}]$, $\COV[g_*, \widehat \BY_{1:N}]$, and $\COV[\widehat \BY_{1:N}]$ costs $O(1)$.
Hence, the computational cost in offline phase at the FC is dominated by the inversion of the matrix $\COV[\widehat \BY_{1:N}]$, which has computational complexity $O(N^3)$. As a result, the total computational cost during the offline phase at the FC is $O(N^3)$.
The \textbf{online phase} at the FC for each set of binary decisions $\widehat{\BY}_{1:N}$ simply computes $\mu_*+\COV[g_*, \widehat \BY_{1:N}]\COV[\widehat \BY_{1:N}]^{-1}(\widehat \BY_{1:N}-\EXP[\widehat \BY_{1:N}])$. This consists of the computation of a vector-vector subtraction followed by a vector inner-product and the addition of a constant. Therefore, the online phase at the FC costs $O(N)$ for each set of binary decisions from the P-sensors and I-sensors.

Furthermore, the computational cost of the $k$-nearest neighbor (KNN) algorithm is analyzed as follows.
The \textbf{offline phase} of the KNN algorithm consists of computing the pairwise distances between $\Bx_*$ and $\{\Bx_{n}\}_{n=1:N}$, which costs $O(N)$, and finding the $k$ nearest sensors, which costs\footnote{The computational complexity $O(N+k\log N)$ can be achieved by, for example, a heap-based sorting algorithm; see\textit{,} e.g., \cite{martinez2004partial}. Note that algorithms with lower computational complexity also exist; see\textit{,} e.g., the discussion in \cite{martinez2004partial}.} $O(N+k\log N)$, and hence the total computational cost is $O(N+k\log N)$.
The \textbf{online phase} of the KNN algorithm simply performs a plurality vote among the binary decisions of the $k$ nearest sensors, which incurs computational cost $O(k)$.

\bibliographystyle{elsarticle-num-names}
\bibliography{ref}

\end{document}